\newcommand{\bvec}[1]{\mathbf{#1}}
\newcommand{\vr}{\bvec{r}}
\newcommand{\vx}{\bvec{x}}
\newcommand{\vy}{\bvec{y}}
\newcommand{\vG}{\bvec{G}}
\renewcommand{\Re}{\operatorname{Re}}
\renewcommand{\Im}{\operatorname{Im}}
\newcommand{\Tr}{\operatorname{Tr}}
\newcommand{\I}{\mathrm{i}}
\newcommand{\mc}[1]{\mathcal{#1}}
\newcommand{\wt}[1]{\widetilde{#1}}
\newcommand{\abs}[1]{\left\lvert#1\right\rvert}
\newcommand{\norm}[1]{\left\lVert#1\right\rVert}
\newcommand{\ud}{\,\mathrm{d}}
\newcommand{\Or}{\mathcal{O}}
\newcommand{\NN}{\mathbb{N}}
\newcommand{\RR}{\mathbb{R}}
\newcommand{\CC}{\mathbb{C}}
\newcommand{\ZZ}{\mathbb{Z}}
\newcommand{\GG}{\mathbb{G}}
\newtheorem{thm}{\protect\theoremname}
\theoremstyle{plain}
\newtheorem{lem}[thm]{\protect\lemmaname}
\theoremstyle{plain}
\newtheorem{rem}[thm]{\protect\remarkname}
\theoremstyle{plain}
\newtheorem*{lem*}{\protect\lemmaname}
\theoremstyle{plain}
\newtheorem{prop}[thm]{\protect\propositionname}
\theoremstyle{plain}
\newtheorem{cor}[thm]{\protect\corollaryname}
\newtheorem{defn}[thm]{Definition}
\providecommand{\corollaryname}{Corollary}
\providecommand{\lemmaname}{Lemma}
\providecommand{\propositionname}{Proposition}
\providecommand{\remarkname}{Remark}
\providecommand{\theoremname}{Theorem}
\begin{document}

\title{Fast inversion, preconditioned quantum linear system solvers, {fast Green's function computation}, and fast evaluation of matrix functions}
\author{Yu Tong\thanks{Department of Mathematics, University of California, Berkeley,
CA 94720. }
\and
Dong An\thanks{Department of Mathematics, University of California, Berkeley,
CA 94720. }
\and
Nathan Wiebe\thanks{Department of Computer Science, University of Toronto, Toronto On, Department of Physics M5S 1A1, University of Washington, Seattle WA 98195 and High Performance Computing Division, Pacific Northwest National Laboratory, Richland WA 99354}
\and
Lin Lin\thanks{Department of Mathematics and Challenge Institute for Quantum Computation, University of California, Berkeley, and Computational Research Division, Lawrence Berkeley National Laboratory, Berkeley, CA 94720. }
}
\date{\today}
\maketitle
\begin{abstract}
Preconditioning is the most widely used and effective way for treating ill-conditioned linear systems in the context of classical iterative linear system solvers. We introduce a quantum primitive called fast inversion, which can be used as a preconditioner for solving quantum linear systems. The key idea of fast  inversion is to directly block-encode a matrix inverse through a quantum circuit implementing the inversion of eigenvalues via classical arithmetics. We demonstrate the application of preconditioned linear system solvers for computing single-particle Green's functions of quantum many-body systems, which are widely used in quantum physics, chemistry, and materials science. We analyze the complexities in three scenarios: the Hubbard model, the quantum many-body Hamiltonian in the planewave-dual basis, and the Schwinger model. {We also provide a method for performing Green's function calculation in second quantization within a fixed particle manifold and note that this approach may be valuable for simulation more broadly.} Besides solving linear systems, fast inversion also allows us to develop fast algorithms for computing matrix functions, such as the efficient preparation of Gibbs states. We introduce two efficient approaches for such a task, based on the contour integral formulation and the inverse transform respectively.  
\end{abstract}

\section{Introduction}

Linear systems  appear ubiquitously in scientific and engineering computations. Accelerated solution of linear systems on quantum computers, or the quantum linear system problem (QLSP) has received a significant amount of attention in the past decade \cite{HarrowHassidimLloyd2009,ChildsKothariSomma2017,GilyenSuLowEtAl2019,SubasiSommaOrsucci2019,AnLin2019,ChakrabortyGilyenJeffery2018,WossnigZhaoPrakash2018,CaoPapageorgiouPetrasEtAl2013,XuSunEndoEtAl2019,Bravo-PrietoLaRoseCerezoEtAl2019,Ambainis2012,LinTong2020}. Solving QLSP means finding a solution vector $\ket{x}$ stored as a quantum state  (up to a normalization constant), so that $\ket{x}=A^{-1}\ket{b}/\norm{A^{-1}\ket{b}}$. The main advantage provided by a quantum computer is that the number of the qubits needed to store the matrix and the solution vector only scales logarithmically with respect to the matrix dimension, thus overcoming the curse of dimensionality on a classical computer. On the other hand, the cost of a quantum algorithm for solving a generic QLSP scales at least as $\Omega(\kappa(A))$ \cite{HarrowHassidimLloyd2009}, where $\kappa(A):=\norm{A}\norm{A^{-1}}$ is the  condition number of $A$.\footnote{Throughout the paper $\norm{A}\equiv \norm{A}_2$ is the 2-norm (or the operator norm) of an operator $A$, and $\norm{u}\equiv \norm{u}_2$ is the 2-norm of a vector $u$.} This can be expensive if the linear system is ill-conditioned. It is therefore of great interest if we can exploit certain special structures of QLSPs to reduce the cost.

For a classical iterative algorithm such as the conjugate gradient (CG) method, the most effective way to accelerate the solution of ill-conditioned linear systems is to find a preconditioner $M$ so that 1) $\kappa(MA)\ll \kappa(A)$, 2) the matrix-vector multiplication $M\ket{\psi}$ is easily accessible, and in particular its cost is independent of $\kappa(M)$ \cite{Saad2003}. On a classical computer, the condition 2) can be satisfied for instance, if $M$ is a diagonal matrix or can be easily diagonalized, or if $M$ is obtained by a sparse direct method such as the incomplete Cholesky factorization \cite{meijerink1977iterative,manteuffel1980incomplete,lin1999incomplete}. Then the cost for solving the transformed equation $MA\ket{x}=M\ket{b}$ is determined by $\kappa(MA)$ instead of $\kappa(A)$. The same strategy can be used to reduce the complexity of a quantum linear solver \cite{clader2013preconditioned}. 

In this paper, we focus on a QLSP of the form 
\begin{equation}
\ket{x}=(A+B)^{-1} \ket{b}/\norm{(A+B)^{-1} \ket{b}},
\label{eqn:AplusB}
\end{equation}
where $A,B\in\CC^{N\times N}$, and $N=2^n$.  Throughout the paper, unless stated otherwise, we assume $\norm{A}$ can be very large, while $\norm{B}$, $\|A^{-1}\|$, $\|(A+B)^{-1}\|=\Or(1)$. Therefore the condition numbers $\kappa(A),\kappa(A+B) = \Theta(\norm{A})$. Such a scenario occurs frequently in scientific computing e.g. if $A$ is obtained by discretizing an unbounded operator, such as the Laplace operator in a confined domain (see \cref{sec:fastinv_diagaonlize}), or if $A$ represents a term in a quantum many-body Hamiltonian that is significantly larger than other terms (see \cref{sec:example_manybody}). We would like to obtain a quantum linear system solver, of which the cost is independent of $\norm{A}$. In particular, we will illustrate  in detail the computation of single-particle Green's functions of a quantum many-body system. {It is worth pointing out that although much efforts have been made to efficiently solve the QLSP, suitable applications of QLSP solvers remain scarce, as it is often difficult to efficiently load classical data into the quantum circuit, and to output useful information using a limited number of measurements. We demonstrate that the problem of computing Green's functions do not not suffer from such problems, and can be a suitable end-to-end application of QLSP solvers.}

A closely related, and more general problem is to evaluate  matrix functions of the form $f(A+B)$, where $f(\cdot)$ is a smooth function defined on the spectrum of $H=A+B$ \cite{Higham2008}. Here for simplicity we assume $A,B$ are Hermitian matrices, so the spectrum of $H$ is on the real line. Under similar assumptions above, we would like to obtain quantum algorithms of which the cost is independent of $\norm{A}$. This obviously depends on the choice of the function $f$. We will focus, for concreteness, on the function $f(x)=e^{- x}$.  This choice is motivated in part by the importance of preparing Gibbs states in quantum simulation~\cite{poulin2009preparing}, machine learning~\cite{kieferova2017tomography} and quantum algorithms for semi-definite programming~\cite{brandao2017quantum}.

\subsection{Overview of the results and related works}

\vspace{1em}
\noindent\textit{Quantum linear system solver:}

{Starting from the ground-breaking work of \cite{HarrowHassidimLloyd2009}, in the past decade, several quantum algorithms have been developed to improve the performance of generic QLSP solvers~\cite{Ambainis2012,ChildsKothariSomma2017,SubasiSommaOrsucci2019,GilyenSuLowEtAl2019,AnLin2019,LinTong2020}. We review these methods in Appendix~\ref{sec:QLS_solvers}. }

Recently quantum-inspired classical algorithms based on certain $\ell^2$-norm sampling assumptions \cite{Tang2018quantum, Tang2019quantum} have been developed that are only up to polynomially slower than quantum linear system solvers. However, it is unclear whether the classical $\ell^2$-norm sampling can be achieved efficiently without access to a quantum computer in the setting of this work. The quantum-inspired classical algorithms also suffer from many practical issues making their application limited to highly specialized problems \cite{ArrazolaEtAl2019}. Most importantly, the assumption of low-rankness is crucial in these algorithms. The methods presented in this work assumes a block-encoding model, which could be used to efficiently represent low-rank as well as full-rank matrices on a quantum computer.

\vspace{1em}
\noindent\textit{Preconditioned quantum linear system solver:}

Solving the quantum linear system problem using preconditioning to deal with large condition number has been discussed in works such as \cite{clader2013preconditioned,ShaoXiang2018}. The idea of \cite{clader2013preconditioned} is to use the sparse approximate inverse (SPAI) preconditioner \cite{BenziMeyerTuma1996,BenziTuma1998}, which uses a $d$-sparse matrix as a preconditioner. The SPAI, denoted by $M$, can be constructed by solving a least-squares procedure for each row of the matrix $A$. However, for many problems, an efficient preconditioner in the form of SPAI may not exist, and the work of \cite{clader2013preconditioned} did not provide an efficient quantum implementation to construct SPAI nor its performance analysis. In \cite{ShaoXiang2018}, the preconditioner $M$ is taken to be a circulant matrix that can be efficient diagonalized using a quantum Fourier transform (QFT). However, the complexity of the algorithm in \cite{ShaoXiang2018} can depend on $\kappa(M)$, which should not be expected to be smaller than $\kappa(A)$. Furthermore, neither work provides an upper bound for $\kappa(MA)$, which is a key quantity determining e.g. the circuit depth. 

In this paper, we propose a different mechanism for constructing efficient preconditioners, called \textit{fast inversion}. The inspiration of fast inversion is that any invertible, $1$-sparse matrix $A$ can be efficiently implemented on a quantum computer via classical arithmetics. In particular, the cost of {constructing a block-encoding of $A^{-1}$} is independent of $\kappa(A)$. Note that fast inversion does not violate the complexity lower bound for solving QLSP, which is a statement of the efficiency of QLSP solvers \cite{HarrowHassidimLloyd2009} applied to general matrices. This is in parallel to the fast forwarding process of $1$-sparse matrices in Hamiltonian simulation \cite{childs2003exponential,ahokas2004improved}, which does not violate the theorem of ``no-fast-forwarding'' \cite{berry2007efficient}. Furthermore, if $A$ can be unitarily diagonalized, so that both the diagonalization procedure and the encoding of eigenvalues can be efficiently implemented on a quantum computer, then $A$ can be fast inverted.  
Fast inversion can be viewed as a quantum primitive for a wide range of tasks. For example, we describe an efficient implementation of inverting certain normal matrices, such as circulant matrices. 

{We introduce a parameter $\xi=\|A^{-1}\ket{b}\|$, and without loss of generality rescale $A$ so that $\|A^{-1}\|=1$. As will be analyzed in Section~\ref{sec:fastinv}, if we consider $\xi$ and $\kappa(A)$ as two independent parameters, the immediate benefit of the fast inversion is that unlike any other methods in the literature, the cost of solving the QLSP depends only on $\xi$. The value of $\xi$ is bounded from below by $1/\kappa(A)$. Therefore in the worst case when $\xi=1/\kappa(A)$, the cost for solving the QLSP still depends linearly on $\kappa(A)$.  However, we will demonstrate in \cref{rem:complexity_lower_bound_xi} that such dependence through $\xi$ already reaches the complexity lower bound, and cannot be improved by any QLSP solver. Furthermore, such a bound for $\xi$ is usually not tight for many examples of practical interest. This is demonstrated via a concrete example of using fast inversion to solve a translational invariant elliptic partial differential equation (PDE) in \cref{sec:fastinv_diagaonlize}. 
}

Now for the linear system \eqref{eqn:AplusB}, assuming $A$ can be fast inverted so that $M=A^{-1}$, the cost for solving the the preconditioned linear system of \cref{eqn:AplusB} depends on $\kappa(M(A+B))=\kappa(I+A^{-1}B)$, which can be bounded in terms of $\norm{B},\norm{A^{-1}},\norm{(A+B)^{-1}}$ (or more accurately, their block-encoding factors, see \cref{lem:cond_precond_linear_sys}). This is in contrast to other QLSP preconditioning techniques where no such bound is known.  We introduce a new parameter $\xi=\|(A+B)^{-1}\ket{b}\|$ where $\ket{b}$ is the normalized quantum state corresponding to the right-hand side. 
Using QSVT, we obtain a gate-based implementation of a preconditioned linear system solver, and its cost is independent of $\norm{A}$, but only depends on $\xi$, and several block-encoding subnormalization factors which are upper bounds of the norms $\norm{B}$, $\norm{A^{-1}}$, and $\norm{(A+B)^{-1}}$ (\cref{thm:precond_linear_sys} and \cref{cor:precond_linear_sys_solver}). {In the worst case  the query complexity of the preconditioned linear system solver can depend superlinearly on $\norm{B}$ (or the corresponding block-encoding factor $\alpha_B$). However in some cases the worst case estimate can be significantly improved and the scaling with respect to $\norm{B}$ can be linear. We discuss such implications in  \cref{rem:upperbound_kappaW} and \cref{rem:scale_alphaB}.}
Throughout the paper we adopt the worst case estimate of $\norm{(I+A^{-1}B)^{-1}}$, which is responsible for the superlinear scaling with respect to $\alpha_B$ in \cref{tab:compare_algs_hubbard} and \cref{tab:compare_algs_thermal} below.

\vspace{1em}
\noindent\textit{Computing single-particle Green's functions of quantum many-body systems:}

As an application of the preconditioned linear system solver, we consider the problem of computing the one-particle Green's function of a quantum many-body system, which is a standalone linear system problem in high dimensions. Most of the literature on quantum simulation thus far focus on estimating the ground state energy and preparing the ground state. However, once the ground state is found, one may further evaluate the single-particle Green's function, which carries important spectroscopic information and is widely used in quantum physics, chemistry and materials science \cite{NegeleOrland1988,MartinReiningCeperley2016}. Calculation of Green's functions are computationally challenging. Previous works \cite{BauerWeckerMillisEtAl2015,endo2020calculation} focused on evaluating Green's functions in the time domain via Hamiltonian simulation. Ref.~\cite{cai2020quantum} computes the response function, which is closely related to the Green's function, in the frequency domain.  Here we will provide a preconditioned linear system method for direct computation of the Green's function in the frequency domain. 

The setup of the problem is as follows. Suppose we are given a Hamiltonian $\hat{H}=\hat{A}+\hat{B}$, so that $z \hat{I}+\hat{A}$ can be fast inverted for some properly chosen $z\in \CC$. We assume we have an $(\alpha_H,m_H,0)$-block-encoding of $\hat{H}$, as well as the oracles introduced in Section~\ref{sec:preconditioned_greens_function_solver}. We also assume there is an oracle available to construct the ground state $\ket{\Psi_0}$ of the Hamiltonian to precision $\varsigma$ in terms of trace-norm distance with probability at least $p$, and we denote this oracle by $U_\Psi$. The goal is to compute the Green's function $G^{(+)}_{ij}(z)=\braket{\Psi_0|\hat{a}_i(z-\hat{H}+E_0)^{-1}\hat{a}_j^\dagger|\Psi_0}$ as defined in Section~\ref{sec:greens_function} (and a corresponding $G^{(-)}$) for $z$ satisfying $|\Im z|\geq \eta>0$. 

This task can be accomplished using either HHL (based on phase estimation), LCU, or QSVT, to construct a block-encoding of $(z-\hat{H}+E_0)^{-1}$. We then apply the non-unitary Hadamard test described in Appendix~\ref{sec:non_unitary_hadamard} to estimate the expectation value. The analysis in Appendix~\ref{sec:query_complexities_HHL_LCU_Greens} shows that the number of queries to $U_\Psi$ scales linearly with the block-encoding subnormalization factor of $(z-\hat{H}+E_0)^{-1}$, which is upper bounded by $1/\eta$, and the number of queries to the block-encoding of $\hat{H}$ scales linearly with the product of the above subnormalization factor and the number of queries used in the block-encoding of the matrix inverse, with the latter scaling linearly with the condition number. Here the condition number scales linearly with $|z|+\alpha_H\sim |z|+\|\hat{A}\|$. Using our preconditioning technique we can remove this dependence on $z+\|\hat{A}\|$. The detailed analysis can be found in \cref{sec:preconditioned_greens_function_solver}.
We also provide a few concrete examples such as the Hubbard model, the quantum many-body Hamiltonian in a planewave dual basis set, and the Schwinger model in \cref{sec:example_manybody}.

{In this application, the outputs are the matrix elements of the Green's function $G_{ij}(z)$, rather than a quantum state representing the solution to a QLSP. Therefore the complexity does not involve the parameter $\xi$ as in the setting of the preconditioned QLSP. As a consequence the speedup we discussed in the previous paragraph depends only on the structure of the Hamiltonian.}




\begin{table}[H]
\makegapedcells
  \begin{center}
      \begin{tabular}{p{2cm}|p{2cm}|p{3cm}|p{3cm}|p{2.7cm}}
        \hline
        \hline
        &  Algorithm &  Queries to $U_\Psi$ & Queries to block-encodings & Error \\
        \hline
        w.o. preconditioner & HHL & $\Or( \frac{1}{\eta\sqrt{p}\epsilon}\log(\frac{1}{\varsigma}) )$ & $\wt{\Or}( \frac{|z|+\alpha_H}{\eta^3\epsilon^2}  )$ & $\epsilon+\Or(\frac{\varsigma}{\eta})$ \\
        \cline{2-5}
        & LCU/QSVT & $\Or( \frac{1}{\eta\sqrt{p}\epsilon}\log(\frac{1}{\varsigma}) )$ & $\wt{\Or}( \frac{|z|+\alpha_H}{\eta^2\epsilon})$ & $\epsilon+\Or(\frac{\varsigma}{\eta})$  \\
        \hline
        w. preconditioner & This work  & 
        {$\Or( \frac{1}{\wt{\sigma}_{\min}\sqrt{p}\epsilon}\log(\frac{1}{\varsigma}) )$}
        & {$\wt{\Or}( \frac{\alpha_B}{\wt{\sigma}^2_{\min} \epsilon} )$} 
        & {$\epsilon+\Or(\frac{\varsigma}{\wt{\sigma}_{\min}})$} \\
        \hline
        \hline
      \end{tabular}
  \end{center}
  \caption{Comparison of the number of queries to the ground state and relevant block-encodings needed using different algorithms for computing
  an entry of the single-particle Green's function $G(z)$ of a quantum many-body Hamiltonian of the form $\hat{H}=\hat{A}+\hat{B}$, where $|z|\geq\eta$. The operators $\hat{A}$, $\hat{B}$, and $\hat{H}$ are given in their block-encodings with subnormalization factors $\alpha_A$, $\alpha_B$, and $\alpha_H$ respectively, satisfying $\alpha_H\sim \alpha_A\sim \norm{\hat{A}}\gg\alpha_B$. 
  {$\wt{\sigma}_{\min}$ is defined in Theorem~\ref{thm:greens_function} and $\wt{\sigma}_{\min}=\Omega(\eta/\alpha_B)$.}
  The error comes from three parts: preparing the ground state, block-encoding of the matrix inverse, and the Hadamard test with amplitude estimation. The latter two are controlled by $\epsilon$ while the first is controlled by $\varsigma$. For simplicity we assume the ground energy is known exactly here. The error as a result of inexact ground energy is included in Theorem~\ref{thm:greens_function}. }
  \label{tab:compare_algs_hubbard}
\end{table}

\vspace{1em}
\noindent\textit{Fast algorithm for evaluating matrix functions:}

The method of solving QLSP \eqref{eqn:AplusB} is a special case of computing matrix functions $f(A+B)\ket{b}$ with $f(x)=x^{-1}$. Here for simplicity we restrict $A,B$ to be Hermitian matrices. In parallel to solving QLSP, the evaluation of a general matrix function can also be performed using the phase estimation algorithm, similar to its use in the HHL algorithm \cite{HarrowHassidimLloyd2009}. Similarly LCU \cite{ChildsKothariSomma2017,BerryChildsKothari2015}, and QSP/QSVT \cite{GilyenSuLowEtAl2019,LowChuang2017} can also be used to evaluate matrix functions, and achieve better dependence on various parameters, especially the desired precision $\epsilon$.

\begin{table}[t!]
  \begin{center}
    \makegapedcells
      \begin{tabular}{p{2cm}|p{5.5cm}|p{4cm}}
        \hline\hline
                &  Algorithm &  Query complexities \\
        \hline
        w.o. preconditioner & Phase estimation \cite{poulin2009sampling}  & $\wt{\Or}(\frac{\alpha_H}{\xi\epsilon})$   \\
        \cline{2-3}
        & LCU \cite{vanApeldoorn2020quantum} & $\wt{\Or}(\frac{\alpha_{H}}{\xi}\log(\frac{1}{\epsilon}))$  \\
        \hline
        w. preconditioner & This work (contour integral) & 
        {$\wt{\Or}(\frac{\alpha_B}{\xi\wt{\sigma}'^2_{\min}}\log(\frac{1}{\epsilon}))$}\\
        \cline{2-3}
         & This work (inverse transformation) & {$\wt{\Or}\left(\frac{\alpha_B}{\xi\wt{\sigma}^2_{\min}}\left[\log\left(\frac{1}{\epsilon}\right)\right]^5\right)$}\\
        \hline
        \hline
      \end{tabular}
  \end{center}
  \caption{Comparison of the performance of different algorithms for preparing the state $e^{-H}\ket{b}/\xi$, where $\xi=\|e^{-H}\ket{b}\|$ and $H=A+B\succ 0$. We assume  $A$, $B$, and $H$ are given in their block-encodings with subnormalization factors $\alpha_A$, $\alpha_B$, and $\alpha_H$ respectively, and $\alpha_H\sim \alpha_A\sim \norm{A}\gg\alpha_B$. 
  {$\wt{\sigma}'_{\min}$ and $\wt{\sigma}_{\min}$ are defined in Theorems~\ref{thm:matrix_exp_contour} and \ref{thm:matrix_exp_inverse_transform} respectively. 
  {$\wt{\sigma}'_{\min}=\Omega(1/\alpha_B)$}
  and $\wt{\sigma}_{\min}=\Omega(1/(1+\|(A+B)^{-1}\|\|B\|))$.}
  Ref.~\cite{poulin2009sampling}, {which} uses phase estimation to prepare Gibbs state, estimates the number of queries to time-evolution, instead of block-encodings of the Hamiltonians. In this table we assume the time-evolution is done using Hamiltonian simulation methods such as in Ref.~\cite{LowChuang2019}, which simulates time-evolution for time $t$ with $\wt{\Or}(\alpha_H t)$ queries to $U_H$. It also uses 
 $\Or(\log(\frac{1}{\xi\epsilon}))$ qubits in the ``energy register''.}
  \label{tab:compare_algs_thermal}
\end{table}

The cost of each method depends on the actual form of $f(\cdot)$. Here for concreteness we consider $f(x)=e^{- x}$, which is directly related to the problem of preparing Gibbs state s in quantum physics. Without loss of generality we assume $H=A+B\succ 0$, so that $|f|\le 1$ evaluated on the spectrum of $H$. The costs of preparing the state $e^{-H}\ket{b}/\xi$ where $\xi=\|e^{-H}\ket{b}\|$, using phase estimation and LCU are given in \cref{tab:compare_algs_thermal}, and they all depend directly on the subnormalization factor in the block-encoding of $H$ denoted by $\alpha_H$. Naturally we have $\alpha_H\geq \|H\|\sim \|A\|$. We note that the Gibbs state preparation is a special case of the task discussed above. We can simply set $\ket{b}$ to be the maximally-entangled state to obtain a purified Gibbs state. In this particular case $\xi=\sqrt{N/Z}$ where $N$ is the Hilbert space dimension and $Z=\Tr(e^{-H})$ is the partition function. This will be discussed in Section~\ref{sec:purified_gibbs}. For simplicity and in order to be consistent with other works such as Ref.~\cite{vanApeldoorn2020quantum} we omit the dependence on temperature in the \cref{tab:compare_algs_thermal}, and this dependence will be discussed in Section~\ref{sec:purified_gibbs} as well. 

We propose two methods to evaluate $e^{-H}$ given the ability of fast inversion of $A$. The first method is based on the Cauchy contour integral formulation, 
\begin{equation}
f(x)=\frac{1}{2\pi\I} \oint_{\mc{C}} \frac{f(z)}{z-x} \ud z,
\end{equation}
where $\mc{C}$ is a simple closed curve and $f(\cdot)$ is analytic on a region containing $\mc{C}$ and its interior, and $x$ is inside $\mc{C}$. After proper discretization, the evaluation of the matrix function $f(A+B)$ becomes solving a series of preconditioned quantum linear system problems, which can be combined together using LCU. 
Our second method is based on a simple inverse transformation, namely 
\begin{equation}
f(x)=f(y^{-1}-a),
\end{equation}
where $y=(x+a)^{-1}$ for some $a\in\CC$. When $A+B$ is invertible, we may simply take $a=0$. Again we use preconditioned quantum linear system solver, combined with a standard LCU/QSVT procedure to evaluate $f(A+B)$. 
The cost of the two methods to prepare $e^{-(A+B)}$ are given in \cref{tab:compare_algs_thermal}, which is independent of $\norm{A}$.

There is a class of algorithms based on quantum walks and Metropolis sampling to prepare the Gibbs state \cite{yung2012quantum,ozols2013quantum,temme2011quantum}, which can be seen as a special case of implementing the matrix function $e^{-H}$. The complexity typically depends mainly on the gap of the transition matrix of the Markov chain, and thus the complexity estimate involves a different set of parameters. Therefore we do not compare the complexities of these algorithms with our methods. For example, Ref.~\cite{chowdhury2016quantum} uses LCU to prepare the Gibbs state, but it uses a different input model of the Hamiltonian from the block-encoding we use in this work.



\subsection{Notations}
\label{sec:notations}

A matrix $A\in \CC^{2^n\times 2^n}$ is referred to as an $n$-qubit matrix or $n$-qubit operator. Unless otherwise explained, we use the notation $N=2^n$, and $[N]=\set{0,\ldots,N-1}$. We will extensively use the technique of block-encoding, which is a way of embedding an arbitrary matrix as a sub-matrix of a larger unitary matrix.  Here using a unitary matrix $U_A$ to encode $A$ as a sub-matrix means that there exist  a normalizing constant $\alpha>0$ such that
\begin{equation}
    U_A = \begin{bmatrix} A/\alpha & *\\ *& * \end{bmatrix},
\end{equation}
where $*$ denotes arbitrary matrix blocks of proper sizes. In general, the matrix that we block-encode may only approximate $A/\alpha$.  We use the following notation to describe such encodings. 

\begin{defn}[Block-encoding, \cite{GilyenSuLowEtAl2018}]\label{defn:block_encoding}
An $(m+n)$-qubit unitary operator $U_A$ is called an $(\alpha, m, \epsilon)$-block-encoding of an $n$-qubit operator $A$, if 
$
\norm{A-\alpha(\bra{0^m}\otimes I_n) U_A (\ket{0^m}\otimes I_n)}\leq \epsilon.
$
\end{defn}
Here $m$ is the number of ancilla qubits for block-encoding, and $\alpha$ is called the block-encoding factor, or the subnormalization factor. The block-encoding has long been explicitly used in algorithms such as the quantum linear systems algorithm~\cite{HarrowHassidimLloyd2009}. The block-encoding is a powerful and versatile model, which can be used to efficiently encode density operators, Gram matrices, positive-operator valued measure (POVM), sparse-access matrices, as well as addition and multiplication of block-encoded matrices (we refer to \cite{GilyenSuLowEtAl2018} for a detailed illustration of such constructions). 

\begin{rem}\label{rem:noerror_blockencoding}
For simplicity of discussion, we may often assume the given block-encodings are error-free, e.g. we may assume $U_A$ is an $(\alpha, m,0)$-block-encoding of $A$. The error due to the given block-encodings can often be taken into account without much technical difficulties, but may complicate the presentation of results. We can then focus on the error introduced by other parts of the algorithm, such as that due to the polynomial approximation of smooth functions. \end{rem}

We also use the following notations throughout the paper:  The block-encoding of a matrix $A$ is generally denoted by $U_A$. Since the 2-norm of a unitary matrix $U_A$ is 1, it is guaranteed that the 2-norm of $A/\alpha$, which is a submatrix of $U_A$, is upper bounded by 1. This implies $\|A\|\leq\alpha$. Therefore in this work we usually bound the norm of a matrix in terms of its block-encoding subnormalization factor, which in many cases is known \textit{a priori}, for example in the case of $d$-sparse matrices \cite{ChildsKothariSomma2017,GilyenSuLowEtAl2018}.
Since this paper uses the inverse of matrices extensively, we may use $U'_A:=U_{A^{-1}}$ to denote the block-encoding of $A^{-1}$. For convenience, we use $\alpha_A,m_A$ to denote the subnormalization factor and the number of ancilla qubits for $A$, respectively, and use $\alpha_A',m_A'$ to denote those for $A^{-1}$.
{Throughout the paper we also frequently use $\wt{\sigma}_{\min}$ to denote a lower bound of the smallest singular value of a matrix of the form $I+A^{-1}B$.}

To simplify the notation, we may omit the normalization factor in the QLSP problem $\ket{x}=A^{-1}\ket{b}/\norm{A^{-1}\ket{b}}$, and write $\ket{x}\propto A^{-1}\ket{b}$ or $A\ket{x}\propto \ket{b}$. However, the normalization factor is not arbitrarily chosen, and the resulting state $\ket{x}$ is well defined. Though the phase factor in $\ket{x}$ is often not important, this allows us to define the distance between an approximate solution to QLSP $\ket{\wt{x}}$ and the true solution directly via the vector 2-norm $\norm{\ket{\wt{x}}-\ket{x}}$. In this paper we mostly use the vector 2-norm to quantify error, with the exception in Theorem~\ref{thm:greens_function} where we use trace distance to quantify the error of ground state preparation.  When we say a target quantum state $\ket{\phi}$ is prepared to precision $\epsilon$, it means that we prepare a quantum state $\ket{\psi}$ such that $\|\ket{\phi}-\ket{\psi}\|\leq \epsilon$. The relationship between the 2-norm distance and the more commonly used fidelity and trace distance is as follows: if for two pure states $\ket{\phi}$ and $\ket{\psi}$, $\|\ket{\phi}-\ket{\psi}\|=\epsilon$, then the fidelity $F$ satisfies
\[
F:=|\braket{\phi|\psi}|^2\geq (1-\epsilon^2/2)^2,
\]
and the trace distance 
between the two density matrices $\rho_{\phi}=\ket{\phi}\bra{\phi}$ and $\rho_{\psi}=\ket{\psi}\bra{\psi}$ satisfies
\[
T\left(\rho_{\phi},\rho_{\psi}\right):=\frac{1}{2} \Tr\left[\sqrt{(\rho_{\phi}-\rho_{\psi})^{\dagger}(\rho_{\phi}-\rho_{\psi})}\right]=\sqrt{1-|\braket{\phi|\psi}|^2} \leq \sqrt{1-(1-\epsilon^2/2)^2}=\epsilon\sqrt{1-\epsilon^2/4}.
\]

Additionally, we use the following asymptotic notations besides the usual $\Or$ notation throughout the paper: we write $f=\Omega(g)$ if $g=\Or(f)$; $f=\Theta(g)$ if $f=\Or(g)$ and $g=\Or(f)$; $f=\wt{\Or}(g)$ if $f=\Or(g\operatorname{poly}\log(g))$. We denote by $C^{m}(\mc{I})$ set of functions on an interval $\mc{I}$, which is differentiable $m$ times and the $m$-th derivative is continuous. Correspondingly $C^{\infty}(\mc{I})$ is the set of infinitely differentiable functions on $\mc{I}$ (also called smooth functions).

\begin{rem}(Dilation of a non-Hermitian matrix) When solving QLSP, it is often assumed that $A$ is a Hermitian matrix \cite{HarrowHassidimLloyd2009,ChildsKothariSomma2017}. This is because a non-Hermitian matrix can be dilated into a Hermitian matrix using one ancilla qubit
\begin{equation}
\label{eqn:dilation_trick}
    \wt{A}= \begin{bmatrix} 0 & A\\ A^\dagger & 0 \end{bmatrix}.
\end{equation}
When $A$ is given by its block-encoding $U_A$, the dilated Hermitian matrix $\wt{A}$ can be obtained through $U_{\wt{A}}=\ket{0}\bra{1}\otimes U_A + \ket{1}\bra{0}\otimes U_A^\dagger$ with subnormalization factor $1$. Note that this requires the controlled version of $U_A,U_A^{\dag}$. The quantum singular value transformation technique in \cref{sec:qsp_qlsp} can directly solve QLSP for non-Hermitian matrices without requiring a dilation step. In this paper, we assume $A\in\CC^{N\times N}$ is a general square matrix, and will explicitly specify when $A$ is taken to be a Hermitian matrix.
\end{rem}


\subsection{Organization of this paper}

The rest of the paper is organized as follows. In Section~\ref{sec:fastinv} we discuss certain matrices we can fast-invert on a quantum computer. This enables us to precondition linear systems, which is discussed in Section~\ref{sec:preconditioned_qlsp}. We then discuss two applications of the preconditioning technique we developed:
computing the many-body Green's function in Section~\ref{sec:many_body_greens_function}, and  evaluating matrix function $e^{-\beta H}$ in Section~\ref{sec:fast_algs_matrix_functions}.  Conclusion and discussion are given in Section~\ref{sec:discuss}. A brief review of quantum singular value transformation for solving QLSP, together with certain details of proofs and constructions are given in the appendices.

\section{Fast inversion}\label{sec:fastinv}
Our preconditioning method relies on fast-inverting certain class of matrices efficiently. 
{
\begin{defn}[Fast-invertible matrices]
A matrix $A$ is fast-invertible if, after rescaling $A$ so that $\|A^{-1}\|=1$, a $(\Theta(1),m,\epsilon)$-block-encoding of $A^{-1}$ can be obtained, and the number of queries to the oracles that determine $A$ is independent of the condition number $\kappa(A)$.
\end{defn}
}

{In this definition, the oracles are not restricted to yield a direct block-encoding of $A$. This can be seen in the examples of the unitarily diagonalizable matrices (\cref{sec:fastinv_diagaonlize}) and the $1$-sparse matrices (\cref{sec:fastinv_1sparse} and \cref{sec:fastinv_1sparse_general}).}

{Before further discussion, we first clarify the relation between the notion of fast-invertible matrices we propose here and the notion of fast-forwardable matrices. The two concepts are clearly closely related. In particular, if $A$ is a non-singular, Hermitian matrix $A$ that can be unitarily diagonalized efficiently, then $A$ is both fast-invertible and fast-forwardable, in the sense that the circuit depth for constructing the block-encoding of $A^{-1}$ and $e^{\I At}$ can be independent of $\kappa(A)$ and $t$, respectively. However, there is also an important difference: if $A$ is fast-forwardable, then the query complexity for preparing the state $e^{\I At}\ket{b}$ can be independent of $t$ for any $\ket{b}$. On the other hand, if a matrix $A$ is fast-invertible, then to prepare a normalized state that is parallel to $A^{-1}\ket{b}$, the query complexity still depends on $\|A^{-1}\ket{b}\|$, which in the worst case is lower bounded by $1/\kappa(A)$, if we rescale $A$ so that $\|A^{-1}\|=1$. However this lower bound is often not tight, as can be seen in the $d$-dimensional elliptic PDE example we discuss in \cref{prop:cost_elliptic}, and leads to vast over-estimation of the cost. Therefore we take $\|A^{-1}\ket{b}\|$ as an independent parameter rather than using the worst-case bound $1/\kappa(A)$. There are also instances in which the goal is not to prepare a quantum state but to read out a scalar value, as in Green's function evaluation discussed in \cref{sec:many_body_greens_function}. For these instances $\|A^{-1}\ket{b}\|$ can be irrelevant and the number of queries to $A$ can be completely independent of $\kappa(A)$.}
 

\subsection{Fast inversion of diagonal, and general $1$-sparse matrices}\label{sec:fastinv_1sparse}

We first consider a diagonal matrix $D\in \CC^{N\times N}$. Note that $\norm{D}=\max_i |D_{ii}|$, and $\norm{D^{-1}}=(\min_i |D_{ii}|)^{-1}$. The condition number is then $\kappa(D)=\max_i |D_{ii}|/\min_i |D_{ii}|$. Our goal is to solve the QLSP, i.e. to obtain a normalized state 
\begin{equation}
\ket{x}\propto D^{-1}\ket{b},
\label{eqn:qlsp_diag}
\end{equation}
where
\[
\ket{b}=\sum_{i\in[N]}b_i\ket{i}.
\]

Now assume that the diagonal entry of $D$ is accessible via an oracle 
\begin{equation}
\label{eq:diag_oracle}
    O_D\ket{i}\ket{0^l}=\ket{i}\ket{D_{ii}}, \quad i\in[N],
\end{equation}
where $\ket{D_{ii}}$ is a binary representation of the diagonal entry $D_{ii}$ (for simplicity assume the $l$-bit approximation of $D_{ii}$ is exact). 

\begin{rem}
\label{rem:polylog_overhead}
We assume each diagonal entry of $D$ can be efficiently computed using a classical boolean circuit of size $\Or(\polylog(N))$ with $m=\Or(\polylog(N))$ ancilla bits. In this case we can construct a quantum circuit with $\Or(\polylog(N)+l)$ gates and $\Or(\polylog(N)+l)$ ancilla qubits to implement this oracle $O_D$ \cite[Lemma 10.10]{arora2009computational}. For simplicity we omit these ancilla qubits in Eq.~\eqref{eq:diag_oracle}, since after reversing these quantum gates, their values will return to $\ket{0^m}$ at the end of the computation.
\end{rem}

The circuit for the block-encoding of $D^{-1}$, denoted by $U_{D^{-1}}$ is as in \cref{fig:fastinv_circuit}.
\begin{figure}[H]
  \centering
  \includegraphics[width=0.45\textwidth]{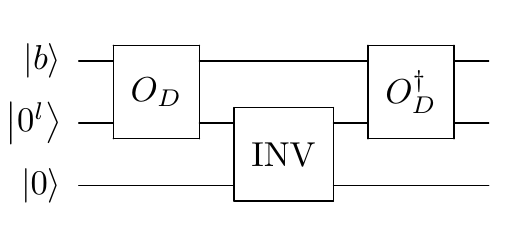}
  \caption{Quantum circuit for fast inversion.}
  \label{fig:fastinv_circuit}
\end{figure}
Here the inversion circuit $\mathrm{INV}$ satisfies 
\begin{equation}
\mathrm{INV}\ket{\zeta}\ket{0} = \ket{\zeta}\left( \frac{1}{\alpha'_D\zeta}\ket{0}+\sqrt{1-\Big|\frac{1}{\alpha'_D\zeta}\Big|^2}\ket{1} \right),
\label{eqn:}
\end{equation}
where $\ket{\zeta}$ is an $l$-qubit state representing an $l$-bit number $\zeta$. This circuit is also used in the HHL algorithm \cite{HarrowHassidimLloyd2009}. 
Here $\alpha'_D \ge (\min_i |D_{ii}|)^{-1}=\norm{D^{-1}}$. 
The output of the circuit is 
$$
U'_D\ket{b}\ket{0^l}\ket{0}=\sum_i  (\alpha'_D D_{ii})^{-1} b_i\ket{i}\ket{0^l}\ket{0} + \sum_i \sqrt{1-|(\alpha'_D D_{ii})^{-1}|^2}b_i\ket{i}\ket{0^l}\ket{1}. 
$$
Running the circuit and measuring the ancilla registers (i.e. the last two registers), and upon getting all zero output, which we take as a success, we will have a quantum state proportional to $D^{-1}\ket{b}$ in the first register. Hence $U'_{D}$ is an $(\alpha'_D,m'_D,0)$-block-encoding of $D^{-1}$ (recall $U'_D \equiv U_{D^{-1}}$)  and $m'_D=\Or(l+\polylog(N))$ when we take into account the ancilla qubits that have been omitted as mentioned in \cref{rem:polylog_overhead}. 

\begin{prop}[Fast inversion for diagonal matrices]
\label{prop:fastinv_diag}
For a diagonal matrix $D$ whose diagonal entries can be accessed through the oracle $O_D$ given in Eq.~\eqref{eq:diag_oracle}, and $\alpha'_D\ge 1/\min_i\{|D_{ii}|\}$, we can construct an $(\alpha'_D,m'_D,0)$-block-encoding of $D^{-1}$, given in Figure~\ref{fig:fastinv_circuit}, using $O_D$ and $O_D^\dagger$ both exactly once. Here $m'_D=\Or(l+\polylog(N))$ if $O_D$ uses $\Or(l+\polylog(N))$ ancilla qubits.
\end{prop}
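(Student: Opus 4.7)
The plan is to verify directly that the circuit in Figure~\ref{fig:fastinv_circuit} realizes the claimed block-encoding by tracing the three-register state through the sequence $O_D$, $\mathrm{INV}$, $O_D^\dagger$. The proof is a short calculation together with two easy sanity checks, and there is no real obstacle beyond keeping the bookkeeping of the three registers straight.

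First, I would start from $\ket{b}\ket{0^l}\ket{0}$ and apply $O_D$ on the first two registers to load the diagonal entries coherently, obtaining $\sum_i b_i\ket{i}\ket{D_{ii}}\ket{0}$. Next I would apply $\mathrm{INV}$; by its defining action it leaves the middle register $\ket{D_{ii}}$ invariant and rotates the last qubit into
$$\frac{1}{\alpha'_D D_{ii}}\ket{0} + \sqrt{1-\left|\frac{1}{\alpha'_D D_{ii}}\right|^2}\ket{1}.$$
Finally, applying $O_D^\dagger$ on the first two registers uncomputes $\ket{D_{ii}}$ back to $\ket{0^l}$. The component of the resulting state on the ancilla subspace $\ket{0^l}\ket{0}$ is then $(1/\alpha'_D)\sum_i (b_i/D_{ii})\ket{i} = (1/\alpha'_D)D^{-1}\ket{b}$, which, matched against Definition~\ref{defn:block_encoding}, gives the $(\alpha'_D, m'_D, 0)$-block-encoding claim; by construction $O_D$ and $O_D^\dagger$ have each been used exactly once.

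Two small items I would flag rather than belabor. The hypothesis $\alpha'_D \ge 1/\min_i |D_{ii}|$ is precisely what ensures $|1/(\alpha'_D D_{ii})|\le 1$ for every $i$, which is needed for the $\mathrm{INV}$ rotation to produce a normalized output in the first place (in particular, it forces $D_{ii}\neq 0$, so the inverse is well-defined). For the ancilla count, we need $l$ qubits for the $\ket{D_{ii}}$ register, one qubit for the $\mathrm{INV}$ output, and whatever workspace $O_D$ uses internally, which by Remark~\ref{rem:polylog_overhead} is $\Or(\operatorname{poly}\log(N))$ and is cleanly restored by $O_D^\dagger$; together this yields $m'_D = \Or(l + \operatorname{poly}\log(N))$. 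The only subtle point worth double-checking is that $\mathrm{INV}$ genuinely treats $\ket{D_{ii}}$ as a classical control register, so that after the uncomputation no residual entanglement remains between the ancillas and the data register and the final state factorizes as written.
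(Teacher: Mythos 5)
Your proposal is correct and follows essentially the same route as the paper: the paper's argument is precisely the compute--rotate--uncompute trace through the circuit of Figure~\ref{fig:fastinv_circuit}, yielding the output state $\sum_i (\alpha'_D D_{ii})^{-1} b_i\ket{i}\ket{0^l}\ket{0} + \sum_i \sqrt{1-|(\alpha'_D D_{ii})^{-1}|^2}\,b_i\ket{i}\ket{0^l}\ket{1}$ and reading off the block-encoding, with the ancilla count handled via Remark~\ref{rem:polylog_overhead}. Your two flagged points (the role of $\alpha'_D \ge 1/\min_i|D_{ii}|$ and the fact that $\mathrm{INV}$ leaves $\ket{D_{ii}}$ untouched so $O_D^\dagger$ cleanly uncomputes it) match the paper's implicit reasoning.
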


For simplicity of discussion here, we assume $\alpha'_D=\norm{D^{-1}}$. The success probability of the above procedure is $(\|D^{-1}\ket{b}\|/\alpha'_D)^2$.
Denote by $\xi=\|D^{-1}\ket{b}\|$ (consistent with that in \cref{sec:qsp_qlsp}).  With amplitude amplification \cite{BrassardHoyerMoscaEtAl2002} we can boost the success probability to be greater than $1/2$ by repeating the above process  $\Or(\alpha'_D/\xi)$ times. Hence the success probability depends on both the operator $D$ as well as the state $\ket{b}$.  In the worst case $\braket{i|b}$ vanishes everywhere else other than $i=\arg\max_{i} |D_{ii}|$.  Then $\xi =\|D\|^{-1}$, and the number of repetition becomes $\Or(\norm{D^{-1}} \norm{D})= \Or(\kappa(D))$.  On the other hand, 
{if $|\braket{j|b}|$ is lower bounded by a constant for $j=\arg\min_{i} |D_{ii}|$, then $\xi= \Theta(\norm{D^{-1}})$, and the success probability of the fast inversion is $(\xi/\alpha_D')^2=\Omega(1)$. }
In this case, the cost of the fast inversion is $\Or(1)$ and is independent of $\kappa(D)$. 

{The cost of a QLSP solver may be reduced, if the effective condition number is much smaller than the condition number. Here the effective condition number refers to the ratio between the largest and the smallest singular values, whose corresponding singular vectors have a nonzero overlap with the right-hand side $\ket{b}$. Note that this is \emph{not} the reason of the speedup we discuss above. In the above discussion we do not require $\braket{j|b}$ to vanish anywhere, and therefore our method does not rely on the effective condition number being smaller than $\kappa(D)$.}


Let us now contrast the results above with the standard QSVT method for solving linear systems (briefly reviewed in \cref{sec:qsp_qlsp}). Start from an $(\alpha_D,m_D,0)$-block-encoding of $D$ denoted by $U_D$, we may take $\alpha_D=\norm{D}$. Applying \cref{thm:qsvt_qlsp}, 
using QSVT we can solve the QLSP \eqref{eqn:qlsp_diag} and obtain the solution to precision $\epsilon$ with probability at least $1/2$,  using $\Or((\kappa(D)^2/\|D\|\xi)\log(\kappa(D)/(\|D\|\xi\epsilon)))$ queries to $U_D$ and $U_D^\dagger$. The circuit depth for block-encoding the matrix inversion is $\Or(\kappa(D)\log(\kappa(D)/(\|D\|\xi\epsilon)))$, and this circuit and its inverse are repeated $\Or(\kappa(D)/\|D\|\xi)$ times in the amplitude amplification \cite{BrassardHoyerMoscaEtAl2002} procedure. So the fast inversion is always more efficient in terms of the circuit depth for block-encoding the matrix inversion. Considering the entire procedure for solving the QLSP, we need to take the value of $\xi$ into account. Since $\xi\in[1/\|D\|,\|D^{-1}\|]$, in the worst case when $\xi=1/\|D\|$, the fast-inversion method results in a quadratic speedup with respect to $\kappa(D)$. In the best case when $\xi=\|D^{-1}\|=\kappa(D)/\|D\|$, the cost of QSVT is still $\wt{\Or}(\kappa)$ while the cost of the fast inversion is $\Or(1)$.  

To illustrate how fast inversion works, let us consider a concrete example: 
\begin{equation}
D=\sum_{j=1}^n Z_j+(n+1)I_n,
\label{eqn:diag_ham}
\end{equation}
where $Z_j$ is the Pauli-Z matrix on the $j$-th qubit, and $I_n$ is the $n$-qubit identity operator. Then $\norm{D}=2n+1$, $\norm{D^{-1}}=1$, and $\kappa(D)=2n+1$. We may construct a $(1,m,\epsilon)$-block-encoding of $D^{-1}$ as follows. Given a state $\ket{i}\equiv \ket{s_1\cdots s_n}$ with $i\in [N]$ represented by a binary string and $s_j\in\{0,1\}$, we may first use take $O_D$ to be a quantum adder circuit, i.e.
\[
O_{D}\ket{i}\ket{0^l}=\ket{i}\ket{D_{ii}}, \quad D_{ii}=\left(2\sum_{j=1}^n s_j\right)+1,
\]
which can be implemented using a quantum adder circuit that uses $l=\lceil\log n\rceil+2$ ancilla qubits. We then find a $(1,l+1,0)$-block-encoding of $U'_{D}$ as in \cref{fig:fastinv_circuit}. If the right hand side vector $\ket{b}=\ket{0^n}$, then the number of repetitions needed to achieve $\Omega(1)$ success probability is $\Or(\kappa(D))=\Or(n)$. On the other hand, if $\ket{b}=\ket{1^n}$, only $\Or(1)$ repetitions are sufficient for the fast inversion method to succeed.

For $1$-sparse matrices that are not necessarily diagonal, we consider two different access models.
In the first case, given a general invertible, $1$-sparse matrix $A\in\CC^{N\times N}$, it can be written as $A=\Pi D$, where $\Pi$ is a permutation matrix, and $D$ is a diagonal matrix. We assume that we have direct access to the permutation  $\Pi$. Then $A$ is invertible if and only if $D$ is invertible. 
Given the availability of a $(1,m'_{\Pi},0)$-block-encoding of the unitary matrix
$\Pi^{-1}$ denoted by $U'_{\Pi}$, as well as an $(\alpha'_D,m'_D,\epsilon)$-block-encoding of $D^{-1}$ denoted by $U'_{D}$, we obtain an $(\alpha'_D,m'_D+m'_{\Pi},\epsilon)$-block-encoding of $A^{-1}=D^{-1}\Pi^{-1}$ via multiplication of block-encoded matrices \cite{GilyenSuLowEtAl2019}. The whole circuit takes three oracles queries in total.

In the second case, we only assume we have query access to the column of
the single nonzero element in each row (since the matrix is 1-sparse),
as well as to the value of the each element. The details for
constructing the fast inversion in this case is given in
Appendix~\ref{sec:fastinv_1sparse_general}. 

\subsection{Fast inversion of normal matrices}\label{sec:fastinv_diagaonlize}

If $A\in \CC^{N\times N}$ is a normal matrix, i.e. $A A^{\dag}=A^{\dag} A$, then $A$ can be unitarily diagonalized as $A=V D V^{\dag}$, where $V\in \CC^{N\times N}$ is a unitary matrix and $D\in \CC^{N\times N}$ is a diagonal matrix. Therefore $A$ is invertible if and only if $D$ is invertible. Using the fast inversion routine, we obtain an efficient block-encoding of $A^{-1}$ as
\begin{equation}
U'_{A}= (V\otimes I_{l+1})U'_{D}(V^{\dag}\otimes I_{l+1}).
\label{eqn:fastinv_diagonalize}
\end{equation}
Therefore we have the following proposition:
\begin{prop}[Fast inversion for normal matrices]
Suppose the eigenvalues of a normal matrix $A=VDV^\dagger$, where $V$ is unitary and $D$ is diagonal, can be accessed through the oracle $O_D$ given in Eq.~\eqref{eq:diag_oracle}, and $V$ can be efficiently implemented in a quantum circuit. Also let $\alpha'_D\geq 1/\min_i\{|D_{ii}|\}$, then we can construct an $(\alpha'_D,m_D',0)$-block-encoding of $A^{-1}$, using $O_D$, $O_D^\dagger$, $V$, and $V^\dagger$ each exactly once. Here $m'_D=\Or(l+\polylog(N))$ if $O_D$ uses $\Or(l+\polylog(N))$ ancilla qubits. 
\label{prop:fastinverse}
\end{prop}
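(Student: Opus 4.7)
The plan is to reduce this to the diagonal case already handled by \cref{prop:fastinv_diag}. Since $A$ is normal, the spectral theorem gives $A^{-1}=VD^{-1}V^{\dag}$, so it suffices to block-encode $D^{-1}$ and then conjugate by the unitaries $V$ and $V^{\dag}$ acting on the system register only.

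First I would invoke \cref{prop:fastinv_diag} to obtain an $(\alpha'_D, m'_D, 0)$-block-encoding $U'_D$ of $D^{-1}$, which uses $O_D$ and $O_D^{\dag}$ each exactly once and requires $m'_D = \Or(l+\polylog(N))$ ancilla qubits (absorbing those from \cref{rem:polylog_overhead}). Next I would define
\[
U'_A \;:=\; (V\otimes I_{m'_D})\, U'_D\, (V^{\dag}\otimes I_{m'_D})
\]
as in \eqref{eqn:fastinv_diagonalize}. This is a unitary since $V$ is unitary and $U'_D$ is unitary. The key calculation is
\[
\alpha'_D\,(\bra{0^{m'_D}}\otimes I_n)\,U'_A\,(\ket{0^{m'_D}}\otimes I_n)
= V\bigl[\alpha'_D\,(\bra{0^{m'_D}}\otimes I_n)\,U'_D\,(\ket{0^{m'_D}}\otimes I_n)\bigr]V^{\dag}
= V D^{-1} V^{\dag} = A^{-1},
\]
where the middle equality uses that the ancilla projector $\ket{0^{m'_D}}\bra{0^{m'_D}}$ commutes with $V\otimes I$ (they act on disjoint registers), and the fact that $U'_D$ is an exact block-encoding. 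This verifies that $U'_A$ is an $(\alpha'_D, m'_D, 0)$-block-encoding of $A^{-1}$.

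Finally I would tally the query counts: the construction invokes $V$ and $V^{\dag}$ exactly once each, and $U'_D$ exactly once, which in turn uses $O_D$ and $O_D^{\dag}$ exactly once each by \cref{prop:fastinv_diag}. The number of ancilla qubits remains $m'_D=\Or(l+\polylog(N))$ since the conjugation acts only on the system register. There is no real obstacle here; the only subtlety worth flagging explicitly is that the projector onto the ancilla $\ket{0^{m'_D}}$ subspace commutes with $V\otimes I_{m'_D}$, which is what allows $V$ and $V^{\dag}$ to be pulled outside the block-encoding selection to produce the desired similarity transform on $D^{-1}$.
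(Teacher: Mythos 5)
Your proof is correct and follows essentially the same route as the paper: it constructs $U'_D$ via \cref{prop:fastinv_diag} and conjugates by $V\otimes I$ exactly as in \cref{eqn:fastinv_diagonalize}, with the same query accounting. The only (harmless) nitpick is a minor inconsistency in the ordering of the ancilla/system tensor factors between your projector and your definition of $U'_A$, which does not affect the argument.
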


Let us now consider another example of solving a linear system via fast inversion. Consider the following $d$-dimensional elliptic equation with periodic boundary conditions
\begin{equation}
-\Delta u(\vr)+u(\vr)=b(\vr), \quad \vr\in \Omega=[0,1]^d.
\label{eqn:elliptic_equation}
\end{equation}
Using a planewave basis set, we may expand  $u,b$ as 
\[
u(\vr)=\sum_{\vG\in\GG}\hat{u}(\vG) \exp(\I \vG \cdot \vr), \quad b(\vr)=\sum_{\vG\in\GG}\hat{b}(\vG) \exp(\I \vG \cdot \vr).
\]
where the planewaves index set is \[
\GG = \Set{
\vG=2\pi(g_1,\ldots,g_d)|  g_i\in\ZZ, i=1,\ldots,d}.
\]
The solution can then be readily written down in the Fourier space as 
\[
\hat{u}(\vG)=\frac{1}{\abs{\vG}^2+1} \hat{b}(\vG), \quad \vG\in\GG.
\]

We now use a finite number of $N$ planewaves to approximate the solution to \cref{eqn:elliptic_equation}. For simplicity we assume $N=2^n=2^{d\mathfrak{n}}$, so that there are $h^{-1}:=N^{1/d}=2^{\mathfrak{n}}$ planewaves per dimension. We further assume Here $h$ can be viewed as an effective mesh size. Then the planewave indices are restricted to 
\begin{equation}
\GG_h = \Set{
\vG=2\pi(g_1,\ldots,g_d)| -\frac{1}{2h}\le g_i< \frac{1}{2h}, g_i\in\ZZ, i=1,\ldots,d},
\label{eqn:planewave_restrict}
\end{equation}
with the cardinality $|\GG_h|=N$, and the resulting discretized QLSP can be written as $A\ket{u}\propto \ket{b}$. We may  write $A=VDV^{\dag}$, where $V$ is the $d$-dimensional quantum Fourier transform (QFT) \cite{NielsenChuang2000}, and the diagonal entries of $D$ are known and labeled by $\vG$ as
\[
D(\vG)=|\vG|^2+1, \quad \vG\in\GG_h.
\]


Hence the largest singular value of $A$ is
\[
\|A\|=\sigma_{\max}=d \frac{(2\pi)^2}{(2h)^2}+1=\frac{d\pi^2}{h^2}+1,
\]
which grows as $h^{-2}$ as the number of planewaves increases. The smallest singular value of $A$ is $1/\|A^{-1}\|=\sigma_{\min}=1$. Therefore the condition number $\kappa(A)=\Or(dh^{-2})$. A block-encoding of $D^{-1}$ can be explicitly constructed using classical arithmetics, and therefore we may fast invert the matrix $A$.

Let us consider $d=1$ first, where $n=\mathfrak{n}$ and $V$ is the standard QFT for $\mathfrak{n}$-qubits, denoted by $\mathrm{F}_{\mathfrak{n}}$. The implementation of $\mathrm{F}_{\mathfrak{n}}$ costs $\Or(\mathfrak{n}^2)$ gates. In the $d$-dimensional setting, $V$ can be constructed by $d$ copies of $\mathrm{F}_{\mathfrak{n}}$ as
\[
V=\mathrm{F}_{\mathfrak{n}}\otimes\cdots\otimes \mathrm{F}_{\mathfrak{n}}.
\]
So the total cost is $\widetilde{\Or}(\mathfrak{n}d)=\widetilde{\Or}(n/d)$. Therefore the circuit depth for block-encoding $A^{-1}$
is independent of the condition number $\kappa(A)$.

To consider the query complexity, we first note that the norm of the solution
\begin{equation}
\int |u(\vr)|^2\ud \vr=\sum_{\vG\in\GG} |\hat{u}(\vG)|^2=
\sum_{\vG\in\GG} \frac{1}{(\abs{\vG}^2+1)^2} |\hat{b}(\vG)|^2
\label{eqn:xi_elliptic}
\end{equation}
is well defined as long as the Fourier coefficients of the right hand side $\hat{b}(\vG)$ decays rapidly enough as $\abs{\vG}\to\infty$ (e.g. when $b(\vr)$ is a smooth function). Therefore with a finite truncation
\[
\sum_{\vG\in\GG_h} \frac{1}{(\abs{\vG}^2+1)^2} |\hat{b}(\vG)|^2=\Theta(1),
\]
and the quantity
\[
\xi= \norm{A^{-1}\ket{b}}=\Theta(1).
\]
This is asymptotically the best scenario for solving QLSP as discussed in Section~\ref{sec:fastinv_diagaonlize} and Appendix~\ref{sec:qsp_qlsp}. Combining the results of our bound on the complexity of the linear systems problem (given in Appendix~\ref{sec:qsp_qlsp} as \cref{thm:qsvt_qlsp}) and \cref{prop:fastinverse}, we have the following proposition for the cost of solving the elliptic equation \eqref{eqn:elliptic_equation}. 
\begin{prop}\label{prop:cost_elliptic}
In order to solve the $d$-dimensional elliptic equation \eqref{eqn:elliptic_equation} with a smooth right hand side $b(\vr)$ on the $d$-dimensional torus with precision $\epsilon$ and success probability larger than $1/2$, using a planewave discretization \cref{eqn:planewave_restrict} with grid size $h$ along each direction, the circuit depth of the quantum singular value transformation is  $\Or(dh^{-2}\log(1/\epsilon))$. The total number of queries to $U_A,U_A^{\dag}$ is $\Or(dh^{-2}\log(1/\epsilon))$, and the number of queries to $U_b$ is $\Or(1)$. The circuit depth, number of queries to $O_D$, $O_D^\dagger$, $V$, $V^\dagger$, $U_b$ are all $\Or(1)$ using fast inversion.
\end{prop}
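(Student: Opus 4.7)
The plan is to treat the two complexity bounds in parallel after assembling two shared ingredients: the spectral data of $A=VDV^{\dag}$ and the $\Theta(1)$ bound on $\xi=\|A^{-1}\ket{b}\|$. First I would record the spectral data. Since $D(\vG)=|\vG|^2+1$ on $\GG_h$, the extreme singular values are $\sigma_{\max}=d\pi^2/h^2+1=\Theta(dh^{-2})$ and $\sigma_{\min}=1$, giving $\|A\|=\Theta(dh^{-2})$, $\|A^{-1}\|=1$, and $\kappa(A)=\Theta(dh^{-2})$. These quantities feed directly into both bounds.

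Next, the claim $\xi=\Theta(1)$ needs to be made rigorous. The upper bound $\xi\le\|A^{-1}\|=1$ is immediate. For the lower bound, smoothness of $b$ on the torus implies that $|\hat b(\vG)|$ decays faster than any polynomial in $|\vG|$, so by Parseval there is a fixed cutoff $G_0$, independent of $h$, with $\sum_{|\vG|\le G_0}|\hat b(\vG)|^2\ge 1/2$ for all sufficiently small $h$. Restricting the sum in \eqref{eqn:xi_elliptic} to those modes and using $(|\vG|^2+1)^{-2}\ge(G_0^2+1)^{-2}$ on them yields $\xi=\Omega(1)$. With these ingredients the QSVT bound follows by direct substitution into \cref{thm:qsvt_qlsp}: the block-encoding of $A^{-1}$ has depth $\Or(\kappa(A)\log(\kappa(A)/(\|A\|\xi\epsilon)))=\Or(dh^{-2}\log(1/\epsilon))$, the number of amplitude-amplification rounds is $\Or(\kappa(A)/(\|A\|\xi))=\Or(1)$, the total query count to $U_A,U_A^{\dag}$ is the product $\Or(dh^{-2}\log(1/\epsilon))$, and the query count to $U_b$ equals the number of rounds, namely $\Or(1)$.

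For the fast-inversion bound, \cref{prop:fastinverse} applied to $A=VDV^{\dag}$ produces a $(1,m_D',0)$-block-encoding of $A^{-1}$ using $O_D,O_D^{\dag},V,V^{\dag}$ each exactly once, with $\alpha_D'=1/\sigma_{\min}=1$. Applying this block-encoding to $\ket{b}$ and post-selecting on the ancilla being in $\ket{0^{m_D'}}$ succeeds with probability $(\xi/\alpha_D')^2=\Theta(1)$, so a constant number of repetitions (or $\Or(1)$ rounds of fixed-point amplitude amplification) boost the success probability above $1/2$. The overall query counts to $O_D,O_D^{\dag},V,V^{\dag},U_b$, and the corresponding depth measured in calls to these oracles, are therefore all $\Or(1)$, independent of $h$ and of $\kappa(A)$. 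The main obstacle in this whole argument is the lower bound $\xi=\Omega(1)$: this is the only place where smoothness of $b$ is genuinely needed, and without it $\ket{b}$ could concentrate on the high-frequency end of the spectrum that $A^{-1}$ suppresses, driving $\xi$ down toward the pessimistic value $1/\|A\|=\Theta(h^2/d)$ and ruining the $\Or(1)$ repetition counts in both parts.
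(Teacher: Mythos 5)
Your proposal is correct and follows essentially the same route as the paper: record the spectral data $\|A\|=\Theta(dh^{-2})$, $\|A^{-1}\|=1$, argue $\xi=\|A^{-1}\ket{b}\|=\Theta(1)$ from the rapid decay of $\hat b(\vG)$ for smooth $b$ (exactly the reasoning around \cref{eqn:xi_elliptic}), and then combine \cref{thm:qsvt_qlsp} for the QSVT bounds with \cref{prop:fastinverse} plus the $\Theta(1)$ success probability for the fast-inversion bounds. Your explicit cutoff argument for the lower bound on $\xi$ is a slightly more detailed version of the paper's appeal to convergence of the truncated sum, but the content is the same.
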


In the example above, $A$ is a Hermitian matrix. If we replace $A$ by a normal matrix $A'=A-z I$ with $z\in \CC$ so that $|z|\ll \norm{A}$ and $A'$ is invertible, the conclusion still holds. This is the case in the contour integral formulation of computing matrix functions in \cref{sec:fast_algs_matrix_functions}, and in the computation of Green's functions in \cref{sec:many_body_greens_function}.

\begin{rem}
In the context of solving the $d$-dimensional Poisson equation, the fast inversion method is different from the method in \cite{CaoPapageorgiouPetrasEtAl2013} using the HHL algorithm, which employs the Hamiltonian simulation of the form $e^{-\I At}$. This requires $\Omega(\log(1/\epsilon))$ ancilla qubits to store the eigenvalues, and if the Hamiltonian simulation is implemented directly, the circuit depth is $\Omega(\norm{A}t)$. On the other hand, \cref{eqn:fastinv_diagonalize} is based on the direct access of oracles $O_D$ and $U'_{D}$, and the QFT part is decoupled from the inversion part. Neither the circuit depth nor the number of ancilla qubits needed depends on $\norm{A}$ or the accuracy $\epsilon$.
\end{rem}

{
\begin{rem}[Complexity lower bound with respect to $\xi$]
\label{rem:complexity_lower_bound_xi}
As can be seen in the above discussion, when $A$ is fast-invertible solving $A\ket{x}\propto\ket{b}$ can still have a $1/\xi$ dependence, where $\xi=\|A^{-1}\ket{b}\|$, and we assume for simplicity $\|A^{-1}\|=1$.
This dependence is in fact the best we can get. Consider the following simple example constructed from the unstructured search problem with $n$ bits and $N=2^n$: let $U_w$ be the oracle for  the unstructured search problem marking the target element $w$ through
\[
U_w\ket{s}=
\begin{cases}
\ket{s},&\ s\neq w \\
-\ket{s},&\ s=w
\end{cases}
\]
Now we let
\[
A = \frac{\sqrt{N}-1}{2}U_w + \frac{\sqrt{N}+1}{2}I,
\]
and this is a diagonal matrix whose diagonal entries are efficiently computable, and therefore fast-invertible.
Solving the QLSP $A\ket{x}\propto\ket{u}$, where $\ket{u}$ is the uniform superposition of all $n$-bit strings, results in a solution
\[
\ket{x} = \frac{\sqrt{N}}{\sqrt{2N-1}}\ket{w} +  \frac{1}{\sqrt{2N-1}}\sum_{s\neq w}\ket{s}
\]
with 
\[
\xi = \left\|A^{-1}\ket{u}\right\| = \left\|\frac{1}{\sqrt{N}}\ket{w} + \frac{1}{N}\sum_{s\neq w}\ket{s}\right\| = \Theta(N^{-1/2}). 
\]
If the QLSP with this $A$ can be solved with $o(1/\xi)$ queries to $A$, then it means we can obtain $\ket{x}$ with $o(\sqrt{N})$ queries to $U_w$. Measuring all qubits with the state $\ket{x}$ yields $w$ with probability around $1/2$. Therefore we would be able to solve the unstrctured search problem with query complexity $o(\sqrt{N})$, which is impossible. We therefore think the name ``fast-inversion'' for our method to deal with this kind of QLSP is appropriate because it cannot be asymptotically improved without further assumptions.
\end{rem}
}

\section{Preconditioned QLSP solver}
\label{sec:preconditioned_qlsp}
We consider the following linear system \cref{eqn:AplusB}, with a large condition number $\kappa(A+B)$, where $A$ and $B$ are $n$-qubit  matrices. {We are primarily interested in the following scenario: we assume $A+B$ is rescaled so that $\|(A+B)^{-1}\|=\Theta(1)$ and $\|A\|\gg \|B\|, \|(A+B)^{-1}\|, \|A^{-1}\|$ (if needed we may replace $A$ and $B$ with $A-z I$ and $B+z I$ for some $z\in \CC$). The condition number $\kappa:=\kappa(A+B)=\Theta(\|A\|)$.} 
The linear system is therefore ill-conditioned mainly as a result of the large $\|A\|$. 

We make the following assumptions regarding the query access in this problem. We assume we have $U'_{A}$, an $({\alpha'_A},m'_A,0)$-block-encoding of $A^{-1}$ prepared by the fast inversion procedure, and $U_{B}$, an $(\alpha_B,m_B,0)$-block-encoding of $B$.  For simplicity of presentation we assume these given block-encodings are error-free (see \cref{rem:noerror_blockencoding}). The right-hand side $b$ is accessed through a quantum circuit $U_b$, i.e. $\ket{b}=U_b\ket{0^n}$. 

In the algorithm we need multiple ancilla registers, and will refer to the register in which $\ket{b}$ is prepared and $\ket{x}$ is produced as the main register (also called the system register).

\subsection{Preconditioning the linear system}
 It is possible to reduce the condition number of the linear system by considering the following equivalent formulation
\begin{equation}
(I+A^{-1}B) \ket{x}\propto A^{-1}\ket{b}.
\label{eq:precond_linear_sys}
\end{equation}
\cref{lem:cond_precond_linear_sys} explains why this linear system might have a much smaller condition number than the linear system \eqref{eqn:AplusB}. 

\begin{lem}
\label{lem:cond_precond_linear_sys}
Define $W=I+A^{-1}B$, then the smallest singular value $\sigma_{\min}$ and largest singular value $\sigma_{\max}$ of $W$ satisfy
\[
\begin{aligned}
1/\sigma_{\min} &\leq 1+\|(A+B)^{-1}\|\|B\| =: C_{AB} \\
\sigma_{\max} &\leq 1+\|A^{-1}\|\|B\| =: C'_{AB}
\end{aligned}
\]
Hence the condition number of $W$ can be upper bounded as
\begin{equation}
\kappa(W)\leq \left(1+\|(A+B)^{-1}\|\|B\|\right)\left(1+\|A^{-1}\|\|B\|\right) = C_{AB} C'_{AB}.
\label{eqn:kappaW_upperbound}
\end{equation}
\end{lem}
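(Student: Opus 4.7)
The plan is to bound $\sigma_{\max}(W)$ and $1/\sigma_{\min}(W)=\|W^{-1}\|$ separately by elementary operator-norm manipulations, and then obtain the condition number bound by taking the product.

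For $\sigma_{\max}(W)$, I would simply apply the triangle inequality together with the submultiplicativity of the operator norm to $W=I+A^{-1}B$, yielding
\[
\sigma_{\max}(W)=\|W\|\leq \|I\|+\|A^{-1}\|\,\|B\|=1+\|A^{-1}\|\,\|B\|=C'_{AB}.
\]
This is immediate and does not require any structural observation about $A$ and $B$.

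The nontrivial step is the bound on $\|W^{-1}\|$, because the naive approach of writing $W^{-1}=(A+B)^{-1}A$ would introduce the factor $\|A\|$, which is exactly the large quantity we are trying to avoid. The key algebraic trick I would use is to factor $W=A^{-1}(A+B)$ so that $W^{-1}=(A+B)^{-1}A$, and then rewrite $A=(A+B)-B$ to obtain
\[
W^{-1}=(A+B)^{-1}\bigl((A+B)-B\bigr)=I-(A+B)^{-1}B.
\]
This representation is the crux of the lemma: the large operator $A$ is absorbed into the identity, leaving behind only the correction term $(A+B)^{-1}B$ whose norm is controlled by the assumed small quantities. Applying triangle inequality and submultiplicativity then gives
\[
\|W^{-1}\|\leq 1+\|(A+B)^{-1}\|\,\|B\|=C_{AB}.
\]

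Finally, the condition number bound $\kappa(W)=\sigma_{\max}(W)/\sigma_{\min}(W)\leq C_{AB}C'_{AB}$ follows immediately from combining the two bounds. The only subtle point worth double-checking in the write-up is that $W$ is indeed invertible under the stated hypotheses (which is guaranteed because both $A$ and $A+B$ are assumed invertible, making the factorization $W=A^{-1}(A+B)$ a product of invertible matrices). There is no real obstacle here beyond spotting the factorization trick; once that is in place, everything reduces to a two-line triangle-inequality argument.
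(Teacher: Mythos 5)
Your proof is correct and is essentially the paper's argument in operator form: the explicit identity $W^{-1}=(A+B)^{-1}A=I-(A+B)^{-1}B$ is exactly what the paper derives implicitly by manipulating $W\ket{x}=\ket{y}$ into $(A+B)(\ket{x}-\ket{y})=-B\ket{y}$ and $A(\ket{x}-\ket{y})=-B\ket{x}$, and your $\sigma_{\max}$ bound is the same triangle-inequality step. Nothing is missing; the invertibility remark via $W=A^{-1}(A+B)$ is a nice touch.
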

\begin{proof}
Let $W \ket{x}=\ket{y}$. Then we have
\[
(A+B)\ket{x}=A\ket{y},
\]
therefore
\[
\begin{aligned}
(A+B)(\ket{x}-\ket{y}) &= -B\ket{y} \\
A(\ket{x}-\ket{y}) &= -B \ket{x},
\end{aligned}
\]
and these two equalities lead to
\[
\begin{aligned}
\|\ket{x}\| &\leq \|\ket{y}\|+\|\ket{x}-\ket{y}\| \leq (1+\|(A+B)^{-1}\|\|B\|)\|\ket{y}\|  \\
\|\ket{y}\| &\leq \|\ket{x}\|+\|\ket{x}-\ket{y}\| \leq (1+\|A^{-1}\|\|B\|)\|\ket{x}\|.
\end{aligned}
\]
These two inequalities then give a lower bound for the smallest singular value, and an upper bound for the largest singular value, as stated in the lemma. 
\end{proof}

\begin{figure}
    \centering
    \includegraphics[width=0.5\textwidth]{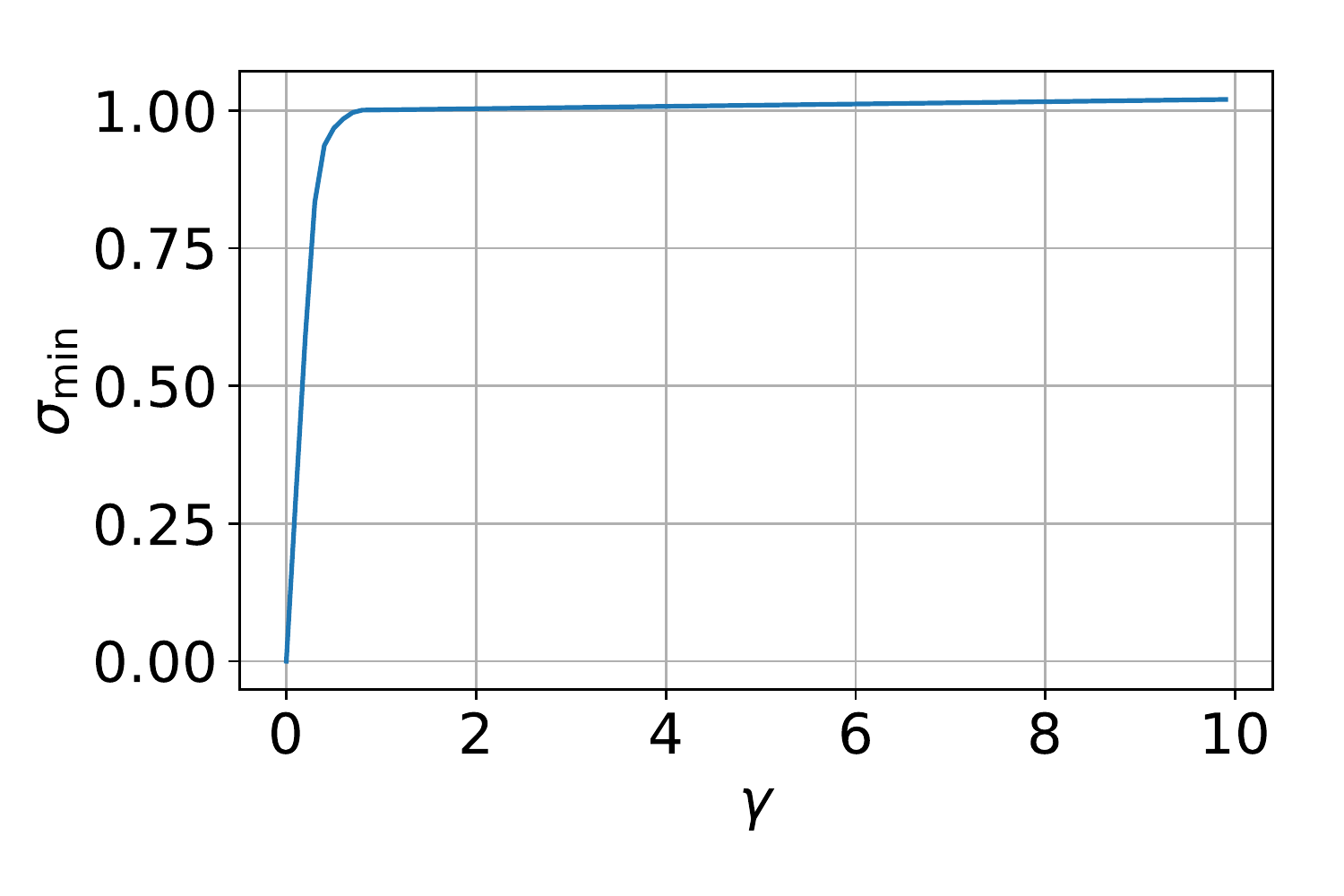}
    \caption{{The smallest singular value of $W=I+A^{-1}B$ where $A$ and $B$ come from discretizing the differential operator $-\Delta + \gamma (3+\cos(5x))$ with periodic boundary condition in 1D. $A=-\Delta_h + I$ where $\Delta_h$ is the discrete Laplacian operator, and $B=\gamma V-I$ where $V$ is the diagonal matrix whose diagonal elements contain the potential.}}
    \label{fig:sigma_min_W}
\end{figure}

\begin{rem}\label{rem:upperbound_kappaW}
The upper bound of $\kappa(W)$ does not depend on $\|A\|$ which we assume to be the main reason why the linear system \eqref{eqn:AplusB} is ill-conditioned. For a given pair of $A$ and $B$ 
we can always rescale $A$ and $B$, and possibly shifting by a multiple of identity, i.e. consider instead $A-\mu I$ and $B+\mu I$, so that the smallest singular values of $A+B$ and $A$ are $\Omega(1)$.  {Eq.~\eqref{eqn:kappaW_upperbound}} then gives us a bound for the condition number of $W$ that is independent of $\|A\|$. 
{When $\|B\|^2\ll A$, we have $\kappa(W)\ll \kappa(A+B)$.}

{Our bound of $1/\sigma_{\min}$ scales linearly with respect to $\|B\|$. So $\kappa(W)$ may scale quadratically with respect to $\|B\|$, leading to an undesirable polynomial dependence on the block-encoding subnormalization factor of $B$ in later applications. However, such estimate can be overly pessimistic in practice. In Figure~\ref{fig:sigma_min_W} we plot the smallest singular value of the matrix $W$ corresponding to discretizing a differential operator $-\Delta + \gamma (3+\cos(5x))$ with periodic boundary condition in 1D. In this example $\|A\|$ dominates because the Laplacian operator is unbounded in $L^2$ norm. We can see that instead of going to $0$ as $\gamma$ increases, $\sigma_{\min}$ actually increases. Therefore $1/\sigma_{\min}$ in this example can be bounded by a constant that does not depend on $\|B\|$ for $\gamma$ bounded away from 0.}
\end{rem}

In Sections~\ref{sec:greens_function} and \ref{sec:preconditioned_greens_function_solver}, we will see this procedure can be applied to many linear systems of practical interest. Next we consider how to construct a quantum circuit to solve the preconditioned linear system (\ref{eq:precond_linear_sys}).

\subsection{Quantum circuit construction}
\label{sec:quantum_circ_qlsp}

We want to construct a quantum circuit to block-encode $W^{-1}$ and
thereby solve the preconditioned linear system. This is done by using
QSVT (see \cref{sec:qsp_qlsp}).  In the following, we first construct a block-encoding for $W$. We then use QSVT to obtain a block-encoding of $W^{-1}$ to solve the linear system.

For the first step, since we can first apply the multiplication of block-encodings
\cite[Lemma 30]{GilyenSuLowEtAl2018} to obtain an $({\alpha'_A}\alpha_B,m'_A+m_B,0)$-block-encoding of $A^{-1}B$, then apply the linear combination of block-encodings \cite[Lemma 29]{GilyenSuLowEtAl2019} to obtain an $({\alpha'_A}\alpha_B+1,m'_A+m_B+1,0)$-block-encoding of $W=I+A^{-1}B$, which we denote by $U_{W}$. 



Following \cite[Corollary 69]{GilyenSuLowEtAl2018}, we may construct an odd polynomial $P(x)$ of degree $\Or(\frac{1}{\delta}\log(\frac{1}{\epsilon'}))$ that satisfies
\[
\left|P(x)-\frac{3\delta}{4x}\right|\leq\epsilon',\quad x\in[-1,-\delta]\cup[\delta,1].
\]
{Suppose the  smallest singular value of ${W}$ is lower bounded by $\wt{\sigma}_{\min}$. By Lemma~\ref{lem:cond_precond_linear_sys} we can choose $\wt{\sigma}_{\min}=1/C_{AB}$.} We will apply the polynomial to implement $(P^{\diamond}(W/({\alpha'_A}\alpha_B+1)))^\dagger$, where $P^\diamond(\cdot)$ denotes the generalized matrix function as defined in \cref{def:gen_matrix_function}. We intend to use this to approximate $(W/({\alpha'_A}\alpha_B+1))^{-1}$. Therefore we need to ensure the singular values of $W/({\alpha'_A}\alpha_B+1)$ lie in the interval $[\delta,1]$.  Recall that a lower bound of the smallest singular value of $W$ is {$\wt{\sigma}_{\min}$}.
For this purpose we choose 
\[
\delta={\wt{\sigma}_{\min}}/({\alpha'_A}\alpha_B+1).
\]

With this odd polynomial $P(x)$, we use QSVT \cite[Corollary 11]{GilyenSuLowEtAl2019} to construct a circuit to block-encode the matrix obtained by applying $P(x)$ to singular values of $W/({\alpha'_A}\alpha_B+1)$, using only one extra ancilla qubit (see \cref{sec:qsp_qlsp} and \cref{fig:qsp_circuit_real} for the circuit). This is a $(1,m'_A+m_B+2,0)$-block-encoding of $P^{\diamond}(W/({\alpha'_A}\alpha_B+1))$. 
We denote the Hermitian conjugate  (see \cref{eqn:ainv_qsvt} in \cref{sec:qsp_qlsp}) of the block-encoding constructed in this way by $U'_W$. $U'_W$ is therefore a $\left(\frac{4}{3{\wt{\sigma}_{\min}}},m'_A+m_B+2,\epsilon''\right)$-block-encoding of ${W}^{-1}$, where 
\[
\epsilon''=\frac{4\epsilon'}{3{\wt{\sigma}_{\min}}}.
\]
In other words
\[
\Big\|\frac{4}{3{\wt{\sigma}_{\min}}}(\bra{0^l}\otimes I)U'_W(\ket{0^l}\ket{y})-W^{-1}\ket{y}\Big\|\leq \epsilon'',
\]
where the first register contains $l=m'_A+m_B+2$ ancilla qubits, and the second register is the main register. 


From the above analysis we have obtained a block-encoding of $W^{-1}$. Note that $(A+B)^{-1}=W^{-1}A^{-1}$ is a product of two block-encoded matrices. Therefore by the multiplication of block-encodings \cite[Lemma 30]{GilyenSuLowEtAl2019} we have a $\left(\frac{4{\alpha'_A}}{3{\wt{\sigma}_{\min}}},2m'_A+m_B+3,{\alpha'_A}\epsilon''\right)$-block-encoding of $(A+B)^{-1}$.  This only uses $U'_A$ one extra time. 
We want the block-encoding error to be ${\alpha'_A}\epsilon''=\delta'$, then we need to choose
\begin{equation}
\label{eq:choose_epsilon'}
    \epsilon'=\frac{3\delta'{\wt{\sigma}_{\min}}}{4\alpha'_A }
\end{equation}
and the polynomial degree can then be expressed with respect to $\delta'$ as 
\[
\Or\left(\frac{1}{\delta}\log\left(\frac{1}{\epsilon'}\right)\right)=\Or\left(\frac{\alpha'_A\alpha_B}{{\wt{\sigma}_{\min}}}\log\left(\frac{\alpha'_A}{\delta'{\wt{\sigma}_{\min}}}\right)\right).
\]
The cost of applying QSVT scales linearly with the polynomial degree.
We can then summarize the result in the following theorem: 
\begin{thm}[Block-encoding of preconditioned matrix inverse]
\label{thm:precond_linear_sys}
Let $U'_A$ be an $({\alpha'_A},m_{A}',0)$-block-encoding of $A^{-1}$ implemented via fast inversion, 
$U_B$ be an $(\alpha_B,m_B,0)$-block-encoding of $B$. 
{Let $\wt{\sigma}_{\min}$ be a lower bound for the smallest singular value of $I+A^{-1}B$, which can be chosen to be $1/(1+\|(A+B)^{-1}\|\|B\|)$ as discussed in Lemma~\ref{lem:cond_precond_linear_sys}.}
Then for any $\delta'>0$ there exists a ${\left(\frac{4\alpha'_A}{3\wt{\sigma}_{\min}},2m'_A+m_B+3,\delta'\right)}$-block-encoding of $(A+B)^{-1}$ using ${\Or\left(\frac{\alpha'_A\alpha_B}{\wt{\sigma}_{\min}}\log\left(\frac{\alpha'_A}{\delta'\wt{\sigma}_{\min}}\right)\right)}$ applications of $U'_A$, $U_B$, their controlled versions, their inverses, and other primitive gates. 
\end{thm}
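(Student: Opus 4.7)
The plan is to follow the two-stage composition strategy already hinted at in the paragraphs preceding the theorem: first assemble a block-encoding of the preconditioned matrix $W = I + A^{-1}B$ from the given primitives, then invert $W$ via QSVT, and finally left-multiply $(A+B)^{-1} = W^{-1}A^{-1}$ by composing with $U'_A$ one more time. Each step is a direct invocation of a standard subroutine from \cite{GilyenSuLowEtAl2019}, so the bulk of the work is correctly bookkeeping subnormalization factors, ancilla counts, and error propagation.

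First, I would apply the multiplication lemma for block-encoded matrices to $U'_A$ and $U_B$ to obtain an $(\alpha'_A\alpha_B, m'_A + m_B, 0)$-block-encoding of $A^{-1}B$, and then use the LCU-style linear combination lemma together with a trivial block-encoding of $I$ to produce an $(\alpha'_A\alpha_B + 1, m'_A + m_B + 1, 0)$-block-encoding $U_W$ of $W$. Next, to build $W^{-1}$, I would invoke the polynomial approximation of $1/x$ on $[-1,-\delta]\cup[\delta,1]$ from \cite[Corollary 69]{GilyenSuLowEtAl2018}, choosing $\delta = \wt{\sigma}_{\min}/(\alpha'_A\alpha_B+1)$ so that the singular values of $W/(\alpha'_A\alpha_B+1)$ all lie in $[\delta,1]$ by the hypothesis on $\wt{\sigma}_{\min}$ (and \cref{lem:cond_precond_linear_sys}). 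The approximating odd polynomial $P$ has degree $\Or(\delta^{-1}\log(1/\epsilon'))$; feeding it into the QSVT construction and taking the Hermitian conjugate yields a $\bigl(\tfrac{4}{3\wt{\sigma}_{\min}},\, m'_A + m_B + 2,\, \tfrac{4\epsilon'}{3\wt{\sigma}_{\min}}\bigr)$-block-encoding of $W^{-1}$ using only one additional ancilla qubit.

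Finally, I would compose this block-encoding of $W^{-1}$ with $U'_A$ using the multiplication lemma once more, giving a block-encoding of $(A+B)^{-1}$ with subnormalization $\tfrac{4\alpha'_A}{3\wt{\sigma}_{\min}}$, ancilla count $2m'_A + m_B + 3$, and error $\alpha'_A \cdot \tfrac{4\epsilon'}{3\wt{\sigma}_{\min}}$ (the other cross term vanishes since $U'_A$ is error-free). Setting this error equal to the target $\delta'$ forces $\epsilon' = 3\delta'\wt{\sigma}_{\min}/(4\alpha'_A)$, and substituting back into the polynomial degree yields the stated $\Or\!\bigl(\tfrac{\alpha'_A\alpha_B}{\wt{\sigma}_{\min}}\log\tfrac{\alpha'_A}{\delta'\wt{\sigma}_{\min}}\bigr)$ query count. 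The only part that requires mild care is verifying that the QSVT polynomial is applied to the \emph{singular values} of $W$ (not its eigenvalues, since $W$ need not be Hermitian), which is why we use the generalized matrix function $P^\diamond$ and take the Hermitian conjugate of the resulting block-encoding to land on $W^{-1}$ rather than $W^{-\dagger}$; beyond this conceptual check, the proof is essentially a composition of off-the-shelf lemmas with careful accounting of the scaling parameters.
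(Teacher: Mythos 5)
Your proposal is correct and follows essentially the same route as the paper's own construction: multiply $U'_A$ and $U_B$ and add the identity to block-encode $W=I+A^{-1}B$, invert $W$ via the QSVT implementation of the odd polynomial approximating $3\delta/(4x)$ with $\delta=\wt{\sigma}_{\min}/(\alpha'_A\alpha_B+1)$ (taking the Hermitian conjugate of the singular-value transform), then compose with $U'_A$ once more and set $\epsilon'=3\delta'\wt{\sigma}_{\min}/(4\alpha'_A)$ to meet the target error. The subnormalization factors, ancilla counts, error propagation, and resulting query complexity all match the paper's argument.
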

There are many parameters involved in the above discussion which can be confusing to readers. Here we briefly summarize their relations. The complexity depends directly on two parameters: $\delta$, which is how far the singular values of $W/(\alpha'_A\alpha_B+1)$ are bounded away from $0$, and $\epsilon'$, which is the error of polynomial approximation. In the block-encoding of $W^{-1}$, the polynomial approximation error is amplified into the block-encoding error $\epsilon''$, which is then amplified into the final block-encoding error $\delta'$ through the multiplication of two block-encoded matrix. We assume $\delta'$ is chosen a priori and therefore it requires us to choose $\epsilon'$ according to Eq.~\eqref{eq:choose_epsilon'}.


Next we want to solve the linear system $\ket{x}\propto(A+B)^{-1}\ket{b}$ where $\ket{x}$ is the normalized solution to the linear system.
We denote the block-encoding of $(A+B)^{-1}$ in the above theorem by $\mathcal{U}$. We denote
\[
\ket{w}=\frac{4\alpha'_A}{3{\wt{\sigma}_{\min}}}(\bra{0^{l'}}\otimes I)\mathcal{U}(\ket{0^{l'}}\otimes \ket{b}),
\]
where $l'=2m'_A+m_B+3$. Therefore $\ket{w}/\|\ket{w}\|$ is the output state after applying the block-encoding. Also we denote $\ket{y}=(A+B)^{-1}\ket{b}$, $\zeta=\|\ket{w}\|$, and $\xi=\|\ket{y}\|$.
Thus the normalized solution $\ket{x}=\ket{y}/\|\ket{y}\|$. Then we have
\[
\|\ket{w}-\ket{y}\|\leq\delta',\quad |\zeta-\xi|\leq\delta'.
\]
This leads to 
\[
\left\|\frac{\ket{w}}{\zeta}-\frac{\ket{y}}{\xi}\right\| \leq \frac{\zeta \|\ket{w}-\ket{y}\|+|\zeta-\xi|\|\ket{w}\|}{\zeta\xi}\leq \frac{2\delta'}{\xi}.
\]
Therefore in order to make sure the output normalized quantum state $\ket{w}/\|\ket{w}\|$ is $\epsilon$-close to $\ket{x}$ in terms of 2-norm distance, we need $\delta'=\xi\epsilon/2$. The success probability is
\[
\left\|(\bra{0^{l'}}\otimes I)\mathcal{U}(\ket{0^{l'}}\otimes \ket{b})\right\|^2=\frac{9\|\ket{w}\|^2{\wt{\sigma}_{\min}^2}}{16{\alpha'_A}^2}\geq \frac{9\xi^2(1-\epsilon/2)^2{\wt{\sigma}_{\min}^2}}{16{\alpha'_A}^2}.
\]
We can boost the success probability to be greater than $1/2$ by amplitude amplification, using 
$\Or(\alpha'_A/\xi{\wt{\sigma}_{\min}})$ repetitions.
Therefore we have the following corollary:
\begin{cor}[Preconditioned linear system solver]\label{cor:precond_linear_sys_solver}
Under the same assumptions as Theorem~\ref{thm:precond_linear_sys}, for the QLSP \eqref{eqn:AplusB},
an $\epsilon$-close solution vector can be obtained with $\Or\left(\frac{ {\alpha'_A}^2\alpha_B }{\xi{\wt{\sigma}_{\min}^2}}\log\left(\frac{{\alpha'_A}}{{\wt{\sigma}_{\min}}\xi\epsilon}\right)\right)$ applications of $U'_A$, $U_B$, their controlled versions, their inverses, and other primitive gates, in addition to $\Or\left(\frac{{\alpha'_A}}{{\wt{\sigma}_{\min}}\xi}\right)$ applications of $U_b$ and its inverse 
, where $\xi=\|(A+B)^{-1}\ket{b}\|$. {As in Theorem~\ref{thm:precond_linear_sys}, $\wt{\sigma}_{\min}$ can be chosen to be $1/(1+\|(A+B)^{-1}\|\|B\|)$}.
\end{cor}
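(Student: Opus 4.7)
The plan is to package the block-encoding of $(A+B)^{-1}$ provided by Theorem~\ref{thm:precond_linear_sys} with a post-selection step followed by amplitude amplification, and then propagate errors carefully so that the approximation error of the normalized solution is at most $\epsilon$. Denote by $\mathcal{U}$ the block-encoding guaranteed by Theorem~\ref{thm:precond_linear_sys} with some target block-encoding error $\delta'$ to be chosen later, acting on $l' = 2m'_A + m_B + 3$ ancilla qubits. Applying $\mathcal{U}$ to $\ket{0^{l'}}\otimes\ket{b}$ and defining
\[
\ket{w} = \frac{4\alpha'_A}{3\wt{\sigma}_{\min}}(\bra{0^{l'}}\otimes I)\mathcal{U}(\ket{0^{l'}}\otimes\ket{b}),
\]
the block-encoding guarantee gives $\|\ket{w} - \ket{y}\| \le \delta'$ where $\ket{y} = (A+B)^{-1}\ket{b}$.

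Next I would do the error propagation from the (un-normalized) approximation $\ket{w}$ to the normalized target state $\ket{x} = \ket{y}/\xi$. Writing $\zeta = \|\ket{w}\|$, the reverse triangle inequality gives $|\zeta - \xi| \le \delta'$, and a standard normalization estimate yields
\[
\left\| \frac{\ket{w}}{\zeta} - \frac{\ket{y}}{\xi} \right\| \le \frac{\zeta\|\ket{w}-\ket{y}\| + |\zeta - \xi|\,\|\ket{w}\|}{\zeta\xi} \le \frac{2\delta'}{\xi}.
\]
Choosing $\delta' = \xi\epsilon/2$ therefore ensures the post-selected normalized state is $\epsilon$-close to $\ket{x}$ in 2-norm.

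The remaining step is to convert the post-selection success probability into a query count via amplitude amplification. The probability of observing the ancilla in $\ket{0^{l'}}$ equals
\[
\left\| (\bra{0^{l'}} \otimes I) \mathcal{U} (\ket{0^{l'}} \otimes \ket{b}) \right\|^2 = \frac{9\,\zeta^2\,\wt{\sigma}^2_{\min}}{16\,{\alpha'_A}^2} \ge \frac{9\,\xi^2 (1-\epsilon/2)^2\,\wt{\sigma}^2_{\min}}{16\,{\alpha'_A}^2},
\]
so fixed-point amplitude amplification boosts the success probability above $1/2$ using $\Or(\alpha'_A/(\xi\wt{\sigma}_{\min}))$ repetitions of $\mathcal{U}$, $\mathcal{U}^\dagger$, and the state-preparation oracle $U_b$ together with its inverse. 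Multiplying the per-repetition cost from Theorem~\ref{thm:precond_linear_sys}, namely $\Or\!\big(\tfrac{\alpha'_A\alpha_B}{\wt{\sigma}_{\min}}\log(\alpha'_A/(\delta'\wt{\sigma}_{\min}))\big)$, by the number of amplification rounds, and substituting $\delta' = \xi\epsilon/2$ in the logarithm, produces the advertised query counts to $U'_A$ and $U_B$; the queries to $U_b$ scale only with the amplification factor, giving the bound $\Or(\alpha'_A/(\xi\wt{\sigma}_{\min}))$.

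The only genuinely delicate piece here is the normalization error estimate, since naively one might worry that a small additive error $\delta'$ in $\ket{w}$ could blow up after dividing by a small norm $\zeta$; the inequality above shows this blow-up is controlled precisely by $1/\xi$, which is why the appropriate choice is $\delta' = \Theta(\xi\epsilon)$ and why $\xi$ appears explicitly in the final complexity. Everything else is a routine composition of Theorem~\ref{thm:precond_linear_sys} with standard amplitude amplification bookkeeping.
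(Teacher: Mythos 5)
Your proposal is correct and follows essentially the same route as the paper's own argument: apply the block-encoding from Theorem~\ref{thm:precond_linear_sys} to $\ket{0^{l'}}\otimes\ket{b}$, use the identical normalization-error estimate to justify the choice $\delta'=\xi\epsilon/2$, bound the post-selection probability by $\Omega(\xi^2\wt{\sigma}_{\min}^2/{\alpha'_A}^2)$, and amplify with $\Or(\alpha'_A/(\xi\wt{\sigma}_{\min}))$ rounds before multiplying by the per-round cost. The only cosmetic difference is your mention of fixed-point amplitude amplification where the paper invokes standard amplitude amplification; the bookkeeping and resulting query counts are the same.
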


Below we compare the dependence on the condition number of our preconditioning method against the dependence of directly using QSVT. 
{Let us consider the scenario we proposed at the beginning of Section~\ref{sec:preconditioned_qlsp}. We assume a rescaling is applied to $A+B$ so that $\|(A+B)^{-1}\|=\Theta(1)$, $\|A\|\to\infty$, and  $\|A^{-1}\|,\|B\|=\Or(1)$. As discussed before if $\|A^{-1}\|$ is large we can always replace it with $A-zI$ for some $z\in\CC$ that is away from the spectrum of $A$. Furthermore, we assume $\alpha'_A$ and $\alpha_B$ are not much larger than $\|A^{-1}\|$ and $\|B\|$, i.e. $\alpha'_A, \alpha_B=\Or(1)$.}
We also assume $\epsilon=\Omega(1)$ so that we do not need to consider the dependence on $\epsilon$. This is because both methods have a logarithmic dependence on $1/\epsilon$ and are therefore similar in this aspect.

{Under these assumptions we have}
\begin{equation}
\label{eq:kappa_A+B}
    \kappa(A+B)=\|A+B\|\|(A+B)^{-1}\|=\Theta(\|A+B\|){=\Theta(\|A\|)}.
\end{equation}
{The smallest singular value of $W=I+A^{-1}B$ is lower bounded by
\[
\wt{\sigma}_{\min} = 1/(1+\|(A+B)^{-1}\|\|B\|) = \Omega(1).
\]}From Corollary~\ref{cor:precond_linear_sys_solver}, we can see that the number of queries to all oracles become $\Or(\frac{1}{\xi}\log(\frac{1}{\xi}))$. This is {no longer directly dependent on $\kappa(A+B)$, though such dependence can exist indirectly through the dependence of $\xi$ on $\kappa(A+B)$.}
We consider the following two cases
\begin{enumerate}
    \item In the worst case, the following inequality becomes an equality: $\xi=\|(A+B)^{-1}\ket{b}\|\geq \|A+B\|^{-1}=\Omega(1/\kappa(A+B))$ by Eq.~\eqref{eq:kappa_A+B}. Therefore we have $\Or(\kappa(A+B)\log(\kappa(A+B)))$ query complexity for $U'_A$, $U_B$, and $\Or(\kappa(A+B))$ query complexity for $U_b$. 
    \item In the best case, $\ket{b}$ has {$\Omega(1)$} overlap  with the left singular vectors of $A+B$ corresponding to small singular values, and therefore $\xi$ can be as large as $\Omega(1)$, giving us a query complexity of $\Or(1)$ for all oracles. For a concrete example of this scenario see  \cref{prop:cost_elliptic}.
\end{enumerate}

In both cases we can compare with direct application of QSVT as discussed in Appendix~\ref{sec:qsp_qlsp}. The worst and best scenarios are discussed for QSVT without preconditioning on Page~\pageref{enum:two_cases} {in Appendix~\ref{sec:qsp_qlsp}}. In the worst case, under the assumption that the block-encoding of $A+B$, {denoted by $U_{A+B}$}, entails a subnormalization factor $\|A+B\|$, direct application of QSVT will need to query $U_{A+B}$ and its inverse $\Or(\kappa(A+B)^2\log(\kappa(A+B)))$ times, and $U_b$ and its inverse $\Or(\kappa(A+B))$ times. The preconditioning method results in a quadratic improvement. In the best case, {i.e. $\ket{b}$ has {$\Omega(1)$} overlap with the left singular vectors of $A+B$ corresponding to small singular values,} under the same assumption, 
{the number of queries to $U_{A+B}$ is $\Or(\kappa(A+B))$ and the number of queries to $U_b$ is $\Or(1)$. This improvement comes from the fact that the large overlap makes the success probability of applying the QSVT bounded away from zero by a constant.}
The improvement of using preconditioning in this case is more significant, as we can dispense with this linear dependence on the condition number altogether.



{
We remark that the speedup in our method does not come from a reduced effective condition number. Consider the following simple example: let $\ket{u_{\min}}$ and $\ket{u_{\max}}$ be the left-singular vectors of $A+B$ associated with the smallest and largest singular values respectively. Then for the QLSP $(A+B)\ket{x}\propto \ket{b}$ where $\ket{b}=\frac{1}{\sqrt{2}}(\ket{u_{\min}}+\ket{u_{\max}})$, we have $\xi=\Theta(1)$, and in the scenario discussed above solving the QLSP only requires $\Or(1)$ queries to all oracles. The effective condition number of this problem is however the same as the condition number of $A+B$, which is $\Theta(\|A\|)$.
}

\begin{rem}\label{rem:scale_alphaB}
The above procedure assumes we know the constants such as $C_{AB}=1+\|(A+B)\|^{-1}\|B\|$ and $\xi$. The algorithm still works if upper bounds to these constants are known.  In \cref{thm:precond_linear_sys} and \cref{cor:precond_linear_sys_solver}, 
a superlinear dependence on the block-encoding factor $\alpha_B$ (or alternatively the matrix norm $\|B\|$) will arise if we use the bound in Lemma~\ref{lem:cond_precond_linear_sys}, letting $\wt{\sigma}_{\min}=1/C_{AB}$.
However, according to the discussion in \cref{rem:upperbound_kappaW}, it is possible that ${\wt{\sigma}_{\min}}$ 
can be independent of $\alpha_B$. In this case, the preconditioned linear system solver can scale linearly with respect to $\alpha_B$.  
\end{rem}

\section{Evaluating Green's functions of quantum many-body systems}
\label{sec:many_body_greens_function}

We first give a short introduction of the representation of fermionic systems.  We consider here a second quantized representation wherein each state is referred to as an orbital.  The Pauli exclusion principle forbids two electrons from being in the same spin state and spatial state simultaneously.  Since spin for an electron can either be up or down, there are four possible occupation states for each orbital.  This means that two qubits are needed to represent a given configuration of an orbital.  For example, the qubit states $\ket{00}, \ket{01}, \ket{10}, \ket{11}$ are taken to represent an orbital containing no electrons, no spin down and one spin up, one spin down and no spin up and one spin up and down electron respectively.

Since an orbital is naturally expressed as a pair of qubits in quantum computing, it is natural to further divide an orbital into two spin-orbitals which correspond to both the quantum states for both the spin and spatial degrees of freedom.  In this notation, we can describe the occupation and dynamics of a set of spin orbitals using creation and annihilation operators such that a spin orbital corresponding to spin state $\sigma$ and spatial orbital $\nu$ is given by the state $\hat a_{\nu,\sigma}^\dagger \ket{0}_{\nu,\sigma}=\ket{1}_{\nu,\sigma}$ where $\hat a^\dagger_{\nu,\sigma}$ is a fermionic (anti-commuting) creation operator acting on the spin orbital.  Similarly $\hat a^\dagger_{\nu,\sigma}\ket{1}_{\nu,\sigma} =0$.  This leads us to the number operator $\hat n_{\nu,\sigma} = \hat a^\dagger_{\nu,\sigma} \hat a_{\nu,\sigma}$, which has the property that $\hat n_{\nu,\sigma} \hat a^\dagger_{\nu,\sigma}\ket{0}_{\nu,\sigma}=\hat a^\dagger_{\nu,\sigma}\ket{0}_{\nu,\sigma}$ and $\hat n_{\nu,\sigma} \ket{0}_{\nu,\sigma}=0$.  Thus this operator is often called a number operator because it counts the number of electrons in a spin orbital. 

We may also use a single index $i$ to represent the multi-index $(\nu,\sigma)$. Then the creation and annihilation operators $\hat{a}^{\dag}_i,\hat{a}_i$ can be expressed using the Pauli operator via e.g. the Jordan-Wigner transform \cite{NegeleOrland1988} as
\begin{equation}
\hat{a}_{i}=Z^{\otimes (i-1)}\otimes\frac12(X+\I Y)\otimes I^{\otimes (N-i)}, \quad \hat{a}^{\dagger}_i=Z^{\otimes (i-1)}\otimes\frac12(X-\I Y)\otimes I^{\otimes (N-i)},
\label{eqn:jordan_wigner}
\end{equation}
Correspondingly the number operator can be represented as
\begin{equation}
    \hat{n}_{i}=\frac12 (I-Z_{i}).
    \label{eq:ndecomp}
\end{equation}
Here $X,Y,Z,I$ are single-qubit Pauli-matrices.  Note that \cref{eqn:jordan_wigner,eq:ndecomp} naturally provide a $(1,1,0)$-block-encodings of $\hat{a}_i, \hat{a}^\dagger_i,\hat n_i$.

As a practical application of the preconditioned linear system solver, we consider the evaluation of the single-particle Green's function in quantum many-body systems. The fermionic Hamiltonian (in the spin-orbital formulation \cite{NegeleOrland1988}) can be naturally separated into the sum of two terms as
\[
\hat{H}=\hat{H}_0+\hat{H}_1.
\]
Here $\hat{H}_0,\hat{H}_1$ are the non-interacting part and interacting part of the Hamiltonian, respectively:
\begin{equation}
\hat{H}_0=\sum_{ij=1}^{N} T_{ij} \hat{a}_{i}^{\dagger} \hat{a}_{j}, \quad \hat{H}_1=\sum_{ijkl=1}^{N} V_{ijkl} \hat{a}_i^{\dagger}\hat{a}_j^{\dagger}\hat{a}_l \hat{a}_k.
\label{eqn:fermion_hamiltonian}
\end{equation}
In this section, $N$ is the number of spin orbitals used to discretize the continuous Hamiltonian, and the dimension of the Hamiltonian matrix $\hat{H}$ is $2^N$.

We denote by $\ket{\Psi_0}$ the ground state of $\hat{H}$ with $N_e$ electrons ($N_e\le 2N$), and $E_0$ is the corresponding ground state energy. We assume that the ground state $\ket{\Psi_0}$ can be prepared by an oracle with error $\varsigma$ and success probability at least $p$, and $E_0$ is known to some error $\varsigma'$. We will provide examples of the realization of $\hat{H}_0$, $\hat{H}_1$ in \cref{sec:example_manybody}. 

\begin{rem}[Complexity of solving the ground state problem]
The above assumptions are, in general, unlikely to be satisfied for generic fermionic systems.  This assumption is of course very strong because if it were true in general then $\QMA \subseteq \BQP$, which is widely believed to be false. Even the problem of deciding whether the ground state energy is above or below a threshold (within a fixed promise gap) is known to be in \QMA-hard \cite{KitaevShenVyalyi2002, KempeKitaevRegev2006, OliveiraTerhal2005, AharonovGottesmanEtAl2009}. Nonetheless, it is reasonable to make these assumptions for systems where an ansatz can be constructed that has polynomial overlap with the ground state, and where the spectral gap of the Hamiltonian can be bounded from below. These are the assumptions made in, e.g. \cite{ge2019faster,LinTong2020groundstate}, and is believed to occur for a wide range of realistic systems in physics and chemistry.
\label{rem:groundstate}
\end{rem}


\subsection{Single-particle Green's function}
\label{sec:greens_function}

The single-particle Green's function is a matrix valued function (formally mapping $\mathbb{C} \mapsto \mathbb{C}^{N\times N}$ matrix) that we denote $G(z)$.  Here the input $z$ can often be interpreted to be an energy shift and $G(z)$ is defined provided $E_0 + z$ is not an eigenvalue of $H$.  Also note that the dimension of the underlying Hilbert space for the problem is $2^{N}$, the matrix is relatively small compared to the dimension of the Hamiltonian.  
We first define the advanced and retarded Green's function (denoted by $G^{(+)}(z)$ and $G^{(-)}(z)$, respectively) as
\begin{equation}
\label{eq:greens_function_def}
\begin{aligned}
G_{i j}^{(+)}(z)&:=\left\langle\Psi_{0}\left|\hat{a}_{i}\left(z-\left[\hat{H}-E_{0}\right]\right)^{-1} \hat{a}_{j}^{\dagger}\right| \Psi_{0}\right\rangle \\
G_{i j}^{(-)}(z)&:=\left\langle\Psi_{0}\left|\hat{a}_{j}^{\dagger}\left(z+\left[\hat{H}-E_{0}\right]\right)^{-1} \hat{a}_{i}\right| \Psi_{0}\right\rangle.
\end{aligned}
\end{equation}
Then the time-ordered single-particle Green's function, or Green's function for short, is the sum of the two components
\begin{displaymath}
G(z)=G^{(+)}(z)+G^{(-)}(z).
\end{displaymath}
We assume $|\Im(z)|\geq\eta$. The value of $\eta$ is often referred to as the broadening parameter, and determines the resolution of Green's functions along the energy spectrum.

Below we demonstrate a procedure that allows us to directly compute $G_{ij}^{(\pm)}(z)$. 
 Now suppose we have an $(\alpha^{(+)},m^{(+)},\epsilon^{(+)})$-block-encoding of the matrix inverse $\left(z-\left[\hat{H}-E_{0}\right]\right)^{-1}$ denoted by $U^{(+)}$, then using the Jordan-Wigner transformation \eqref{eqn:jordan_wigner} and the block-encoding of product of matrices \cite[Lemma 30]{GilyenSuLowEtAl2019}, we can  construct an $(\alpha^{(+)},m^{(+)}+2,\epsilon^{(+)})$ block-encoding of $\hat{a}_i \left(z-\left[\hat{H}-E_{0}\right]\right)^{-1} \hat{a}_j^\dagger$, which we denote by $\wt{U}^{(+)}$.  Then the Hadamard test for non-unitary matrices in Appendix~\ref{sec:non_unitary_hadamard} tells us how to estimate $G^{(+)}(z)$. The same procedure can be applied to obtain $G^{(-)}(z)$.

We remark that if we are only interested in the imaginary part of the Green's function (or more accurately, the anti-Hermitian part of the Green's function defined as $\Gamma^{(\pm)}:=\frac{1}{2\I}(G^{(\pm)}-G^{(\pm)\dagger})$,  then we can directly use amplitude estimation without using the Hadamard test, which saves us one control qubit.  $\Gamma^{(\pm)}$ is related to the spectral functions of the many-body system. The details of computing $\Gamma^{(\pm)}$ is discussed in \cref{sec:imagGreen}.


\subsection{Preconditioned Green's function solver}
\label{sec:preconditioned_greens_function_solver}

{As can be seen from the above discussion, matrix inversion is a crucial part of evaluating Green's function. In this section, we use the preconditioning technique developed in Section~\ref{sec:preconditioned_qlsp} to efficiently perform the matrix inversion. Unlike in the QLSP setting in Corollary~\ref{cor:precond_linear_sys_solver}, the performance of Green's function solvers does not depend on the amplitude $\xi=\|A^{-1}\ket{b}\|$.}

It now only remains to determine the block-encoding of $\left(z-\hat{H}+E_{0}\right)^{-1}$. Since $|\Im z|\ge \eta$, the smallest singular value of $z-\hat{H}+E_0$ is bounded from below by $\eta$, and 
$\norm{\left(z-\hat{H}+E_{0}\right)^{-1}}\le \eta^{-1}$. 
If $\norm{\hat{H}_0}$ and $\norm{\hat{H}_1}$ are comparable, 
it is natural to consider constructing this block-encoding using existing QLSP solvers such as HHL or the ones based on LCU or QSVT.
The complexities of these direct approaches are in Table~\ref{tab:compare_algs_hubbard}, and for completeness the analysis is in Appendix~\ref{sec:query_complexities_HHL_LCU_Greens}.
{Direct block-encoding the matrix inverse $\left(z-\hat{H}+E_{0}\right)^{-1}$ using LCU or QSVT results in a linear dependence on $\|\hat{H}\|$ in the query complexity (here we assume the block-encoding factor of $z-\hat{H}+E_0$ is comparable to $\|\hat{H}\|$), as shown in Table~\ref{tab:compare_algs_hubbard}.}
However, in certain physical settings, we may have $\norm{\hat{H}_0}\gg \norm{\hat{H}_1}$ or $\norm{\hat{H}_1}\gg \norm{\hat{H}_0}$ (see \cref{sec:example_manybody}). Then we may reduce the complexity through preconditioned linear system solvers in Theorem~\ref{thm:precond_linear_sys}. 
{As shown in Table~\ref{tab:compare_algs_hubbard}, our method enables us to replace the dependence on $\|\hat{H}\|$ with a dependence on the smaller one between $\norm{\hat{H}_0}$ and $\norm{\hat{H}_1}$.}

According to \cref{rem:groundstate}, we assume the ground energy is known to a precision $\varsigma'$, and the ground state can be prepared to within trace-distance error $\varsigma$ with probability at least $p$ by some circuit $U_\Psi$. In the analysis below we first ignore the error of the ground energy for simplicity, but add back its contribution at the end.

Without loss of generality, we re-partition the Hamiltonian as $\hat{H}=\hat{A}+\hat{B}$, where $\|\hat{A}\|\gg \|\hat{B}\|$, and $\hat{A}$ can be efficiently unitarily diagonalized as in \cref{prop:fastinverse}. In order to use the preconditioning technique in Section~\ref{sec:preconditioned_qlsp}, we first split $z-\hat{H}+E_0$ into the sum of $z+\I-\hat{A}+E_0$ and $\hat{B}-\I$ for $\Im (z)>0$, or $z-\I-\hat{A}+E_0$ and $\hat{B}+\I$ for $\Im (z)<0$. An extra shift $\pm \I$ is introduced to so that $\|(z\pm \I-\hat{A}+E_0)^{-1}\|\leq 1$, and $z\pm\I-\hat{A}+E_0$ is still a normal matrix and can be fast inverted.

 For simplicity we first assume $\Im(z)>0$. We can construct a $(1,m'_A,0)$-block-encoding of $(z+\I-\hat{A}+E_0)^{-1}$ using the fast inversion of unitarily diagonalizable matrix technique in \cref{prop:fastinverse}, which we denote by $U'_A$, for $|\Im(z)|\geq\eta$. We also construct an $(\alpha_B+1,m_B+1,0)$-block-encoding for $\hat{B}-\I$ (assuming we have $(\alpha_B,m_B,0)$-block-encoding of $\hat{B}$), which can be constructed using \cite[Lemma 29]{GilyenSuLowEtAl2019}, and we denote it by $U_B$.
When $\Im(z)<0$ we only need to flip the sign of the extra shift.

{Let $\wt{\sigma}_{\min}$ be a lower bound of the smallest singular value of $I+(z+\I-\hat{A}+E_0)^{-1}(\hat{B}-\I)$.}
By Theorem~\ref{thm:precond_linear_sys}, {the smallest singular value is lower bounded $1/C_{AB}$, where $C_{AB}=1+\|(A+B)^{-1}\|\|B\|$, and it is easy to check $C_{AB}\leq 1+\frac{\alpha_B+1}{\eta}$. Thus a choice for $\wt{\sigma}_{\min}$ can be
\[
\wt{\sigma}_{\min} = \frac{\eta}{\eta+\alpha_B+1},
\]
which works for all $\hat{A}$ and $\hat{B}$. However larger values for $\wt{\sigma}_{\min}$ might be possible given more information about $\hat{A}$ and $\hat{B}$, as discussed in Remark~\ref{rem:scale_alphaB}.
} 
We can then construct a $(\frac{4}{3{\wt{\sigma}_{\min}}},2m_A'+m_B+3,\epsilon'')$-block-encoding of $(z-\hat{H}+E_0)^{-1}$, which uses $U'_A$, $U_B$, and other primitive gates for a total of $\Or(\frac{\alpha_B}{{\wt{\sigma}_{\min}}}\log(\frac{1}{\epsilon''{\wt{\sigma}_{\min}}}))$ times.

Now we determine the complexity of Green's function evaluation using this preconditioned solver. We apply the Hadamard test for non-unitary matrices described in Appendix~\ref{sec:non_unitary_hadamard}, and specifically Lemma~\ref{lem:non_unitary_hadamard_test}, to the matrix $(z-\hat{H}+E_0)^{-1}$, for which we have constructed a block-encoding using the preconditioning technique. Because amplitude estimation \cite{BrassardHoyerMoscaEtAl2002} is used in Lemma~\ref{lem:non_unitary_hadamard_test}, we have an $1/\epsilon$ dependence on the precision rather than the $1/\epsilon^2$ often seen in Monte-Carlo type methods. We also repeat amplitude estimation multiple times and take the median to exponentially reduce the failure probability of amplitude estimation, which is discussed in more detail in Appendix~\ref{sec:non_unitary_hadamard}.

Compared to Lemma~\ref{lem:non_unitary_hadamard_test}, there is a further source of error due to the inexact ground energy. Suppose instead of the exact ground energy $E_0$ we have an approximate $\wt{E}_0$. Then 
\[
\left\|\left(z-\hat{H}+E_0\right)^{-1}-\left(z-\hat{H}+\wt{E}_0\right)^{-1}\right\|=|\wt{E}_0-E_0|\left\|\left(z-\hat{H}+E_0\right)^{-1}\left(z-\hat{H}+\wt{E}_0\right)^{-1}\right\|.
\]
Since
\[
\left\|\left(z-\hat{H}+E_0\right)^{-1}\right\|\leq \frac{1}{\eta},\quad \left\|\left(z-\hat{H}+\wt{E}_0\right)^{-1}\right\|\leq \frac{1}{\eta},
\]
when $|\wt{E}_0-E_0|\leq \varsigma'$ as assumed at the beginning of this section, the error that comes from the inexact ground energy is upper bounded by $\varsigma'/\eta^2$.

After taking into account both the error due to the block-encoding of $(z-\hat{H}+E_0)^{-1}$ and the ground energy, we set $\epsilon''=\eta\epsilon/2$ in the above analysis, and arrive at the following theorem:

\begin{thm}
\label{thm:greens_function}
Given a unitary circuit $U_\Psi$ to prepare the $N$-particle ground state $\ket{\Psi_0}$ to trace-norm error $\varsigma$ with probability at least $p$, a $(1/\eta,m_A',0)$-block-encoding $U'_A$ of $(z+\I-\hat{A}+E_0)^{-1}$, an $(\alpha_B+1,m_B+1,0)$-block-encoding $U_B$ of $\hat{B}-\I$ for $\Im(z)\geq\eta>0$, {a lower bound $\wt{\sigma}_{\min}$ for the smallest singular value of $I+(z+\I-\hat{A}+E_0)^{-1}(\hat{B}-\I)$,} and an estimate of the ground energy that has an error upper bounded by $\varsigma'$, we can evaluate $G_{ij}(z)=\braket{\Psi_0|\hat{a}_i(z-\hat{H}+E_0)^{-1}\hat{a}_j^\dagger|\Psi_0}$ to precision $\frac{8}{3{\wt{\sigma}_{\min}}}\varsigma+\frac{\varsigma'}{\eta^2}+\epsilon$ with probability $\delta$ using
\begin{enumerate}
    \item $\Or( \frac{\alpha_B}{{\wt{\sigma}_{\min}^2} \epsilon}  \log(\frac{1}{\epsilon{\wt{\sigma}_{\min}}})\log(\frac{1}{\delta}))$ applications of $U'_A$ and $U_B$,
    \item 
    {$\Or( \frac{1}{\wt{\sigma}_{\min}\sqrt{p}\epsilon}\log(\frac{1}{\varsigma})\log(\frac{1}{\delta}) )$}
    applications of $U_\Psi$,
    \item Other primitive gates whose number is proportional to the sum of the two numbers above.
\end{enumerate}
{In the absense of a tighter bound, $\wt{\sigma}_{\min}$ can be chosen to be $\eta/(1+\alpha_B+\eta)$.}
When $\Im(z)\leq-\eta<0$ the complexity is the same and we only need to flip the sign of the shift $\I$ in the block-encodings.
\end{thm}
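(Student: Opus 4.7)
The plan is to synthesize three ingredients already developed in the paper: the preconditioned inverse block-encoding of \cref{thm:precond_linear_sys}, the composition rule for block-encoded matrix products (Gily\'en et al., Lemma 30), and the non-unitary Hadamard test with amplitude estimation (referenced as Lemma in \cref{sec:non_unitary_hadamard}). The target expectation value $G_{ij}(z) = \braket{\Psi_0 | \hat{a}_i (z-\hat{H}+E_0)^{-1} \hat{a}_j^\dagger | \Psi_0}$ is a sandwich of the matrix inverse between two simple fermionic operators applied to the ground state, and each ingredient handles exactly one of those three structural components.

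First I would invoke \cref{thm:precond_linear_sys} with the splitting $A \leftarrow z+\I-\hat{A}+E_0$ and $B \leftarrow \hat{B}-\I$ (which is the reason for the shift by $\pm\I$ discussed just before the theorem, keeping $\|A^{-1}\|\le 1$ while preserving normality so that $U'_A$ from \cref{prop:fastinverse} applies). Setting the final block-encoding error in the theorem to $\epsilon'' = \eta\epsilon/2$, this yields a $\bigl(\tfrac{4}{3\wt{\sigma}_{\min}},\,2m'_A+m_B+3,\,\epsilon''\bigr)$-block-encoding of $(z-\hat{H}+E_0)^{-1}$ using $\Or\bigl(\tfrac{\alpha_B}{\wt{\sigma}_{\min}}\log\tfrac{1}{\epsilon''\wt{\sigma}_{\min}}\bigr)$ queries to $U'_A, U_B$. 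I then compose this with the $(1,1,0)$-block-encodings of $\hat{a}_i$ and $\hat{a}_j^\dagger$ (from the Jordan--Wigner representation \cref{eqn:jordan_wigner}) via the product rule, obtaining a block-encoding of $\hat{a}_i(z-\hat{H}+E_0)^{-1}\hat{a}_j^\dagger$ with the same subnormalization factor $\tfrac{4}{3\wt{\sigma}_{\min}}$.

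Next I would feed this composed block-encoding, together with $U_\Psi$, into the non-unitary Hadamard test from \cref{sec:non_unitary_hadamard} to estimate the complex scalar $G_{ij}(z)$. Using amplitude estimation inside the Hadamard test replaces the $1/\epsilon^2$ sampling overhead by $1/\epsilon$, and the cost scales linearly with the subnormalization factor $1/\wt{\sigma}_{\min}$; repeating and taking the median gives the $\log(1/\delta)$ confidence amplification. The query count for $U_\Psi$ picks up an extra $1/\sqrt{p}$ from using amplitude amplification to boost ground-state preparation, plus a $\log(1/\varsigma)$ factor from the standard sample-complexity argument for trace-norm approximation, yielding the claimed $\Or(\tfrac{1}{\wt{\sigma}_{\min}\sqrt{p}\epsilon}\log\tfrac{1}{\varsigma}\log\tfrac{1}{\delta})$. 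Multiplying the per-round cost from the first paragraph by the $\Or(1/(\wt{\sigma}_{\min}\epsilon))$ rounds of amplitude estimation gives the $\Or(\tfrac{\alpha_B}{\wt{\sigma}_{\min}^2\epsilon}\log\tfrac{1}{\epsilon\wt{\sigma}_{\min}}\log\tfrac{1}{\delta})$ bound on $U'_A, U_B$.

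The main bookkeeping obstacle, and really the only technical subtlety, is the error budget. Three distinct sources contribute: (i) the block-encoding error $\epsilon''$ in $(z-\hat{H}+E_0)^{-1}$, which after multiplication by the subnormalization factor $\tfrac{4}{3\wt{\sigma}_{\min}}$ and cancellation of one $\eta$ gives an operator-norm error of order $\epsilon$ once $\epsilon''=\eta\epsilon/2$; (ii) the ground-state preparation error, where a trace-norm perturbation $\varsigma$ in $\ket{\Psi_0}\bra{\Psi_0}$ changes the expectation value of an operator with spectral norm bounded by $\tfrac{4}{3\wt{\sigma}_{\min}}$ (the block-encoding subnormalization factor) by at most $\tfrac{8}{3\wt{\sigma}_{\min}}\varsigma$ via H\"older's inequality; and (iii) the ground-energy error $\varsigma'$, which propagates through the resolvent identity exactly as in the displayed bound immediately preceding the theorem to produce $\varsigma'/\eta^2$. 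Summing these three contributions reproduces the stated precision $\tfrac{8}{3\wt{\sigma}_{\min}}\varsigma + \varsigma'/\eta^2 + \epsilon$. The validity of the choice $\wt{\sigma}_{\min}=\eta/(\eta+\alpha_B+1)$ in the absence of finer information is immediate from \cref{lem:cond_precond_linear_sys} applied to the shifted splitting, since $\|A^{-1}\|\le 1$ and $\|B\|\le \alpha_B+1$.
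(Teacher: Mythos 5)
Your proposal follows essentially the same route as the paper's own argument: split off the $\pm\I$-shifted fast-invertible part, apply \cref{thm:precond_linear_sys} with $\epsilon''=\eta\epsilon/2$ to block-encode $(z-\hat{H}+E_0)^{-1}$, compose with the Jordan--Wigner block-encodings of $\hat{a}_i,\hat{a}_j^\dagger$, and run the non-unitary Hadamard test with amplitude estimation and the median trick, with the same three-part error budget ($\tfrac{8}{3\wt{\sigma}_{\min}}\varsigma$ from the state, $\varsigma'/\eta^2$ from the resolvent identity, $\epsilon$ from encoding plus estimation) and the same worst-case choice $\wt{\sigma}_{\min}=\eta/(\eta+\alpha_B+1)$ via \cref{lem:cond_precond_linear_sys}. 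This is correct and matches the paper's proof in both structure and bookkeeping.
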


 
\subsection{Examples}\label{sec:example_manybody}

{Below we discuss the application of our preconditioned Green's function evaluation method to the Hubbard model, the second quantized quantum chemistry Hamiltonian in the plane wave dual basis, and the Schwinger model, and discuss whether there is speedup compared to the direct evaluation of Green's function through QSVT or LCU as discussed in Appendix~\ref{sec:query_complexities_HHL_LCU_Greens}. In all three models the Hamiltonian can be written as the sum of two terms $\hat{H}=\hat{A}+\hat{B}$, with $\|\hat{A}\|\gg\|\hat{B}\|$. The direct method using QSVT or LCU results in a linear dependence on $\|\hat{H}\|$, and therefore the dominating term, as discussed in Appendix~\ref{sec:query_complexities_HHL_LCU_Greens}. Our preconditioning method replaces the dependence on $\|\hat{H}\|$ with the dependence on the much smaller quantity $\|\hat{B}\|$. Such a dependence can however be cubic when no additional information about $\wt{\sigma}_{\min}$ in Theorem~\ref{thm:greens_function} is available.}

\subsubsection{Hubbard Model}\label{sec:hubbard}

The Hubbard model can be viewed as a prototypical system for describing electrons in strongly correlated materials. The Hamiltonian shares similarity with the full Coulomb Hamiltonian in \cref{sec:planewave}. Consider the 2D Hubbard model with $\sqrt{N}\times \sqrt{N}$ grid points that only has on-site interaction between electrons of opposite spins. The Hamiltonian reads
\[
\hat{H}=\sum_{\vx,\vy,\sigma} T(\vx-\vy) \hat{a}_{\vx, \sigma}^{\dagger} \hat{a}_{\vy, \sigma}+U \sum_{\vx} \hat{n}_{\vx,\uparrow}\hat{n}_{\vx,\downarrow}=:\hat{H}_0+\hat{H}_1.
\]
Here $\hat{n}_{\vx,\sigma}=\hat{a}_{\vx, \sigma}^{\dagger}\hat{a}_{\vx, \sigma}$, $\vx=(x_1,x_2)\in \ZZ^2$, and the non-interacting part only involves nearest neighbor interaction as 
\[
T(\vx-\vy)=
\begin{cases}
-t,& d(\vx,\vy)=1,\\
0,& \text{otherwise}.
\end{cases}
\]
For most fermionic problems of interest we have $t,U>0$. The two-dimensional domain is assumed to be periodic.

Using the fermionic fourier transform (FFFT)~\cite{babbush2018low}, one may transform the creation and annihilation operators to the momentum space as
\[
\hat{c}^{\dag}_{\vG,\sigma}=\mathrm{FFFT}^{\dag} \hat{a}^{\dag}_{\vx,\sigma}\mathrm{FFFT}, \quad 
\hat{c}_{\vG,\sigma}=\mathrm{FFFT}^{\dag} \hat{a}_{\vx,\sigma}\mathrm{FFFT},
\]
where $\vG=(G_1,G_2)$ is the index of vectors in the reciprocal space given by
$G_\alpha = 2\pi k_\alpha / \sqrt{N}$, and $k_\alpha \in \{-\sqrt{N}/2+1,\ldots, \sqrt{N}/2\},\alpha=1,2$. For simplicity we assume that $\sqrt{N}$ is an even number. 

Explicit circuits for the FFFT, for the Jordan-Wigner fermionic representation, can be found in~\cite{babbush2018low}. Using FFFT, the translation-invariant kinetic energy operator can be diagonalized in the momentum space as
\begin{equation}
\sum_{\vx, \vy,\sigma} T(\vx-\vy) \hat{a}_{\vx,\sigma}^{\dagger} \hat{a}_{\vy,\sigma}=\mathrm{FFFT} \left(\sum_{\vG,\sigma}\hat{T}(\vG) \hat{c}^{\dag}_{\vG,\sigma}\hat{c}_{\vG,\sigma}\right)\mathrm{FFFT}^\dagger\label{eq:HubbardFFFT}
\end{equation}
Here $\hat{T}(\vG)$ is the Fourier transform of the discretized kinetic operator.

There are a number of ways to express such a diagonal matrix.   In particular, this transformation yields $\hat{c}^\dagger_{\vG,\sigma}\hat{c}_{\vG,\sigma} = (I-Z_{\vG,\sigma})/2$, which can be simply implemented as $-Z_{\vG,\sigma}/2$ by neglecting a dynamically irrelevant shift to the energy in the equality. Applying the (ordinary) two-dimensional Fourier transform on $T(\vx)$ to find $\hat{T}(\vG)$, we find that there exists a unitary decomposition of the kinetic term such that the sum of the absolute value of the coefficients in the unitary decomposition is at most $\alpha_T$ which obeys
\begin{equation}
    \alpha_T\le \frac{N}{2}\max_{\vG}|\hat{T}(\vG)| = \max_{\vG} \left| \frac{1}{2}\sum_{\vx} T(\vx) e^{ \I \vG\cdot \vx} \right| \le N |t|.
    \label{eq:alphaT}
\end{equation}

We further find that (again up to an irrelevant constant shift in the energy)
\[
U\sum_{\vx} \hat{n}_{\vx,\uparrow}\hat{n}_{\vx,\downarrow}= \sum_{\vx} \frac{Z_{\vx,\uparrow} Z_{\vx,\downarrow} - Z_{\vx,\uparrow} - Z_{\vx,\downarrow}}{4}.
\]
Therefore the sum of the coefficients of the unitaries in this unitary decomposition, $\alpha_U$ satisfies
\begin{equation}
    \alpha_U \le \sum_{\vx} \frac{3|U|}{4} = \frac{3N|U|}{4} \le N|U|.
    \label{eq:alphaU}
\end{equation}
It then follows that we can construct an $(\alpha_H, \mathcal{O}(\log(N/\epsilon)),0)$-block encoding of $\hat{H}$ for
\begin{equation}
    \alpha_H \le \alpha_T + \alpha_U \le N(|t| + |U|).
\end{equation}

If $|U|$ is small compared to $|t|$, then the kinetic energy term dominates and so it makes sense to use the preconditioned algorithm to compute the Green's function where $\hat{A}$ is taken to be the kinetic operator and $\hat{B}$ the electron electron interaction.  Since the kinetic term can be diagonalized by the FFFT, we can use the fast inversion result of Proposition~\ref{prop:fastinverse} we can invert the kinetic part of the Hamiltonian using a single query to an oracle that yields the diagonal elements of $\hat{T}(\vG)$ and a single application of FFFT and its inverse. On the other hand, if $|t|$ is small compared to $|U|$ (which corresponds to the strongly correlated regime, and is often the case of interest) then we simply take $\hat{A}$ to be the onsite interaction term.  This is similar in spirit to the hybridization expansion in quantum Monte Carlo calculations \cite{GullMillisLichtensteinEtAl2011}. In either case, the scaling for the normalization constant is $\min(|U|,|t|)$.

We also require for our Green's function approach an oracle that yields a block encoding of $\hat B+\I$, where $\hat B$ is the two-body operator describing electron-electron interaction.  As mentioned previously, a block-encoding of $\hat B$ exists with a block-encoding factor $N \min(|t|,|U|)$. Therefore we can construct a block-encoding for $\hat B+\I$ with a value of $\alpha_B = N\min(|t|,|U|) +1 \in \Or(N\min(|t|,|U|))$.  Such a block-encoding requires $2$ queries to an oracle that computes the energy $U \sum_{\vx} \hat{n}_{\vx \uparrow}\hat{n}_{\vx \downarrow}$ for a particular configuration of electrons in position space.  

The Green's function can therefore be computed  with error at most $\epsilon$ and failure probability $\delta$ using a number of queries to oracles that compute the kinetic and potential energy of a given configuration (in momentum space and position space respectively) that are of
\begin{equation}
    \widetilde{\Or}\left(\frac{N^3(\min(|U|,|t|)^3}{\eta^2 \epsilon} \log\left(\frac{1}{\delta}\right)\right),
\end{equation}
which is independent of the value of $|t|$. {Here we used Theorem~\ref{thm:greens_function} and the worst-case bound $\wt{\sigma}_{\min}=\eta/(1+\eta+\alpha_B)$ as discussed in the theorem.} This can provide an advantage over the non-preconditioned version of the algorithm if $\min(|U|,|t|)$ is much smaller than $\max(|U|,|t|)$.

\begin{rem}
Using arguments given in Appendix~\ref{app:signature}, we can further optimize this scaling to depend on the number of electrons in the initial state.  If we denote the number of electrons to be $N_e$ then the scaling of the number of energy evaluations is improved to
\begin{equation}
    \widetilde{\Or}\left(\frac{N_e^3(\min(|U|,|t|)^3}{\eta^2 \epsilon} \log\left(\frac{1}{\delta}\right)\right),
\end{equation}
\end{rem}

\subsubsection{Plane Wave Dual Basis}\label{sec:planewave}

 For treating electrons in a periodic, infinite lattice, it is appropriate to use a periodized Coulomb operator (also called the Ewald interaction) \cite{FraserFoulkesRajagopalEtAl1996}.  This representation of electronic structure is also significant because it takes a similar form to the Hubbard model and consists of a kinetic and interaction term where the former can be diagonalized using the FFFT and the latter is diagonal in the computational basis. We omit the detailed discussion of the quantum many-body Hamiltonian in planewave dual basis here and refer readers to Ref.~\cite{babbush2018low}.  Unlike the Hubbard model, we find (somewhat surprisingly) that there is no clear advantage for using our preconditioned algorithm for this Hamiltonian in terms of Green's function evaluation, at least according to our worst case analysis.

Here we will take the $U_A$ to be a block encoding of the external potential and two-body interaction operators and $U_B$ to be a block-encoding of the kinetic energy term.
It has been found that the block-encoding factor for the kinetic energy term (denoted by $\alpha_B$) is $\Or\left(\frac{N^{5/3}}{\Omega^{2/3}} \right)$, while the block-encoding factor of the potential term (denoted by $\alpha_A'$) is $\Or\left(N^{7/3}/\Omega^{2/3} \right)$~\cite{babbush2018low}.  Thus Theorem~\ref{thm:greens_function}  tells us that the number of applications of oracles that compute the diagonal matrix elements of the kinetic and potential operators in the plane wave and plane wave-dual basis respectively that are needed to compute the Green's functions within error $\epsilon$ and failure probability $\delta$ for $|{\rm Im}(z)|\ge \eta$ is in 

\begin{equation}
    \wt{\Or}\left(\frac{N^5}{ \Omega^2\eta^2 \epsilon} \log\left(\frac{1}{\delta}\right)\right).
\end{equation}
{Again we used Theorem~\ref{thm:greens_function} and the worst-case bound $\wt{\sigma}_{\min}=\eta/(1+\eta+\alpha_B)$ as discussed in the theorem.}

\begin{rem}
This scaling shows that advantages do not necessarily follow by applying our technique to problems in chemistry.  If we were to compare the results in Table~\ref{tab:compare_algs_hubbard} then we see that the non-preconditioned Green's function computation scales would require a number of queries that is in $\widetilde{\Or}(N^{7/3}/\Omega^{2/3})$ (for constant $\epsilon, \eta, \delta$) (similar to the Hubbard model, we show that a small advantage can be gained by imposing a fermion number constraint but this does not change the conclusion here).  Therefore despite the asymptotic separation between the terms, we are not able to show an advantage of preconditioned linear system solvers in the context of the planewave dual basis set.  Therefore preconditioned linear system solvers with better scaling with respect to $\alpha_B$ would be of interest for future works. It is also possible that the preconditioned solver could be more useful in a different basis sets, such as the molecular orbital basis set. 
\end{rem}

\subsubsection{Schwinger Model}
As a final example of a class of models that our methods can be applied to, we will examine computing Green's functions for the Schwinger model.  The Schwinger model describes quantum electrodynamics in the $1+1$ dimension and is also used as a toy model for quantum chromodynamics.  

The Hilbert space for the Schwinger model is the tensor product of two spaces.  One consists of a tensor product of $N+1$ fermionic spaces and the other consists of a product of $N$ gauge field spaces.  Each gauge field can be thought of as a qubit register that can take $2L+1$ different integer values ranging from $-L$ to $L$.  There are two operators that we need to define that act on the gauge field space.  First we have $\hat{E}_r^2$ which is a diagonal operator and counts the energy stored in the gauge field with index $r\in \{1,\ldots,N\}$.  The second is $\hat{U}_r$, which adds one to value stored in the gauge field register and is analogous to a bosonic creation operator.  The action of these operators is given formally below
\begin{equation}
    \hat E_r^2 = \sum_{\varepsilon=-L}^L \varepsilon^2 \ket{\varepsilon}_r \bra{\varepsilon}_r,\qquad \hat U_r = \sum_{\varepsilon=-L}^L \ket{\varepsilon+1}_r \bra{\varepsilon}_r \qquad \hat U_r^\dagger = \sum_{\varepsilon=-L}^L \ket{\varepsilon-1}_r \bra{\varepsilon}_r.
\end{equation} 
Here we assume for $\hat U_r$ and its adjoint that the gauge field satisfies periodic boundary conditions at the cutoff located at $\varepsilon = \pm L$.

The Hamiltonian can be expressed for the Schwinger model on $N$ sites in one dimension using the operators $\hat{E}_r$ and $\hat{U}_r$ through the work of Kogut and {Susskind}~\cite{KS1975} as

\begin{align}
    H &=  \sum_{r=1}^N \hat E_{r}^2\otimes \hat I^{\otimes (N+1)} + x \sum_{r=1}^N \left[ \hat U_r \hat a_r^\dagger \hat a_{r+1} -  \hat U_r^\dagger  \hat a_r \hat a_{r+1}^\dagger \right] + \mu \sum_{r=1}^N (-1)^r \hat I^{\otimes N}_{L}\hat a^\dagger_r \hat a_r, \label{eq:SchwingerHM}
\end{align} 
Here we use $\hat I_L$ to denote the identity operation on the link variables and $\hat I$ to be the identity operation on the fermionic modes.
where $\hat E_r^2$ gives the energy in the gauge field that links two sites in the one-dimensional lattice and $\hat U_r$ is an operator that raises or lowers excitation number for the gauge field,  $\mu=2m/(ag^2)$ and $x=1/(ag)^2$, with $a$ the lattice spacing, $m$ the fermion mass and $g$ the coupling constant.

Once we have made this identification, we can use the construction in~\cite{shaw2020quantum} to express the Hamiltonian as a sum of unitary operations.  The simplest way to construct this as a sum of unitary matrices is to block encode $\hat{E}_r^2$
by noting
$\hat E_r^2 = \sum_k k^2\ket{k}\!\bra{k}$ is block-encoded by the unitary 
$$\sum_k \ket{k}\!\bra{k} \otimes e^{-\I Y \cos^{-1}(k^2/L^2)}: \ket{k} \ket{0} \mapsto \ket{k} \left(\frac{k^2}{L^2} \ket{0} + \sqrt{1- \frac{k^2}{L^2}} \ket{1}\right)$$ as per the construction that we describe in Appendix~\ref{sec:fastinv_1sparse_general}.  Similarly, $\hat U_r$ can be written as a sum of unitary adder circuits.  Let $U_A'$  block-encode $\sum_r \hat E_r^2 \otimes 1^r$ and let $U_b$ block-encode \[
x \sum_{r=1}^N \left[ \hat U_r \hat a_r^\dagger \hat a_{r+1} -  \hat U_r^\dagger  \hat a_r \hat a_{r+1}^\dagger \right] + \mu \sum_r (-1)^r I^{\otimes  r-1}_{L}\hat a^\dagger_r \hat a_r.  
\]
From the discussion in~\cite[Sections 2.1 and 2.3]{shaw2020quantum} we have $\alpha_{A'} = (N-1)L^2 $ and $\alpha_B =\Or((x+\mu)N)$.  

If we allow the gauge field cutoff to grow unboundedly, then asymptotically $\alpha_{A'}$ will dominate the complexity.  If we follow the reasoning used in the previous sections, then Theorem~\ref{thm:greens_function}, {together with the worst-case bound  $\wt{\sigma}_{\min}=\eta/(1+\eta+\alpha_B)$ as discussed in the theorem}, implies that the number of queries to $U_{A'}$ and $U_B$ are  
\begin{equation}
    \widetilde{\Or}\left(\frac{\alpha_B^3}{\eta^2 \epsilon} \right) = \widetilde{\Or} \left( \frac{(x+\mu)^3 N^3}{ \eta^2\epsilon} \right).
\end{equation}
There as $L$ increases while all other quantities remain fixed, this offers a potentially exponential improvement relative to the non-preconditioned example.  Therefore in the simulations of quantum field theory, preconditioning solvers can significantly reduce the computational complexity with respect to the size of the cutoff.



%
\section{Fast algorithm for evaluating matrix functions}
\label{sec:fast_algs_matrix_functions}

In this section we focus on implementing the matrix function $e^{-\beta H}$ where $H=A+B$, and applying it to a given quantum state $\ket{b}$.
For a positive semi-definite Hamiltonian $H=A+B\in\CC^{N\times N}$ with $\|A\|\gg \|B\|$ and $N=2^n$, we want to apply the imaginary-time evolution $e^{-\beta H}$. Following \cref{sec:fastinv_diagaonlize}, we further assume we have access to the eigendecomposition of $A=VDV^\dagger$, where $V$ can the efficiently implemented on a quantum computer, and $D$ is a diagonal matrix whose entries can be queried by the following oracle: 
\begin{equation}
O_D \ket{k}\ket{0^r} = \ket{k}\ket{D_{kk}}.
\end{equation}
Here we assume the diagonal entries can be represented exactly by $r$ bits. We also assume there is an $(\alpha_B,m_B,0)$-block-encoding of $B$ denoted by $U_B$.

This input model is inspired by the quantum many-body Hamiltonian for which we can diagonalize the non-interacting part efficiently on a classical computer (see \cref{sec:many_body_greens_function} for examples). Other input models exist for different settings. For example, Ref.~\cite{vanApeldoorn2020quantum} assumes access to a block-encoding of the Hamiltonian, and Ref.~\cite{chowdhury2016quantum} assumes the Hamiltonian is given through a linear combination of unitaries or projections, and we have access to what is essentially the square root of the Hamiltonian $\sqrt{H}$. This allows their algorithm to achieve $\Or(\sqrt{\beta})$ dependence. 

Table~\ref{tab:compare_algs_thermal} compares the query complexities of a few different algorithms assuming we are given the block-encoding of the Hamiltonian as an oracle. Note that in Table~\ref{tab:compare_algs_thermal} we did not consider the dependence on $\beta$ (or taking $\beta=1$), but this dependence is included in our analysis in this section. The reason for omitting $\beta$ in the table  is that $\beta$ is tied to the success probability of the procedure and the subnormalization of the output quantum state. When the state is normalized, the subnormalization factor amplifies the error in the process. This fact, combined with the different assumptions made in different methods, for example Ref.~\cite{vanApeldoorn2020quantum} assumes $\beta H\gg I$, complicates the fair comparison of different methods.

The rest of the section is organized as follows. We introduce an algorithm based on the contour integral formulation in \cref{sec:contour_int}, and a different algorithm based on the inverse transform in \cref{sec:inverse_transform}. Both formulations can be used to prepare a purified Gibbs state, which is discussed in \cref{sec:purified_gibbs}.

\subsection{Contour integral formulation}
\label{sec:contour_int}

We use the contour integral representation 
\begin{equation}
\label{eq:contour_int}
e^{-\beta x} = \frac{1}{2\pi \I}\oint_\Gamma \frac{e^{-\beta z}}{z-x}\ud z,
\end{equation}
where $x\geq 0$ and the contour is chosen as 
\begin{equation}
\label{eq:contour}
\Gamma = \{t^2-\zeta+\I t\in\CC:t\in\RR\}.
\end{equation}

\begin{figure}[ht]
    \centering
    \includegraphics[width=0.5\textwidth]{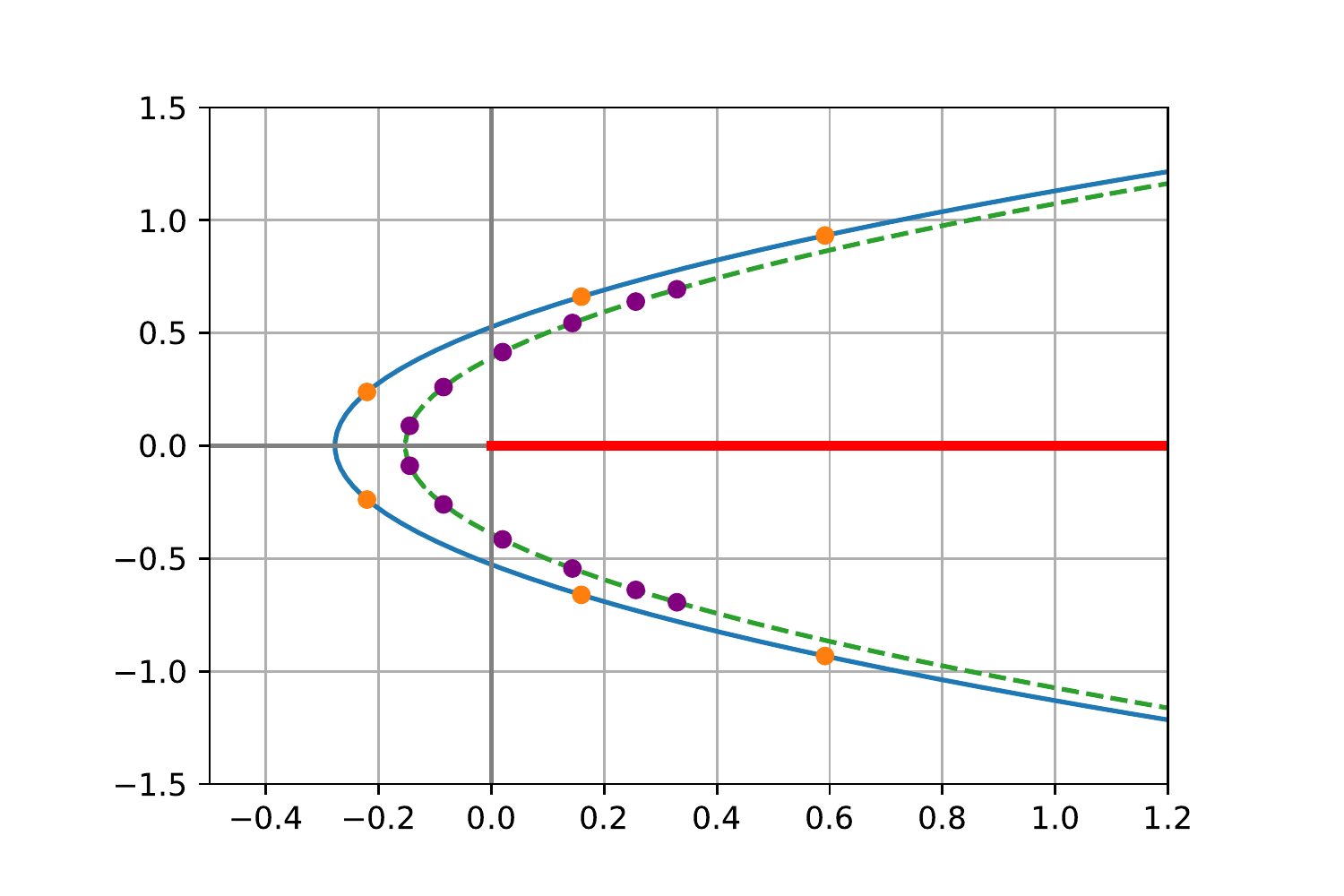}
    \caption{The parabolic contours $\Gamma$ and Gauss-Legendre quadrature nodes $\{z_j\}$. The parabola with solid line is for $\beta=3$ 
    and the one with dashed line is for $\beta=6$. The spectrum of the Hamiltonian $H=A+B$ is on the positive part of the real axis (solid red line). The dots on the parabolas show the quadrature nodes. Unlike the usual contour integral this parabolic contour extends to the infinity. This is however not a problem because the integrand decays very quickly to 0.}
    \label{fig:contour}
\end{figure}

We choose the parameter $\zeta$ in the following way, and will explain the reason in Appendix~\ref{sec:proof_quadrature_err}: 
\begin{equation}\label{eqn:contour_param}
b = \min\left(\frac{1}{2\beta},\frac{1}{6}\right) ,\quad \zeta=2b(1-b).
\end{equation}
In particular, $\beta\zeta\le 1$. Eq.~\eqref{eq:contour_int} then enables us to express the matrix function $e^{-\beta H}$ in terms of a linear combination of matrix inverses:
\begin{equation}
    e^{-\beta H} = \frac{1}{2\pi \I}\oint_\Gamma e^{-\beta z}(z-H)^{-1}\ud z,
\end{equation}

The contour integral formulation has been widely used to compute matrix functions on classical computers \cite{Higham2008}. In Ref.~\cite{trefethen2006talbot}, a number of parabolic contours have been considered, which are optimally tuned so that fast exponential convergence can be reached with respect to the number of discretization points $N$. This type of discretized contour integrals have also been used to obtain rational approximation \cite{cody1969chebyshev}, and  to invert Laplace transform \cite{talbot1979accurate,weideman2007parabolic,piessens1971gaussian}. 
However the parabolas they use for $e^{-\beta x}$ move away from the
origin in the negative direction along the real axis, and from our
analysis in this section we find that this results in an exponentially growing subnormalization factor in the LCU procedure. Therefore we need to design new parameterization of contours in \cref{eqn:contour_param}.

Once the contour is chosen, we may truncate the contour $\Gamma$ on a finite interval $t\in[-T,T]$, and apply the Gauss-Legendre quadrature formula to discretize this truncated contour integral. This leads to 
\begin{equation}
\label{eq:rational_approx}
e^{-\beta x} \approx \sum_{j\in[J]} \frac{\varrho_j}{z_j-x},
\end{equation}
where 
\begin{equation}
\label{eqn:z_rho_factor}
z_j = t_j^2-\zeta +\I t_j,\quad \varrho_j = \frac{T}{2\pi \I} w_j e^{-\beta z_j}(2t_j+\I),\quad t_j=Ts_j,
\end{equation}
and $s_j$, $w_j$ are the nodes and weights of Gauss-Legendre quadrature respectively, and the truncation range $T\geq 1$ is to be chosen. According to Appendix~\ref{sec:proof_quadrature_err}, the truncation error decays very rapidly as $T$ increases and therefore we do not need to choose a large $T$.
We first bound the error of this approximation in the following lemma
\begin{lem}
\label{lem:quadrature_err}
With $z_j$ and $\varrho_j$ defined above, and $\wt{\beta}=\min(\beta,3)$, we have  
\[
\left| e^{-\beta x} - \sum_{j\in[J]} \frac{\varrho_j}{z_j-x} \right| \leq \sqrt{\frac{2}{\beta\pi}}e^{1-\beta T^2}+\frac{64T^2\wt{\beta}e^{3/2}}{1-e^{-1/(8T\wt{\beta})}}e^{-J/(4T\wt{\beta})},
\]
for all $x\geq 0$.
\end{lem}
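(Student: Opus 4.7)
The plan is to split the total error into a truncation error from restricting the contour parameter to $t \in [-T, T]$, plus a Gauss--Legendre quadrature error on the truncated interval. Writing
$F(t) := \frac{(2t+\I)\, e^{-\beta z(t)}}{2\pi\I\, (z(t) - x)}$
with $z(t) = t^2 - \zeta + \I t$, Cauchy's integral formula for the parabola $\Gamma$ (which encloses $x \geq 0$ counterclockwise) gives $e^{-\beta x} = \int_{-\infty}^{\infty} F(t)\, \ud t$, and the quadrature sum $\sum_j \varrho_j / (z_j - x)$ is precisely the $J$-node Gauss--Legendre rule applied to $\int_{-T}^T F(t)\, \ud t$ after the rescaling $t = Ts$.

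For the truncation error, I would use $|e^{-\beta z(t)}| = e^{-\beta(t^2 - \zeta)}$ and $|z(t) - x| \geq |\Im z(t)| = |t|$ for $t \in \RR$ and $x \geq 0$, which together with $\beta\zeta \leq 1$ from \eqref{eqn:contour_param} give $|F(t)| = O(e^{1 - \beta t^2})$. A standard Gaussian tail bound $\int_T^\infty e^{-\beta t^2}\, \ud t \leq \frac{1}{2\beta T} e^{-\beta T^2}$ then yields the first term $\sqrt{2/(\beta\pi)}\, e^{1 - \beta T^2}$ of the claimed estimate after tracking constants through the Mills-ratio form of the tail.

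For the quadrature error, let $g(s) := T\, F(Ts)$. I invoke the standard Gauss--Legendre bound: for $g$ analytic on the Bernstein ellipse $E_\rho$ (with $\rho > 1$ and $M_\rho := \max_{E_\rho}|g|$), the $J$-node error on $[-1,1]$ is at most $\frac{64\, M_\rho}{15(1 - \rho^{-2})}\, \rho^{-2J}$. I would choose $\rho$ with $\log \rho$ of order $1/(T\wt{\beta})$, so that in the $s$-plane the semi-minor axis of $E_\rho$ is of order $1/(T\wt{\beta})$, and the rate $\rho^{-2J} \leq e^{-J/(4T\wt{\beta})}$ matches the bound. Two estimates are then required on $E_\rho$. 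First, parameterising $Ts = A\cos\theta + \I B\sin\theta$ with $A = T(\rho + \rho^{-1})/2$ and $B = T(\rho - \rho^{-1})/2 = O(1/\wt{\beta})$, a direct computation gives $\Re z(Ts) = A^2\cos^2\theta - B^2\sin^2\theta - B\sin\theta - \zeta$, minimised over $\theta$ at $\sin\theta = 1$ with value $-B^2 - B - \zeta$; hence $|e^{-\beta z(Ts)}| \leq e^{\beta(B^2 + B + \zeta)}$, which, upon substituting $B = O(1/\wt{\beta})$ and $\beta\zeta \leq 1$, is uniformly bounded by $e^{3/2}$ in both regimes $\beta \leq 3$ and $\beta > 3$. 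Second, the poles of $F$ in $t$ are the roots $t_\pm = -\I/2 \pm \sqrt{x + \zeta - 1/4}$ of $t^2 + \I t - (\zeta + x) = 0$ and always satisfy $\Im t_\pm \leq -1/2$; since on $E_\rho$ we have $|\Im(Ts)| \leq B \leq 1/4$, the factorisation $z(Ts) - x = (Ts - t_+)(Ts - t_-)$ yields $|Ts - t_\pm| \geq 1/4$ and hence $|z(Ts) - x| \geq 1/16$. Combining these bounds with $|2Ts + \I| = O(T)$ and $1/(1 - \rho^{-2}) \asymp 1/(1 - e^{-1/(8T\wt{\beta})})$ produces the full second term of the lemma.

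The main obstacle is the simultaneous calibration of $\rho$, $A$, and $B$: one must (a) avoid the poles uniformly in $x \geq 0$, (b) keep $|e^{-\beta z}|$ bounded by a moderate constant on the ellipse, and (c) recover the advertised rate $e^{-J/(4T\wt{\beta})}$. The saturation $\wt{\beta} = \min(\beta, 3)$ precisely reflects the transition between two regimes: for small $\beta$ the ellipse can be expanded freely and the convergence rate is limited by the smoothness scale $\beta$; for large $\beta$ the pole at distance $1/2$ from the real axis is the binding constraint and the rate saturates to a value independent of $\beta$.
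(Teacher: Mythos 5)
Your overall architecture (truncation error plus a Bernstein-ellipse bound for the Gauss--Legendre error) is the same in spirit as the paper's proof, which bounds Chebyshev coefficients of $g(T\cos\theta,x)$ via analyticity in a strip of $\theta$; the truncation estimate is fine. The genuine gap is in your treatment of the pole and of the exponential factor on the ellipse. The roots of $t^2+\I t-(\zeta+x)=0$ do \emph{not} always satisfy $\Im t_\pm\le -1/2$: when $0\le x<1/4-\zeta$ the discriminant is negative and the roots are $t_\pm=-\I/2\pm\I\sqrt{1/4-x-\zeta}$, so the upper root sits at $\Im t_+=-1/2+\sqrt{1/4-x-\zeta}$, which for $x=0$ is approximately $-\zeta$. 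Since $\zeta=2b(1-b)$ with $b=1/(2\max(\beta,3))$, this pole is at distance $\Theta(1/\beta)$ from the real axis when $\beta$ is large, not at distance $1/2$. Consequently your uniform lower bound $|z(Ts)-x|\ge 1/16$ on an ellipse with semi-minor axis $B$ up to $1/4$ is false (the ellipse can contain or nearly touch the pole), and your claim that for large $\beta$ ``the pole at distance $1/2$ is the binding constraint and the rate saturates'' is backwards. Relatedly, your bound $|e^{-\beta z(Ts)}|\le e^{\beta(B^2+B+\zeta)}\le e^{3/2}$ fails for $\beta>3$ if $B=\Or(1/\wt{\beta})$ with $\wt{\beta}=\min(\beta,3)$, since then $\beta B\sim\beta/3$ is unbounded; to keep the exponential controlled you are forced to take $B\lesssim 1/\beta$, i.e.\ the ellipse width is $\Or(1/\max(\beta,3))$, which also conflicts with the rate $e^{-J/(4T\min(\beta,3))}$ you are trying to reach.

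In fact the paper's own proof (Appendix on discretizing the contour integral) derives the second term with $\wt{\beta}=\max(\beta,3)$: the analyticity width in $\theta$ is $b/(2T)=1/(4T\max(\beta,3))$, fixed jointly by the requirement $\Re(t^2+\I t-\zeta)\ge -1/(4\beta)$ (controlling $|e^{-\beta z}|\le e^{1/4}$) and the lower bound $|t^2+\I t-\zeta-x|\ge \zeta/2$ on $|\Im t|\le b$, and the resulting rate $e^{-J/(4T\max(\beta,3))}$ degrades with $\beta$ rather than saturating. The $\min$ in the lemma statement appears to be a typo, but your proposal as written commits to it, and the calibration of $\rho$, $B$, and the pole location that you sketch cannot be made consistent: you must redo the ellipse geometry with $B\asymp b=1/(2\max(\beta,3))$ and bound the denominator via the quantity $\zeta/2$ (uniformly in $x\ge0$), rather than via a pole-distance of $1/2$ that does not hold near $x=0$.
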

The proof of this lemma is in \cref{sec:proof_quadrature_err}. As can be seen here as we increase the truncation range $T$ and number of quadrature nodes $J$ the error decays to zero. 

With the approximation in Eq.~\eqref{eq:rational_approx}, we can implement $e^{-\beta H}$ as a linear combination of $(z_j-H)^{-1}$ by 
\begin{equation}
e^{-\beta H} \approx \sum_{j\in[J]} \varrho_j (z_j-H)^{-1}.
\end{equation}
To do this, we first need the block-encoding of the following matrix (called a select oracle in the context of LCU)
\begin{equation}
S=\sum_{j\in[J]} \ket{j}\bra{j}\otimes (z_j-H)^{-1} = \left(\sum_{j\in[J]}\ket{j}\bra{j}\otimes (z_j+\xi_j-A)-\sum_{j\in[J]} \ket{j}\bra{j}\otimes(B+\xi_j)\right)^{-1},
\end{equation}
where we choose
\begin{equation}
\xi_j = \begin{cases}
\I,\ &\mathrm{if}\  \Im z_j>0,\\
-\I,\ &\mathrm{if}\  \Im z_j\leq 0.\\
\end{cases}
\end{equation}
We can see that the operator inside the inverse, which can be seen as a block-diagonal matrix, is the sum of two parts, with the operator norm of the first part being much larger than that of the second part. Therefore we may employ the preconditioned linear system solver.

Since we have access to the eigendecomposition of $A$, we can obtain the various block-encodings needed in Theorem~\ref{thm:precond_linear_sys} easily. We summarize the cost for constructing these block-encodings in the following lemma: 
\begin{lem}
\label{lem:oracles_ab}
(a) a $(1,m_1,0)$-block-encoding of $\left(\sum_{j\in[J]}\ket{j}\bra{j}\otimes (z_j+\xi_j-A)\right)^{-1} $ can be implemented by $\Or(1)$ applications of $V$, $O_D$, and their inverses, with $m_1=\Or(\poly(r+\log(J)))$.

(b) a $(1+\alpha_B,m_2,0)$-block-encoding of $\sum_{j\in[J]} \ket{j}\bra{j}\otimes(B+\xi_j)$ can be implemented by one application of $U_B$, with $m_2=\Or(\polylog(J))$.
\end{lem}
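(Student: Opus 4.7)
The plan is to exploit the diagonalization $A=VDV^\dagger$ and fast-invert block-wise over the index register for part (a), and to split the sum into a linear combination of two cheap block-encodings for part (b). For part (a), writing
\[
\sum_{j\in[J]}\ket{j}\bra{j}\otimes(z_j+\xi_j-A)^{-1} = (I\otimes V)\,\mc{D}\,(I\otimes V^\dagger),\qquad \mc{D}=\sum_{j,k}\ket{j}\bra{j}\otimes\ket{k}\bra{k}\,(z_j+\xi_j-D_{kk})^{-1},
\]
reduces the task to block-encoding the diagonal operator $\mc{D}$. The key uniform bound is that $D_{kk}\in\RR$ (since $A$ is Hermitian, as the non-interacting part of $H$), and by the definition of $\xi_j$ we have $|\Im(z_j+\xi_j)|=|t_j|+1\geq 1$, so every entry of $\mc{D}$ has modulus at most $1$. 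Consequently a subnormalization factor $\alpha'=1$ is attainable, matching the stated $(1,m_1,0)$ target.

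I would realize $\mc{D}$ by a complex-arithmetic variant of the circuit of Figure~\ref{fig:fastinv_circuit}. On input $\ket{j}\ket{k}\ket{0}\ket{0}$, (i) call $O_D$ on $\ket{k}$ to compute $D_{kk}$, (ii) classically compute $z_j+\xi_j$ from $j$ into a separate ancilla of width $\Or(\polylog J)$, (iii) subtract to form $w_{jk}=z_j+\xi_j-D_{kk}$ on an arithmetic register, (iv) apply a complex $\mathrm{INV}$ gadget mapping $\ket{w_{jk}}\ket{0}\mapsto \ket{w_{jk}}\bigl(\tfrac{1}{w_{jk}}\ket{0}+\sqrt{1-|w_{jk}|^{-2}}\ket{1}\bigr)$, and (v) uncompute (i)--(iii). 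The complex $\mathrm{INV}$ I would decompose as a real rotation producing amplitude $1/|w_{jk}|$ followed by a classically-controlled phase $e^{-\I\arg w_{jk}}$ on the $\ket{0}$ branch, both driven by the register storing $w_{jk}$. Conjugating with $I\otimes V$ and $I\otimes V^\dagger$ yields the desired block-encoding; each of $V,V^\dagger,O_D,O_D^\dagger$ is used exactly once, and the ancilla width $m_1=\Or(\poly(r+\log J))$ is dictated by the arithmetic registers for $D_{kk}$, $z_j+\xi_j$, and $w_{jk}$.

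For part (b) I would split
\[
\sum_{j\in[J]}\ket{j}\bra{j}\otimes(B+\xi_j) = I_j\otimes B \;+\; D_\xi\otimes I, \qquad D_\xi=\sum_{j\in[J]}\xi_j\ket{j}\bra{j}.
\]
Since $D_\xi$ is diagonal with unimodular entries $\pm\I$, it is itself unitary and therefore a $(1,0,0)$-block-encoding of itself, realisable by a single classically-controlled phase gate conditioned on the sign of $t_j$. Likewise $I_j\otimes U_B$ is an $(\alpha_B,m_B,0)$-block-encoding of $I_j\otimes B=\sum_j\ket{j}\bra{j}\otimes B$ using one query to $U_B$. Combining the two by the standard linear combination of block-encodings of \cite[Lemma~29]{GilyenSuLowEtAl2019}, with prepare amplitudes proportional to $(\sqrt{\alpha_B},1)$, produces the advertised $(1+\alpha_B,m_2,0)$-block-encoding; the only fresh ancillas beyond those of $U_B$ are the $\log J$ qubits of the index register and one LCU qubit, giving $m_2=\Or(\polylog J)$. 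The single nontrivial moment in the whole argument is the complex variant of $\mathrm{INV}$ in step (iv) of part (a): I expect it to be a routine composition of a real rotation with a classically computed phase, but writing it out is the step that must be done carefully to keep arithmetic registers polynomial in $r+\log J$ and to ensure that no additional subnormalization is incurred.
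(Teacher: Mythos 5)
Your proposal is correct and follows essentially the same route as the paper: conjugating by $I\otimes V$ to reduce to a diagonal operator, computing $D_{kk}$ with $O_D$ and applying a $j$-controlled complex inversion gadget whose subnormalization $1$ rests on $|\Im(z_j+\xi_j)|=|t_j|+1\geq 1$, and for part (b) taking a linear combination of $I\otimes B$ (one controlled use of $U_B$) with the diagonal unitary $\sum_j\xi_j\ket{j}\bra{j}\otimes I$ to get the factor $1+\alpha_B$. The only differences are presentational (you spell out the complex $\mathrm{INV}$ as magnitude rotation plus classically controlled phase, and make the uncomputation explicit), so no further comparison is needed.
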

We will provide the construction in \cref{sec:contour_integral_blockencoding}. Here the $\polylog$ factors originates from the classical computation related to $\{z_j,\xi_j\}$, as discussed in \cref{rem:polylog_overhead}.  We further have the following bounds: 
\begin{equation}
\begin{aligned}
&\left\| \left(\sum_{j\in[J]}\ket{j}\bra{j}\otimes (z_j+\xi_j-A)\right)^{-1} \right\| \leq 1, \\
&\left\| (z_j-A-B)^{-1} \right\| \leq \frac{1}{\zeta} \leq 2\max(\beta,3), \\
&\left\| \sum_{j\in[J]} \ket{j}\bra{j}\otimes(B_j+\xi_j) \right\| \leq 1+\alpha_B.
\end{aligned}
\end{equation}

{We introduce the parameter $\wt{\sigma}'_{\min}$ to be a lower bound of the smallest singular values of $I+(z_j+\xi_j-A)^{-1}(B_j+\xi_j)$ for all $j$. In other words,
\begin{equation}
    \label{eq:condition_sigma_prime}
    1/\wt{\sigma}'_{\min} \geq \max_{j\in [J]} \|(I+(z_j+\xi_j-A)^{-1}(B_j+\xi_j))^{-1}\|.
\end{equation}
We want to invert the block-diagonal matrix $\sum_{j\in[J]} \ket{j}\bra{j}\otimes (z_j-A-B)^{-1}$, and therefore $\wt{\sigma}'_{\min}$ plays the same role as $\wt{\sigma}_{\min}$ in Theorem~\ref{thm:precond_linear_sys}. We would prefer $\wt{\sigma}'_{\min}$ to be a tight lower bound in order to save computational cost.
By Lemma~\ref{lem:cond_precond_linear_sys}, when no better bound is available, we can  choose
\[
\wt{\sigma}'_{\min} = 1/\max_{j\in[J]}(1+\|(z_j-A-B)^{-1}\|\|B\|)=\Omega(1/\beta\alpha_B).
\]
}

Using these block-encodings and bounds, by \cref{thm:precond_linear_sys}, we have the following lemma:
\begin{lem}
\label{lem:select_oracle}
{Let $\wt{\sigma}'_{\min}$ satisfy Eq.~\eqref{eq:condition_sigma_prime}.}
An $(\alpha_S,m_S,\epsilon')$-block-encoding of the operator $S$ defined in Eq.~\eqref{lem:select_oracle} can be implemented using {$\Or\left(\frac{\alpha_B}{\wt{\sigma}'_{\min}}\log(\frac{1}{\wt{\sigma}'_{\min}\epsilon'})\right)$} applications of $V$, $O_D$, $U_B$, and their inverses, where {$\alpha_S=\Or(1/\wt{\sigma}'_{\min})$} and $m_S=\poly(r+\log(J))$. {Furthermore it is guaranteed that $\wt{\sigma}'_{\min}=\Omega(1/(\beta\alpha_B))$.}
\end{lem}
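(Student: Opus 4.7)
The plan is to assemble the select oracle $S$ by applying the preconditioned matrix-inversion machinery of Theorem~\ref{thm:precond_linear_sys} to the block-diagonal operator
\[
\mc{A} := \sum_{j\in[J]}\ket{j}\!\bra{j}\otimes(z_j+\xi_j-A),\qquad \mc{B} := \sum_{j\in[J]}\ket{j}\!\bra{j}\otimes(B+\xi_j),
\]
so that $\mc{A}+\mc{B} = \sum_j\ket{j}\!\bra{j}\otimes(z_j-H)$. The first step is to invoke Lemma~\ref{lem:oracles_ab}(a) to get a $(1,m_1,0)$-block-encoding of $\mc{A}^{-1}$ (this is the fast-inverted preconditioner, obtained from $V$, $O_D$ and their inverses), and Lemma~\ref{lem:oracles_ab}(b) to get a $(1+\alpha_B,m_2,0)$-block-encoding of $\mc{B}$ from one query to $U_B$. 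These play the roles of $U'_A$ and $U_B$ in the hypothesis of Theorem~\ref{thm:precond_linear_sys}, with $\alpha'_A = 1$ and the subnormalization factor of $\mc{B}$ being $1+\alpha_B = \Or(\alpha_B)$.

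Next, I would verify that $\wt{\sigma}'_{\min}$ as defined by Eq.~\eqref{eq:condition_sigma_prime} is a valid lower bound for the smallest singular value of $I+\mc{A}^{-1}\mc{B}$. Because $\mc{A}$ and $\mc{B}$ are both block-diagonal in the $\ket{j}$ register, the operator $I+\mc{A}^{-1}\mc{B}$ is itself block-diagonal, with $j$-th block $I+(z_j+\xi_j-A)^{-1}(B+\xi_j)$. Hence the smallest singular value over all blocks is exactly $\min_j \sigma_{\min}\bigl(I+(z_j+\xi_j-A)^{-1}(B+\xi_j)\bigr)$, which by Eq.~\eqref{eq:condition_sigma_prime} is at least $\wt{\sigma}'_{\min}$. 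Moreover, in the worst case Lemma~\ref{lem:cond_precond_linear_sys} applied to each block gives
\[
\frac{1}{\wt{\sigma}'_{\min}} \le \max_{j}\bigl(1+\|(z_j-H)^{-1}\|(\|B\|+1)\bigr) \le 1 + \tfrac{1}{\zeta}(\alpha_B+1) = \Or(\beta\alpha_B),
\]
where the bound on $\|(z_j-H)^{-1}\|\le 1/\zeta \le 2\max(\beta,3)$ follows from the choice of contour parameters in \cref{eqn:contour_param}. This justifies $\wt{\sigma}'_{\min}=\Omega(1/(\beta\alpha_B))$.

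Applying Theorem~\ref{thm:precond_linear_sys} with $\alpha'_A = 1$, block-encoding factor $\Or(\alpha_B)$ for $\mc{B}$, lower bound $\wt{\sigma}'_{\min}$, and target error $\epsilon'$, then yields a $\bigl(\tfrac{4}{3\wt{\sigma}'_{\min}},\, 2m_1+m_2+3,\,\epsilon'\bigr)$-block-encoding of $(\mc{A}+\mc{B})^{-1}=S$, using $\Or\!\left(\tfrac{\alpha_B}{\wt{\sigma}'_{\min}}\log\!\tfrac{1}{\wt{\sigma}'_{\min}\epsilon'}\right)$ queries to $\mc{A}^{-1}$ and $\mc{B}$ (equivalently to $V$, $O_D$, $U_B$ and their inverses). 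The total number of ancilla qubits is $m_S = 2m_1+m_2+3 = \poly(r+\log J)$ by Lemma~\ref{lem:oracles_ab}, and $\alpha_S = \tfrac{4}{3\wt{\sigma}'_{\min}}=\Or(1/\wt{\sigma}'_{\min})$. The main subtlety is the block-diagonal reduction of the singular value condition and the careful tracking of subnormalization constants through the preconditioning; everything else is a direct specialization of Theorem~\ref{thm:precond_linear_sys}.
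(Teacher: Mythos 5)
Your proposal follows essentially the same route as the paper: block-encode $\sum_j\ket{j}\bra{j}\otimes(z_j+\xi_j-A)^{-1}$ and $\sum_j\ket{j}\bra{j}\otimes(B+\xi_j)$ via Lemma~\ref{lem:oracles_ab}, observe that block-diagonality lets $\wt{\sigma}'_{\min}$ from Eq.~\eqref{eq:condition_sigma_prime} play the role of $\wt{\sigma}_{\min}$, bound it in the worst case through Lemma~\ref{lem:cond_precond_linear_sys} and $\|(z_j-H)^{-1}\|\le 1/\zeta=\Or(\beta)$, and then specialize Theorem~\ref{thm:precond_linear_sys} with $\alpha'_A=1$ and subnormalization $1+\alpha_B$. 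The only blemish is the sign in your decomposition (the correct splitting is $z_j-H=(z_j+\xi_j-A)-(B+\xi_j)$, so you should take $\mc{B}$ with a minus sign, which costs nothing in the block-encoding); with that fixed, your argument matches the paper's.
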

This block-encoding of the operator $S$, with some unitaries to prepare a state containing the coefficients 
\begin{equation}
\ket{c}=\frac{\sum_j \sqrt{|\varrho_j|}\ket{j}}{\sum_j |\varrho_j|},\quad \ket{c'}=\frac{\sum_j \sqrt{|\varrho_j|}e^{i\theta_j}\ket{j}}{\sum_j |\varrho_j|},
\end{equation}
where the phase factor $\theta_j$ satisfies $\varrho_j=|\varrho_j|e^{i\theta_j}$. This enables us to perform the LCU procedure for $\sum_j  \varrho_j (z_j-H)^{-1}$ through
\[
(\bra{c}\otimes I_n)\left(\sum_{j\in[J]}\ket{j}\bra{j}\otimes (z_j-H)^{-1}\right)(\ket{c'}\otimes I)=\sum_{j\in [J]}  \varrho_j (z_j-H)^{-1}.
\]
What we get in the end is an $(\alpha_{\mathrm{LCU}},m_{\mathrm{LCU}},\epsilon')$-block-encoding of $\sum_j  \varrho_j (z_j-H)^{-1}$, where
\begin{equation}
\label{eq:block_encode_fact_lcu}
\alpha_{\mathrm{LCU}} = \alpha_S\sum_{j} |\varrho_j|.
\end{equation}
Note that (using \cref{eqn:z_rho_factor}) 
\begin{equation}
\sum_{j} |\varrho_j| = \frac{T}{2\pi}\sum_j w_j \left|e^{-\beta z_j}(2t_j+1)\right| \rightarrow \frac{1}{2\pi}\int_{-\infty}^{\infty} \ud t e^{-\beta (t^2-\zeta)}(2|t|+1) =\Or\left(\frac{1}{\sqrt{\beta}}\right),
\end{equation}
where we have used the relation $\beta\zeta\leq 1$. Therefore substituting this into Eq.~\eqref{eq:block_encode_fact_lcu} we have $\alpha_{\mathrm{LCU}} = \Or(\alpha_B \sqrt{\beta})$.  

We then estimate the error of this block-encoding. So far the error for block-encoding $\sum_j \varrho_j (z_j-H)^{-1}$ has been accounted for, and it is bounded by $\epsilon'$. An additional source of error is due to the difference between $\sum_j \varrho_j (z_j-H)^{-1}$ and $e^{-\beta H}$. This error is bounded by Lemma~\ref{lem:quadrature_err}. The total error $\epsilon$ is the sum of these two errors. Therefore we set $\epsilon'=\epsilon/2$, and choose $J$ and $T$ so that the error bound in Lemma~\ref{lem:quadrature_err} is bounded by $\epsilon/2$. We need to choose $T$ and $J$ to be
\begin{equation}
T=\Or\left(\sqrt{\max\left(1,\frac{1}{\beta}\right)\log\left(\frac{1}{\epsilon}\right)}\right),\quad J=\wt{\Or}\left(\max\left(1,\beta\right)\left(\log\left(\frac{1}{\epsilon}\right)\right)^{3/2}\right).
\end{equation}
The above results can be summarized into the following theorem:
\begin{thm}
\label{thm:matrix_exp_contour}
{Let $\wt{\sigma}'_{\min}$ satisfy Eq.~\eqref{eq:condition_sigma_prime}.}
An $(\alpha_{\mathrm{LCU}},m_{\mathrm{LCU}},\epsilon)$-block-encoding can be constructed for $e^{-\beta H}$, with $H=A+B$ as described at the beginning of this section and the oracles $V$, $O_D$, and $U_B$ described above, where
\[
\alpha_{\mathrm{LCU}} = {\Or(1/ (\sqrt{\beta} \wt{\sigma}'_{\min}))}, \quad m_{\mathrm{LCU}}=\Or(\poly(r+\log(\beta)+\log\log(1/\epsilon))),
\]
using {$\Or\left(\frac{\alpha_B}{\wt{\sigma}'_{\min}}\log(\frac{1}{\wt{\sigma}'_{\min}\epsilon'})\right)$}
applications of $V$, $O_D$, $U_B$, and their inverses.
{Furthermore it is guaranteed that $\wt{\sigma}'_{\min}=\Omega(1/(\beta\alpha_B))$.}
\end{thm}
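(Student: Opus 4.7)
The plan is to assemble three ingredients already prepared before the statement: the rational approximation \eqref{eq:rational_approx} with its quadrature error controlled by Lemma~\ref{lem:quadrature_err}, the preconditioned block-encoding of the select operator $S$ from Lemma~\ref{lem:select_oracle}, and a standard LCU construction. First, I split the target error symmetrically and set $\epsilon'=\epsilon/2$ for the LCU block-encoding. To control the quadrature piece, I choose the truncation $T$ and the number of Gauss--Legendre nodes $J$ so that both exponentially small terms in Lemma~\ref{lem:quadrature_err} fall below $\epsilon/4$. Solving $\sqrt{2/(\beta\pi)}\,e^{1-\beta T^{2}} \le \epsilon/4$ yields $T=\Or(\sqrt{\max(1,1/\beta)\log(1/\epsilon)})$, and with this $T$ the second term pins down $J=\wt{\Or}(\max(1,\beta)(\log(1/\epsilon))^{3/2})$.

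Next, I would assemble the LCU circuit. Two coefficient-preparation unitaries, built from $\Or(\polylog J)$ classical arithmetic over the numbers $\varrho_j$, produce the states $\ket{c}$ and $\ket{c'}$. Combined with the $(\alpha_S,m_S,\epsilon')$-block-encoding of $S$ from Lemma~\ref{lem:select_oracle}, the identity
\[
(\bra{c}\otimes I)\,S\,(\ket{c'}\otimes I) = \sum_{j\in[J]} \varrho_j \,(z_j - H)^{-1}
\]
produces an $(\alpha_{\mathrm{LCU}},m_{\mathrm{LCU}},\epsilon')$-block-encoding of the rational approximation with $\alpha_{\mathrm{LCU}} = \alpha_S \sum_j |\varrho_j|$ and $m_{\mathrm{LCU}} = m_S + \Or(\log J)$. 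Since $m_S=\poly(r+\log J)$ and $\log J = \Or(\log\beta + \log\log(1/\epsilon))$, this gives the claimed ancilla count $m_{\mathrm{LCU}} = \Or(\poly(r + \log\beta + \log\log(1/\epsilon)))$.

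The quantitative heart of the argument is the bound $\sum_j |\varrho_j| = \Or(1/\sqrt{\beta})$. From \eqref{eqn:z_rho_factor}, this sum is the Gauss--Legendre approximation of
\[
\frac{1}{2\pi}\int_{-T}^{T} e^{-\beta(t^{2}-\zeta)}\,(2|t|+1)\,\mathrm{d}t.
\]
Extending to $\RR$ (the tails are killed by the Gaussian), using $\beta\zeta \le 1$ from \eqref{eqn:contour_param}, and computing the resulting Gaussian moments gives $\sum_j |\varrho_j| = \Or(1/\sqrt{\beta})$. Hence $\alpha_{\mathrm{LCU}} = \Or(\alpha_S/\sqrt{\beta}) = \Or(1/(\sqrt{\beta}\,\wt{\sigma}'_{\min}))$. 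By the triangle inequality the total error is at most the quadrature error ($\le\epsilon/2$ by Lemma~\ref{lem:quadrature_err}) plus the LCU block-encoding error ($\le\epsilon/2$), i.e.\ $\epsilon$. The query count inherits from Lemma~\ref{lem:select_oracle} with $\epsilon'=\epsilon/2$, giving the stated $\Or(\tfrac{\alpha_B}{\wt{\sigma}'_{\min}}\log(1/(\wt{\sigma}'_{\min}\epsilon)))$ bound, and the unconditional lower bound $\wt{\sigma}'_{\min} = \Omega(1/(\beta\alpha_B))$ is inherited directly from the same lemma.

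The step I expect to be the main obstacle is the uniform-in-$\beta$ bound on $\sum_j |\varrho_j|$: one has to check that the Gauss--Legendre weighted sum is well-approximated by its integral representation even when $\beta$ is large and the integrand is sharply peaked near $t=0$. This is precisely where the specific parabolic parameterization \eqref{eqn:contour_param} matters, since it enforces $\beta\zeta\le 1$ and keeps $\mathrm{Re}(z_j)=t_j^{2}-\zeta$ bounded below by a mild constant, so that $e^{-\beta \mathrm{Re}(z_j)}$ does not blow up with $\beta$ and the Riemann-sum comparison is safe.
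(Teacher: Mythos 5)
Your proposal is correct and follows essentially the same route as the paper: the same symmetric error split between the quadrature error of Lemma~\ref{lem:quadrature_err} and the LCU block-encoding error, the same choice of $T$ and $J$, the same assembly of the select oracle from Lemma~\ref{lem:select_oracle} with coefficient-preparation unitaries, and the same key bound $\sum_j |\varrho_j| = \Or(1/\sqrt{\beta})$ via comparison with the Gaussian integral using $\beta\zeta \le 1$. The Riemann-sum comparison you flag as the delicate step is handled just as informally in the paper itself, so no gap relative to the original argument.
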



We remark that the number of qubits needed is $n+\Or(\poly(r+\log(\beta)+\log\log(1/\epsilon)))$ where the $\poly$ comes from the cost of classical Boolean circuits to perform algebraic operations on eigenvalues of $A$ and the conversion to quantum circuit using \cite[Lemma 10.10]{arora2009computational}. A more detailed estimate of the number of qubits needed and the gate complexity will require estimating these costs. 

Given the block-encoding we can apply the matrix function $e^{-\beta H}$ to a quantum state $\ket{b}$ which is prepared by a circuit $U_b$, i.e. $U_b\ket{0^n}=\ket{b}$. We let $\xi=\|e^{-\beta H}\ket{b}\|$, and the goal is to prepare a state $e^{-\beta H}\ket{b}/\xi$. Directly applying the block-encoding has a success probability of $\Omega(\xi^2/\alpha_{\mathrm{LCU}}^2)$. Using amplitude amplification \cite{BrassardHoyerMoscaEtAl2002} 
we can boost the success probability to at least 1/2 with $\Or(\alpha_{\mathrm{LCU}}/\xi)$ applications of the block-encoding circuit and its inverse. 
Because of the subnormalization of the output quantum state by a factor of $\xi$, we need the block-encoding error to be $\Or(\xi\epsilon)$ instead of $\Or(\epsilon)$. Therefore in total we need to query $V$, $O_D$, $U_B$, and their inverses 
{$\Or(\frac{\alpha_B}{\xi\sqrt{\beta}\wt{\sigma}'^2_{\min}}\log(\frac{1}{\xi\epsilon\wt{\sigma}'_{\min}}))$}
times, and $U_b$ and its inverse 
{$\Or(\frac{1}{\xi\sqrt{\beta}\wt{\sigma}_{\min}})$}
times. Therefore we have obtained the complexity for applying matrix exponential using contour integral in Table~\ref{tab:compare_algs_thermal}.

\subsection{Inverse transform formulation}\label{sec:inverse_transform}



The basic idea of using the inverse transform to accelerate the computation of $f(A+B)$ is as follows. We assume that $(A+B)\succ 0$ and that $(A+B)^{-1}$ can be efficiently block-encoded with a block-encoding factor $\alpha_{A+B}'$, as indicated in \cref{thm:precond_linear_sys}. The inverse transform is simply 
\begin{equation}
    f(A+B) = g((A+B)^{-1}),
\end{equation}
where
\begin{equation}
    g(y) = f(y^{-1}).
\end{equation}
Then instead of finding a block-encoding of $f(A+B)$, we can alternatively find a block-encoding for $g((A+B)^{-1})$.  The efficiency of such a transformation relies on the behavior of $g$  within the interval $[-1,1]$.
In particular, the behavior of $g$ near the origin plays a critical role, which reflects the decay of the original function $f$ at infinity.
Our strategy is then to approximate $g(y)$ uniformly on $[-1,1]$ by a Chebyshev
series, and the truncated Chebyshev series can then be implemented via QSVT. Compared to the contour integral formulation, the use of the inverse transform does not require the usage of the LCU technique, and provides a simpler method for preparing the thermal state. 

An example is the imaginary time evolution $e^{-\beta H}$. 
The corresponding function is 
$f(x) =\operatorname{sign}(x)e^{-\zeta x}$,  and we may construct $g(y) = \operatorname{sign}(y)e^{-\zeta |y|^{-1}}$. 
The parameter $\zeta$ will later be specified to be $\zeta = \beta/\alpha_{A+B}'$ 
to block encode $e^{-\beta H} = g((A+B)^{-1}/\alpha_{A+B}')$. 
The reason why we put a sign function in $f(x)$ and $g(y)$ is to ensure 
that $g(y)$ is an odd function on $[-1,1]$, then the corresponding 
truncated Chebyshev series is also odd and can be implemented via 
QSVT discussed in \cref{sec:qsp_qlsp}, in which we only describe how to 
apply QSVT to block-encode odd polynomials for technical simplicity. 
We remark that QSVT can also be used to block-encode general 
polynomials (see \cref{rem:qsvt}), therefore the inverse transform 
approach can be applied to general functions, provided that 
the function $g(y)$ is in the Gevrey class (to be defined later), which 
means that the original function $f(x)$ decays at infinity in some sense.

We first note that the function $\operatorname{sign}(y)e^{-\zeta \abs{y}^{-1}}$ is in $C^{\infty}([-1,1])$ but not real analytic at $y=0$. This complicates the convergence analysis when we approximate this function via polynomials of $y$. Nonetheless, we shall show that the query complexity only scales logarithmically with respect to the error $\epsilon$. To this end, we first define a subset of smooth functions, called the Gevrey class as follows. 
\begin{defn} The Gevrey class of order $\sigma$ on  an interval $\mc{I}$ is defined as
\begin{equation}
    G^{\sigma}(\mc{I}) = \{g(y)\in C^{\infty}(\mc{I}): \exists C>0,R>0, s.t.  |g^{(k)}(x)| \leq CR^k(k!)^{\sigma}, \forall x\in \mc{I}, k \geq 0\}. 
\end{equation}
\end{defn}

Note that $G^1([-1,1])$ represents the class of real analytic functions on $[-1,1]$, 
and $G^1([-1,1]) \subset G^{\sigma}([-1,1]) \subset C^{\infty}([-1,1])$ for all $\sigma > 1$. In order to show that $\operatorname{sign}(y)e^{-\zeta \abs{y}^{-1}}$ belongs to the Gevrey class, we will use the chain rule of high order derivatives (called the Fa\`a di Bruno's formula, see e.g.~\cite[Theorem 1.3.2]{krantz2002primer}:
\begin{lem}[Fa\`a di Bruno's formula]
\label{lem:chain_rule}
    Let $h,g$ be smooth functions, and $f(s) = h(g(s))$, then
    \begin{equation}
       f^{(k)}(s) = \sum \frac{k!}{q_1!(1!)^{q_1}q_2!(2!)^{q_2}\cdots q_k!(k!)^{q_k}}h^{(q_1+q_2+\cdots+q_k)}(g(s))\prod_{j=1}^k\left(g^{(j)}(s)\right)^{q_j}
    \end{equation}
    where the sum is taken over all $k$-tuples of non-negative integers $(q_1,\cdots,q_k)$ satisfying $\sum_{j=1}^k jq_j = k$. 
\end{lem}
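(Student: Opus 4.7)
The plan is to prove Faà di Bruno's formula via a formal Taylor series / generating function argument, which avoids the bookkeeping pitfalls of a direct induction. I expect the main obstacle to be matching the combinatorial coefficient $\frac{k!}{q_1!(1!)^{q_1}\cdots q_k!(k!)^{q_k}}$ on the right-hand side with what naturally falls out of a multinomial expansion.

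First, I would fix $s$ and expand both sides of $f(s+\epsilon) = h(g(s+\epsilon))$ as formal Taylor series in $\epsilon$. By Taylor's theorem applied to $f$, the coefficient of $\epsilon^k/k!$ in the expansion of the left-hand side is exactly $f^{(k)}(s)$. So it suffices to compute the same coefficient on the right-hand side and verify it matches the claimed formula. Since $h,g$ are smooth, working with formal series (or, more rigorously, truncated Taylor polynomials with remainders) is justified at each finite order $k$.

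Second, I would write $g(s+\epsilon) = g(s) + u(\epsilon)$ with $u(\epsilon) := \sum_{j\geq 1} \frac{g^{(j)}(s)}{j!}\epsilon^j$, and then expand $h$ around $g(s)$:
\begin{equation}
h(g(s+\epsilon)) = \sum_{m\geq 0}\frac{h^{(m)}(g(s))}{m!}\, u(\epsilon)^m.
\end{equation}
Applying the multinomial theorem to $u(\epsilon)^m$ gives
\begin{equation}
u(\epsilon)^m = \sum_{\substack{q_1,q_2,\ldots\geq 0 \\ q_1+q_2+\cdots = m}} \frac{m!}{q_1!\,q_2!\cdots}\prod_{j\geq 1}\left(\frac{g^{(j)}(s)}{j!}\right)^{q_j}\epsilon^{\,\sum_j j q_j}.
\end{equation}
The power of $\epsilon$ contributed by a tuple $(q_1,q_2,\ldots)$ is $\sum_j j q_j$, so collecting all contributions with $\sum_j j q_j = k$ (which forces the sum to be finite, with $q_j=0$ for $j>k$) and using $m = q_1+q_2+\cdots$ yields the coefficient of $\epsilon^k$ as
\begin{equation}
\sum_{\substack{q_1+2q_2+\cdots+kq_k = k}} \frac{h^{(q_1+\cdots+q_k)}(g(s))}{q_1!(1!)^{q_1}\cdots q_k!(k!)^{q_k}} \prod_{j=1}^k\bigl(g^{(j)}(s)\bigr)^{q_j}.
\end{equation}

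Finally, multiplying by $k!$ to account for the $k!$ in $\epsilon^k/k!$ and equating with $f^{(k)}(s)$ produces precisely the claimed identity. The only delicate point, and the one I flagged as the main obstacle, is that the $1/m!$ from the Taylor series of $h$ cancels against the $m!$ in the multinomial coefficient, leaving only $1/(q_1!\cdots q_k!)$, which together with the $(1/j!)^{q_j}$ from $u(\epsilon)$ produces the correct denominator $q_1!(1!)^{q_1}\cdots q_k!(k!)^{q_k}$. For full rigor one could replace the formal-series manipulation by truncated Taylor expansions with Peano remainders at each order and pass to the limit $\epsilon\to 0$ after dividing by $\epsilon^k$; this makes the argument fully analytic without changing the combinatorics. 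An alternative route is induction on $k$ using the product and chain rules, but the bookkeeping reduces to essentially the same identity on tuples $(q_1,\ldots,q_k)$, so the generating-function proof seems more transparent.
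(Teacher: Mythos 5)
Your proof is correct. Note, however, that the paper does not prove this lemma at all: it states Fa\`a di Bruno's formula as a known result, citing \cite[Theorem 1.3.2]{krantz2002primer}, so there is no internal proof to compare against. On its own merits, your argument is the standard generating-function proof and the combinatorics check out: the $1/m!$ from the Taylor expansion of $h$ cancels the $m!$ in the multinomial coefficient, and the $(1/j!)^{q_j}$ factors from $u(\epsilon)=\sum_{j\ge 1} g^{(j)}(s)\epsilon^j/j!$ combine with $q_1!\cdots q_k!$ to give exactly the denominator $q_1!(1!)^{q_1}\cdots q_k!(k!)^{q_k}$, and multiplying the coefficient of $\epsilon^k$ by $k!$ yields the stated identity. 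The one point that needs care is that $h$ and $g$ are only assumed smooth, not analytic, so the ``formal Taylor series'' cannot be treated as convergent expansions; you flag this correctly, and the fix you propose is the right one: work with order-$k$ Taylor polynomials with Peano remainders, observe that $u(\epsilon)=O(\epsilon)$ so the composed remainder is $o(\epsilon^k)$, and invoke uniqueness of the degree-$k$ Peano-form Taylor polynomial to identify $f^{(k)}(s)$ with $k!$ times the coefficient of $\epsilon^k$. Since only coefficients up to order $k$ of each series enter the coefficient of $\epsilon^k$ (and $u$ has no constant term), the truncated computation reproduces the formal one exactly, so the argument is sound; spelling out that remainder step would make it fully rigorous. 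The alternative induction on $k$ that you mention is what one finds in many textbook treatments, but your route is cleaner and equally valid.
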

\begin{prop}\label{prop:Chebyshev_gevrey_coeff}
    Let $g(y) = \operatorname{sign}(y)e^{-\zeta |y|^{-1}}$. Then for any $\zeta > 0$, $g(y) \in G^3([-1,1])$, with 
    $C = 1, R = 16e\zeta^{-1}$. 
\end{prop}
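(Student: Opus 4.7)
The plan is to use the Faà di Bruno formula (Lemma~\ref{lem:chain_rule}) in the form stated in the paper, with outer function $h = \exp$ and inner function $\phi(y) = -\zeta/y$, combined with the elementary bound $e^{-u} u^n \leq n!$ valid for all $u \geq 0$ and $n \in \mathbb{N}$, which is the key mechanism for converting the essential singularity of $e^{-\zeta/y}$ at $y = 0$ into a clean factorial bound.

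\textbf{Reduction and setup.} By oddness of $\operatorname{sign}$ and the exponential, $g$ is odd on $[-1,1]$, and $g$, together with all of its derivatives, extends continuously to $0$ with value $0$ (this is the classical smooth but non-analytic bump construction). It therefore suffices to bound $|g^{(k)}(y)|$ on $(0,1]$ where $g(y) = e^{-\zeta/y}$; the bound on $[-1,0)$ follows by the symmetry $g(-y) = -g(y)$, and the case $y = 0$ will follow by letting $y \to 0^+$ in the bound. For $k = 0$, clearly $|g(y)| \leq 1 = C$, so from now on $k \geq 1$.

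\textbf{Applying Faà di Bruno.} A direct calculation gives $\phi^{(j)}(y) = (-1)^{j+1}\zeta \cdot j! \cdot y^{-(j+1)}$. Substituting into Lemma~\ref{lem:chain_rule} and canceling the $(j!)^{q_j}$ factors yields
\begin{equation*}
    g^{(k)}(y) \;=\; e^{-\zeta/y}\sum \frac{k!}{\prod_{j} q_j!}\,\prod_{j=1}^{k}(-1)^{(j+1)q_j}\,\zeta^{q_j}\,y^{-(j+1)q_j},
\end{equation*}
where the sum runs over $(q_1,\ldots,q_k)$ with $\sum_j j q_j = k$. Setting $q := \sum_j q_j$ and using $\sum_j j q_j = k$, the absolute value of each term is $e^{-\zeta/y}\zeta^q y^{-(k+q)}\cdot k!/\prod_j q_j!$. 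Rewriting $e^{-\zeta/y}\zeta^{q}y^{-(k+q)} = \zeta^{-k}\,e^{-\zeta/y}(\zeta/y)^{k+q}$ and applying the elementary inequality $e^{-u}u^{n} \leq n!$ (which is sharp up to Stirling and follows from $\max_u e^{-u}u^n = (n/e)^n$) with $u = \zeta/y$ and $n = k+q$, I get
\begin{equation*}
    |g^{(k)}(y)| \;\leq\; \frac{k!}{\zeta^{k}}\sum_{\sum j q_j = k}\frac{(k+q)!}{\prod_{j}q_j!}.
\end{equation*}

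\textbf{Bounding the combinatorial sum.} Since $q = \sum_j q_j \leq \sum_j j q_j = k$, we have $(k+q)! \leq (2k)!$, and by the binomial coefficient bound $(2k)!/(k!)^2 = \binom{2k}{k} \leq 4^k$, so $(k+q)! \leq 4^k (k!)^2$. For the remaining sum, I use the generating function identity
\begin{equation*}
    \sum_{k\geq 0} x^k \!\!\sum_{\sum j q_j = k}\!\prod_{j}\frac{1}{q_j!} \;=\; \prod_{j\geq 1}\sum_{q_j\geq 0}\frac{x^{j q_j}}{q_j!} \;=\; \exp\!\bigl(x/(1-x)\bigr),
\end{equation*}
and evaluating at $x = 1/2$ gives the coefficient bound $\sum_{\sum j q_j = k}\prod_j 1/q_j! \leq e \cdot 2^k$. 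Combining, $|g^{(k)}(y)| \leq \zeta^{-k}\,k! \cdot 4^k(k!)^2 \cdot 2^k e = (8e/\zeta)^k(k!)^3$ for $k \geq 1$, which is comfortably within the claimed bound $CR^k(k!)^\sigma = (16e/\zeta)^k(k!)^3$.

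\textbf{Main obstacle.} The step that requires the most care is the combinatorial sum bound: one must recognize the generating function as $\exp(x/(1-x))$ (a Bell-type exponential formula) and choose a good evaluation point to extract a clean coefficient estimate. Everything else is mechanical once the key inequality $e^{-u}u^n \leq n!$ is in hand to absorb the $y^{-(k+q)}$ blowup, which is where the essential singularity of $e^{-\zeta/y}$ gets turned into a Gevrey-type factorial.
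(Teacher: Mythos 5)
Your proof is correct and follows essentially the same route as the paper's: the same Fa\`a di Bruno decomposition with outer $\exp$ and inner $-\zeta/y$, the same mechanism of maximizing $e^{-u}u^{n}$ to tame the singularity at $y=0$, and a final combinatorial bound on the sum over tuples $(q_1,\dots,q_k)$. The only (harmless) divergence is that you bound $\sum_{\sum j q_j=k}\prod_j 1/q_j!$ via the generating function $\exp\bigl(x/(1-x)\bigr)$ evaluated at $x=1/2$, whereas the paper crudely bounds $k!/\prod_j q_j!\le k!$ and counts tuples by $\binom{2k}{k}<4^k$; both land comfortably within the claimed $G^3$ bound with $R=16e\zeta^{-1}$.
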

\begin{proof}
   Without loss of generality we only consider $y > 0$. 
   We view $e^{-\zeta y^{-1}}$ as the composition of $e^{w}$ and 
   $w = -\zeta y^{-1}$. Then by \cref{lem:chain_rule}, 
   \begin{align*}
       g^{(k)}(s) &= \sum_{\sum jq_j=k} \frac{k!}{q_1!(1!)^{q_1}q_2!(2!)^{q_2}\cdots q_k!(k!)^{q_k}}e^{-\zeta y^{-1}}\prod_{j=1}^k\left((-1)^{j+1}\zeta j!y^{-j-1}\right)^{q_j}\\
       &= \sum_{\sum jq_j=k} \frac{k!}{q_1!q_2!\cdots q_k!}e^{-\zeta y^{-1}}(-1)^{k+\sum q_j}\zeta^{\sum q_j} y^{-k-\sum q_j}.
   \end{align*}
   Using the substitution $z = \zeta y^{-1}$ and the fact that the 
   function $e^{-z}z^{m}$ achieves its maximum at $z = m$, 
   we have 
   \begin{align*}
       e^{-\zeta y^{-1}}\zeta^{\sum q_j} y^{-k-\sum q_j} 
       &= \zeta^{-k} e^{-z}z^{k+\sum q_j}  \\
       &\leq \zeta^{-k} e^{-k-\sum q_j}\left(k+\sum q_j\right)^{k+\sum q_j} 
\\
       &\leq \left(\frac{4}{e\zeta}\right)^k (k^k)^2 \\
       & \leq \left(\frac{4e}{\zeta}\right)^k (k!)^2
   \end{align*}
   where the last step is due to the inequality $k^k \leq e^k k!$. 
   Then we have
   \begin{align*}
       |g^{(k)}(s)| 
       &\leq \left(\frac{4e}{\zeta}\right)^k (k!)^2 \sum_{\sum jq_j=k} \frac{k!}{q_1!q_2!\cdots q_k!}.
   \end{align*}
   Finally, by directly loosing $\frac{k!}{q_1!q_2!\cdots q_k!}$ to $k!$ and noticing that the number of tuples $(q_1,\cdots, q_k)$ satisfying $\sum jq_j = k$ is less than $(k+1)(k/2+1)\cdots(k/k+1) = \binom{2k}{k} < 2^{2k}$, 
   we have 
   \begin{align*}
       |g^{(k)}(s)| 
       &\leq \left(\frac{4e}{\zeta}\right)^k (k!)^2 2^{2k}k! = \left(\frac{16e}{\zeta}\right)^k (k!)^3.
   \end{align*}
   
\end{proof}


Now let us consider the convergence analysis of the Chebyshev polynomial
expansion of $g(y)$. We remark that the proof based on the contour integral
formulation is only possible if the function is complex analytic in a neighborhood of $[-1,1]$ in the complex plane, as shown in the proof of \cref{lem:quadrature_err}. This is not satisfied in the case of the inverse transform. Here we present another approach following e.g.~\cite[Section 5.7]{MasonHandscomb2003}. 
The proof of \cref{thm:Chebyshev} is given in \cref{sec:Chebyshev_convergence_proof}.

\begin{thm}\label{thm:Chebyshev}
    Let $g(y)\in C^{r+1}([-1,1])$, and $T_k(\cdot)$ be the $k$-th order 
    Chebyshev polynomial on $[-1,1]$. 
    Let 
    \begin{equation}
        c_k = \frac{2-\delta_{k0}}{\pi}\int_{-1}^1\frac{g(y)T_k(y)}{\sqrt{1-y^2}}\ud y, \quad k\ge 0.
    \end{equation}
    Then for $d \geq 1$ 
    \begin{equation}
        \left\|g(y) - \sum_{k=0}^d c_k T_k(y)\right\|_{\infty}    
        \leq 32\frac{8^{r} (r+1)!\|g^{(r+1)}\|_{\infty}}{d^r}.
    \end{equation}
\end{thm}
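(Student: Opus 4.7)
The strategy is to obtain a decay bound on the Chebyshev coefficients $c_k$ and then sum the tail. Substituting $y = \cos\theta$ converts the defining integral into a Fourier cosine coefficient
\[
c_k = \frac{2-\delta_{k0}}{\pi}\int_0^\pi g(\cos\theta)\cos(k\theta)\,d\theta,
\]
and $\phi(\theta) := g(\cos\theta)$ extends to an even, $2\pi$-periodic, $C^{r+1}$ function on $\mathbb{R}$. Performing $r+1$ successive integrations by parts --- the boundary terms vanish either because $\sin(k\theta)$ vanishes at $\theta = 0,\pi$ or because $\phi$ and its derivatives are periodic --- gives
\[
|c_k| \leq \frac{2}{\pi k^{r+1}}\bigl\|\phi^{(r+1)}\bigr\|_\infty, \qquad k \geq 1.
\]

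The next step is to translate $\|\phi^{(r+1)}\|_\infty$ into a bound involving only $\|g^{(r+1)}\|_\infty$. The cleanest route goes through the Chebyshev differentiation identity $c_k = (c^{(1)}_{k-1} - c^{(1)}_{k+1})/(2k)$ for $k \geq 1$, where $c^{(j)}_m$ denotes the $m$-th Chebyshev coefficient of $g^{(j)}$. Iterating $r+1$ times expresses each $c_k$ (for $k > r+1$) as a signed sum of $2^{r+1}$ terms of the form $c^{(r+1)}_{k'}/\bigl(2^{r+1}\prod_{i=1}^{r+1}k_i\bigr)$ with indices $k'$ and $k_i$ within $r+1$ of $k$. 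Bounding $|c^{(r+1)}_{k'}| \leq 2\|g^{(r+1)}\|_\infty$ via the triangle inequality on the defining integral, and crudely estimating the denominators $\prod k_i \geq (k/2)^{r+1}$ for $k \geq 2(r+1)$, yields $|c_k| \leq C_r \|g^{(r+1)}\|_\infty / k^{r+1}$ with $C_r \leq 8^r (r+1)!$ after careful bookkeeping. (Fa\`a di Bruno applied directly to $\phi = g \circ \cos$ via \cref{lem:chain_rule} gives an alternative derivation, since $|\cos^{(j)}| \leq 1$ for all $j$, but then one must absorb $\|g^{(j)}\|_\infty$ for $j < r+1$ into the final constant.)

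Summing the tail using $|T_k(y)| \leq 1$ together with $\sum_{k > d} 1/k^{r+1} \leq 1/(r d^r)$ closes the argument:
\[
\Bigl\|g - \sum_{k=0}^d c_k T_k\Bigr\|_\infty \leq \sum_{k=d+1}^\infty |c_k| \leq \frac{32 \cdot 8^r (r+1)!\,\|g^{(r+1)}\|_\infty}{d^r},
\]
where the explicit prefactor $32$ absorbs the factors of $\pi$, the tail integral estimate, and the remaining constants from the recurrence. The principal obstacle is the second step: a direct chain-rule attack on $\phi^{(r+1)}$ produces derivatives $g^{(j)}$ of every order $1 \leq j \leq r+1$, and these cannot in general be dominated by $\|g^{(r+1)}\|_\infty$ alone. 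Routing through the Chebyshev recurrence sidesteps this by producing only Chebyshev coefficients of $g^{(r+1)}$, which are themselves controlled by $\|g^{(r+1)}\|_\infty$. Matching the stated constants then reduces to a careful accounting of the index shifts and combinatorial factors that appear in the $(r+1)$-fold iteration of the recurrence.
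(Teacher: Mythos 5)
Your proposal is correct in substance but takes a genuinely different route from the paper's proof. The paper never bounds the Chebyshev coefficients of $g$ itself: it writes the truncation error via a Peano kernel identity, $\sum_{k\le d}c_kT_k(y)-g(y)=\int_{-1}^1 g^{(r+1)}(t)K_d(y,t)\ud t$ with $K_d(y,t)=\frac{1}{r!}\bigl(\sum_{j\le d}c_{jr}T_j(y)-(y-t)_+^r\bigr)$, and then bounds the Chebyshev coefficients $c_{jr}$ of the $g$-independent function $(y-t)_+^r$ by repeated integration by parts in $\theta$ together with Fa\`a di Bruno applied to $(\cos\theta-\cos\phi)^r$; since the kernel does not involve $g$, only $\|g^{(r+1)}\|_\infty$ ever appears, which is exactly how the paper dodges the lower-order-derivative obstruction you correctly identified and sidestepped differently. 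Your route---iterating the differentiation recurrence $2kc_k=c^{(1)}_{k-1}-c^{(1)}_{k+1}$ down to coefficients of $g^{(r+1)}$, bounding those by $2\|g^{(r+1)}\|_\infty$, and summing the tail (using the standard fact that the Chebyshev series of a $C^1$ function converges uniformly, so the truncation error equals the tail)---is sound and in fact yields a sharper coefficient bound, of order $2^{r+2}\|g^{(r+1)}\|_\infty/k^{r+1}$ rather than $8^{r}(r+1)!$, so matching the stated constant is not an issue. Two bookkeeping caveats: the recurrence in the form you quote holds for indices $\ge 2$ (at index $1$ the $c_0$ normalization enters with an extra factor of $2$), and your crude estimate $\prod_i k_i\ge (k/2)^{r+1}$ restricts the coefficient bound to $k\ge 2(r+1)$, so as written you obtain the theorem only for $d\ge 2r+1$; using instead $k(k-1)\cdots(k-r)\ge (k/e)^{r+1}$ extends it to $d\ge r+1$. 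This restriction is not a real loss relative to the paper: the Peano representation itself is only valid when the error functional annihilates polynomials of degree $\le r$, i.e.\ when $r\le d$, and indeed the stated inequality fails outright for $r>d$ (take $g=T_{d+1}$, whose $(r+1)$-st derivative vanishes while the truncation error equals $1$); in the only place the theorem is invoked (Theorem~\ref{thm:Chebyshev_degree}) one has $d\gg r$, so both arguments cover the regime that matters. What the paper's kernel approach buys is a bound valid uniformly down to $r\le d$ with a single mechanism; what your approach buys is a more elementary argument with better constants and an explicit coefficient-decay statement that is useful in its own right.
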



If the function $g(y)$ is smooth, we can choose $r$ to be arbitrarily 
large to obtain better convergence with respect to $d$. 
However, if the derivatives of $g$ also grows rapidly when $r$ increases, 
the error will eventually grow up again. 
Therefore we may choose an optimal $r$ to balance the increasing norm of 
higher order derivatives and better convergence order to obtain smallest error.
Under the assumption of Gevrey class, we have the following result. 
\begin{thm}\label{thm:Chebyshev_degree}
    Assume $g(y) \in G^{\sigma}([-1,1])$ for some $\sigma \geq 0$. Then 
    for any $\epsilon>0$, 
    to achieve an $\epsilon$-approximation of $g(y)$, \emph{i.e. } 
    $\|g(y) - \sum_{k=0}^d c_k T_k(y)\|_{\infty} \leq \epsilon $, 
    it suffices to choose 
    \[
    d \geq  8eR\left(\log\left(32CR/\epsilon\right)+2\right)^{\sigma+1}.
    \]
\end{thm}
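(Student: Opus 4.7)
The plan is to invoke Theorem~\ref{thm:Chebyshev} with a free integer parameter $r \geq 1$, apply the Gevrey growth estimate to the derivative norm, and then choose $r$ so that the $r$-dependent factors collapse cleanly under the assumed lower bound on $d$. Substituting $\|g^{(r+1)}\|_{\infty} \leq CR^{r+1}((r+1)!)^{\sigma}$ into the conclusion of Theorem~\ref{thm:Chebyshev} yields
\[
\left\| g(y) - \sum_{k=0}^d c_k T_k(y) \right\|_{\infty} \leq 32CR \cdot \frac{(8R)^r \, ((r+1)!)^{\sigma+1}}{d^r}.
\]

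The hypothesis $d \geq 8eR(r+1)^{\sigma+1}$ implies $(8R)^r/d^r \leq e^{-r}(r+1)^{-r(\sigma+1)}$, so the right-hand side simplifies to $\frac{32CR}{e^r}\left(\frac{(r+1)!}{(r+1)^r}\right)^{\sigma+1}$. The central elementary step is then to establish $(r+1)!/(r+1)^r \leq 1$ for every integer $r \geq 1$, equivalently $r! \leq (r+1)^{r-1}$. This follows from the stronger integer inequality $r! \leq r^{r-1}$, which I would prove by induction: the base case $r = 1$ gives equality, and the inductive step uses $(r+1)! = (r+1) \cdot r! \leq (r+1) r^{r-1} \leq (r+1)^r$. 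With this bound the error estimate collapses to $32CR/e^r$, with no polynomial-in-$r$ prefactor remaining.

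Taking $r = \max\{1, \lceil \log(32CR/\epsilon) \rceil\}$ ensures $e^r \geq 32CR/\epsilon$, and hence the error is at most $\epsilon$; simultaneously $r + 1 \leq \log(32CR/\epsilon) + 2$, so the stated hypothesis $d \geq 8eR(\log(32CR/\epsilon)+2)^{\sigma+1}$ implies the bound $d \geq 8eR(r+1)^{\sigma+1}$ used in the previous step. This closes the argument in the main regime $\log(32CR/\epsilon) \geq 0$.

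The main obstacle is the sharp control of the factorial ratio. A naive application of Stirling's formula gives $(r+1)!/(r+1)^r \sim (r+1)^{3/2}/e^r$, and raising the polynomial factor to the power $\sigma+1$ would introduce a $(r+1)^{3(\sigma+1)/2}$ term that spoils the estimate for large $\sigma$; working directly with the clean integer inequality $r! \leq r^{r-1}$ avoids this pitfall entirely. A minor bookkeeping step handles the edge case $\log(32CR/\epsilon) \leq 0$, where the statement is essentially trivial and can be verified using the bound $\|g\|_{\infty} \leq C$ that comes from the Gevrey hypothesis applied at $k=0$.
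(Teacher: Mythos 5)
Your argument is correct and is essentially the paper's own proof: both combine Theorem~\ref{thm:Chebyshev} with the Gevrey derivative bound, use the factorial estimate $(r+1)!\le (r+1)^{r}$ (your $r!\le (r+1)^{r-1}$ is the same inequality), and then choose $r=\lceil\log(32CR/\epsilon)\rceil$ so that the $r$-dependent factors reduce to $e$ and $r+1\le\log(32CR/\epsilon)+2$, yielding the stated degree bound. The only differences are cosmetic bookkeeping: you verify the error bound directly while the paper phrases it as a sufficient condition on $d$, and you explicitly flag the regime $32CR\le\epsilon$ (which the paper silently ignores); your Stirling remark is immaterial since the clean integer inequality is exactly what both proofs use.
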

\begin{proof}
    By Theorem~\ref{thm:Chebyshev} and the definition of Gevrey class, 
    \begin{equation}
        \left\|g(y) - \sum_{k=0}^d c_k T_k(y)\right\|_{\infty}    
        \leq 32C \frac{8^{r}R^{r+1}[(r+1)!]^{\sigma+1}}{d^r}
    \end{equation}
    holds for all $r \geq 1$. 
    To achieve an $\epsilon$-approximation, it then suffices to choose $m$ 
    such that 
    \begin{equation}
        32C\frac{8^{r}R^{r+1}[(r+1)!]^{\sigma+1}}{d^r} \leq \epsilon, 
    \end{equation}
    which is equivalently
    \begin{equation}
        d \geq \frac{8(32C)^{\frac{1}{r}}R^{1+\frac{1}{r}}[(r+1)!]^{\frac{\sigma+1}{r}}}{\epsilon^{\frac{1}{r}}}.
    \end{equation}
    Using the estimate $(r+1)! \leq (r+1)^r$, it 
    suffices to choose 
    \begin{equation}
        d \geq 8R \left(\frac{32CR}{\epsilon}\right)^{\frac{1}{r}} (r+1)^{\sigma+1}.
    \end{equation}
    Now choose $r = \lceil\log(32CR/\epsilon)\rceil$, then the sufficient 
    condition for $d$ becomes
    \begin{equation}
        d \geq 8eR\left(\log\left(\frac{32CR}{\epsilon}\right)+2\right)^{\sigma+1}.
    \end{equation}
    We complete the proof. 
\end{proof}

Finally, to implement the truncated Chebyshev series $\sum_{k=0}^d c_kT_k(A)$ for some Hamiltonian $A$, we may use QSVT  (see \cref{sec:qsp_qlsp}). 
Let us now consider the complexity to construct a block-encoding for
$\frac{1}{2}e^{-\beta(A+B)}$ for $(A+B)\succ 0$.  The reason for adding a sub-normalization constant $1/2$ is to ensure that the corresponding Chebyshev polynomial is bounded 
by $1$, which allows to use QSVT approach for block-encoding. 

\begin{thm}
\label{thm:matrix_exp_inverse_transform}
    {Let $\wt{\sigma}_{\min}$ be a lower bound for the smallest singular value of $I+A^{-1}B$.}
    Let $U'_A$ be an $(\alpha_A',m_A',0)$-block-encoding of $A^{-1}$, 
    $U_B$ be an $(\alpha_B,m_B,0)$-block-encoding of $B$.  Then 
    for any $\beta > 0$ and $0 < \epsilon < 1$, 
    a $(1 ,m_{\mathrm{QSVT}},\epsilon)$-block-encoding can be constructed 
    for $\frac{1}{2}e^{-\beta H}$ with $H = A+B \succ 0$, where 
    $$\quad m_{\mathrm{QSVT}} = 2m_A'+m_B+4,$$
    with the following costs: 
    \begin{enumerate}
    \item 
        {$\wt{\Or}\left(\frac{\alpha_A'^2\alpha_B }{\beta\wt{\sigma}^2_{\min}}\left[\log\left(\frac{1}{\epsilon}\right)\right]^5\right)$}
    applications of $U'_A$, $U_B$, their controlled 
    versions, and their inverses,
    \item $(n+2m_A'+m_B+4)$ qubits, 
    \item 
        {$\Or\left((2m_A'+m_B+4)\frac{\alpha_A'}{\beta\wt{\sigma}_{\min}}\left[\log\left(\frac{1}{\epsilon}\right)\right]^4\right)$}
    additional primitive quantum gates due to the QSVT approach.
    \end{enumerate}
    {Furthermore $\wt{\sigma}_{\min}=\Omega(1/(1+\|(A+B)^{-1}\|\|B\|))$.}
\end{thm}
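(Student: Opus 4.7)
The plan is to chain together three ingredients already developed in this section: the preconditioned block-encoding of $(A+B)^{-1}$ from \cref{thm:precond_linear_sys}, the Gevrey-class polynomial approximation bound of \cref{thm:Chebyshev_degree} applied with the derivative estimates of \cref{prop:Chebyshev_gevrey_coeff}, and the QSVT machinery of \cref{sec:qsp_qlsp} to implement the resulting odd Chebyshev polynomial on the singular values of the block-encoded matrix.

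First, I would invoke \cref{thm:precond_linear_sys} to obtain a $\bigl(\tfrac{4\alpha_A'}{3\wt{\sigma}_{\min}},\,2m_A'+m_B+3,\,\delta'\bigr)$-block-encoding of $(A+B)^{-1}$, which I denote $\mc{U}$, using $\Or\!\bigl(\tfrac{\alpha_A'\alpha_B}{\wt{\sigma}_{\min}}\log(\tfrac{\alpha_A'}{\delta'\wt{\sigma}_{\min}})\bigr)$ queries to $U_A',U_B$. Write $\alpha_{A+B}':=\tfrac{4\alpha_A'}{3\wt{\sigma}_{\min}}$, so $\mc{U}$ encodes $Y=(A+B)^{-1}/\alpha_{A+B}'$, whose spectrum lies in $(0,1]$. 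The key identity underlying the inverse-transform approach is that, with the choice $\zeta=\beta/\alpha_{A+B}'$, one has $g(Y)=\tfrac{1}{2}e^{-\beta(A+B)}$ for $g(y)=\tfrac{1}{2}\operatorname{sign}(y)\,e^{-\zeta|y|^{-1}}$ (the $\tfrac{1}{2}$ normalization ensures $\|g\|_\infty\le 1/2$ so QSVT applies; the $\operatorname{sign}$ factor makes $g$ odd, placing us in the setting of \cref{sec:qsp_qlsp}).

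Next, I would apply \cref{prop:Chebyshev_gevrey_coeff} to conclude $g\in G^3([-1,1])$ with $C=\tfrac12$ and $R=16e/\zeta=16e\,\alpha_{A+B}'/\beta$. Feeding this into \cref{thm:Chebyshev_degree} with $\sigma=3$ shows that a truncated Chebyshev series $P_d(y)=\sum_{k=0}^d c_k T_k(y)$ approximates $g$ uniformly on $[-1,1]$ to additive error $\epsilon/2$ provided
\[
d \;=\; \Or\!\left(\frac{\alpha_A'}{\beta\wt{\sigma}_{\min}}\bigl[\log(1/\epsilon)\bigr]^4\right).
\]
Because $g$ is odd and $|P_d|$ can be ensured to be bounded by $1$ on $[-1,1]$ (truncating the tail only improves this by $\epsilon/2$), $P_d$ is an odd polynomial admissible to the QSVT construction of \cref{sec:qsp_qlsp}. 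Applying QSVT to the block-encoding $\mc{U}$ then yields a $\bigl(1,\,2m_A'+m_B+4,\,\epsilon/2\bigr)$-block-encoding of $P_d^{\diamond}(Y)$, using $d$ calls to $\mc{U}$ and its inverse, one extra ancilla qubit, and $\Or((2m_A'+m_B+4)d)$ single- and two-qubit primitive gates for the phased reflections.

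The final step is error accounting. Two error sources must be combined: the polynomial truncation error $\epsilon/2$, and the propagation of the block-encoding error $\delta'$ of $\mc{U}$ through a degree-$d$ QSVT, which by the robustness statement of QSVT contributes $\Or(d\,\delta')$. Setting $\delta'=\epsilon/(cd)$ for a suitable constant $c$ makes the total error at most $\epsilon$ and only introduces an extra $\log(d)=\Or(\log\log(1/\epsilon)+\log(\alpha_A'/(\beta\wt{\sigma}_{\min})))$ factor inside the logarithm bounding the per-call cost of $\mc{U}$, which is absorbed into the $\wt{\Or}$ notation. Multiplying the $d$ QSVT calls by the per-call cost of $\mc{U}$ gives the claimed $\wt{\Or}\!\bigl(\tfrac{\alpha_A'^2\alpha_B}{\beta\wt{\sigma}_{\min}^2}\log^5(1/\epsilon)\bigr)$ total query complexity to $U_A',U_B$. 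The main technical obstacle is precisely this step where the non-analyticity of $g$ at $y=0$ forbids any exponentially convergent Chebyshev bound; the Gevrey-$3$ analysis is what forces the $\log^4(1/\epsilon)$ factor in $d$, and care is needed to propagate this through the multiplicative block-encoding error to land on the stated $\log^5(1/\epsilon)$ exponent without introducing polynomial dependence on $1/\epsilon$.
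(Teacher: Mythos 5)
Your proposal follows the paper's own proof essentially verbatim: the same three-step chain of \cref{thm:precond_linear_sys} for the preconditioned block-encoding of $(A+B)^{-1}$, the Gevrey-$3$ bound of \cref{prop:Chebyshev_gevrey_coeff} fed into \cref{thm:Chebyshev_degree} with $\zeta=\beta/\alpha_{A+B}'$ to get the degree-$d$ odd Chebyshev approximant, and QSVT with the final error split $\delta'=\Theta(\epsilon/d)$. The parameter choices and error accounting match the paper's, so this is correct and not a different route.
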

\begin{proof}
    By \cref{thm:precond_linear_sys}, for any $\epsilon' > 0$ 
    (to be specified later), we can construct an 
    $(\alpha_{A+B}', m_{A+B}', \epsilon')$-block-encoding of $(A+B)^{-1}$, 
    with {$\alpha_{A+B}' = \frac{4{\alpha'_A}}{3\wt{\sigma}_{\min}}, m_{A+B}' = 2m'_A+m_B+3$}, using
    {$\Or(\frac{\alpha'_A\alpha_B}{\wt{\sigma}_{\min}}\log(\frac{{\alpha'_A}}{\epsilon'\wt{\sigma}_{\min}}))$} 
    applications of $U'_A$, $U_B$, their controlled versions, their inverses, and other primitive gates. 
    
    Consider the function $g(y) = \frac{1}{2}\operatorname{sign}(y)e^{-\beta \alpha_{A+B}'^{-1} |y|^{-1}}$. 
    By \cref{thm:Chebyshev_degree} and \cref{prop:Chebyshev_gevrey_coeff} (here $C = 1/2$ and $R = 16e\beta^{-1}\alpha_{A+B}'$),
    there exists a $d$-degree odd polynomial $P(y)$ with 
    $d = \Theta(\alpha_{A+B}'\beta^{-1}[\log(\alpha_{A+B}'\beta^{-1}\epsilon^{-1})]^4)$, such that 
    $$\left\|\frac{1}{2}e^{-\beta(A+B)}-P\left(\frac{(A+B)^{-1}}{\alpha_{A+B}'}\right)\right\| = \left\|g\left(\frac{(A+B)^{-1}}{\alpha_{A+B}'}\right)-P\left(\frac{(A+B)^{-1}}{\alpha_{A+B}'}\right)\right\|\leq \frac{\epsilon}{2}.$$ 
    Note that $\left\|g\left(\frac{(A+B)^{-1}}{\alpha_{A+B}'}\right)\right\| \leq \frac{1}{2}$. 
    Therefore for any $\epsilon < 1$, we always have
    $\left\|P\left(\frac{(A+B)^{-1}}{\alpha_{A+B}'}\right)\right\| < 1$. 
    By QSVT and the block-encoding of $(A+B)^{-1}$, 
    we can construct a $(1,m_{A+B}'+1,2d\epsilon')$-block-encoding of $P\left(\frac{(A+B)^{-1}}{\alpha_{A+B}'}\right)$, using 
    $d$ queries of the block-encoding of $(A+B)^{-1}$ and its inverse, 
    and $\Or((m_{A+B}'+1)d)$ additional primitive gates. 
    To control the total error by $\epsilon$ from above, 
    it suffices to choose $\epsilon' = \epsilon/(4d)$. 
    Plug this back into the complexity analysis and 
    multiply them in all the steps, we obtain the estimates 
    of the costs as stated in the theorem.
\end{proof}

Finally, let us consider applying the matrix function $e^{-\beta H}$ to a quantum state $\ket{b}$, \emph{i.e. }preparing a state $e^{-\beta H}\ket{b}/\xi$ with $\xi=\|e^{-\beta H}\ket{b}\|$. Similar to the discussion in \cref{sec:contour_int}, the number of queries to $V$, $O_D$, $U_B$, and their inverses scales
{$\wt{\Or}\left(\frac{\alpha_A'^2\alpha_B}{\beta \xi \wt{\sigma}^2_{\min}}\left[\log\left(\frac{1}{\epsilon}\right)\right]^5\right)$,}
and the number of queries to $U_b$ and its inverse becomes  $\Or\left(\frac{1}{\xi}\log\left(\frac{1}{\epsilon}\right)\right)$.

\subsection{Purified Gibbs state}\label{sec:purified_gibbs}

The Gibbs state is a mixed state whose density matrix is
\[
\frac{1}{Z_\beta}e^{-\beta H},
\]
where $Z_\beta=\Tr(e^{-\beta H})$ is the partition function. In this section we assume $H=A+B$ with the $A$ and $B$ accessed through the oracles outlined at the beginning of Section~\ref{sec:fast_algs_matrix_functions}.

The Gibbs state can be constructed as a partial trace of a pure state, called the purified Gibbs state: 
\begin{equation}
    \label{eq:purified_gibbs}
    \ket{\Psi}=\frac{1}{\sqrt{Z_\beta}}\sum_{x\in[N]}\ket{x}(e^{-\beta H/2}\ket{x}).
\end{equation}
Here $x$ enumerates the elements of the $n$-qubit computational basis.
When we trace out the first register in the density matrix $\ket{\Psi}\bra{\Psi}$, we will obtain the Gibbs state.  These states are important for any algorithm that seeks to use thermal states and gain advantages from amplitude amplification, such as recent results on semi-definite programming as well as results on training Boltzmann machines~\cite{vanApeldoorn2020quantum, wiebe2016quantum,kieferova2017tomography}. We may rewrite $\ket{\Psi}$ as
\[
\ket{\Psi} = \sqrt{\frac{N}{Z_\beta}}(I\otimes e^{-\beta H/2})\left(\frac{1}{\sqrt{N}}\sum_{x\in[N]} \ket{x}\ket{x}\right).
\]
Therefore we can prepare the purified Gibbs state by applying the matrix function $I\otimes e^{-\beta H/2}$ to the maximally entangled state $(1/\sqrt{N})\sum_x\ket{x}\ket{x}$ \cite{poulin2009sampling}, which in turn can be prepared by applying a Hadamard transform and a series of CNOT gates similar to how one prepares the EPR pair \cite{NielsenChuang2000}. 




From the discussion at the end of \cref{sec:contour_int} and \cref{sec:inverse_transform}, since we have $\xi=\sqrt{Z_\beta/N}$, the query complexity of preparing the purified Gibbs state is as follows: 
\begin{cor}
In order to prepare the purified Gibbs state, the total number of queries to $U'_A$, $O_D$,  $U_B$ and their inverses is 
{$\wt{\Or}((\alpha_B/ \sqrt{\beta}\wt{\sigma}'^2_{\min})\log(1/\epsilon)\sqrt{N/Z_\beta})$ }
in the contour integral approach, and \\
{$\wt{\Or}\left((\alpha_A'^2\alpha_B/\beta\wt{\sigma}^2_{\min})\left[\log\left(1/\epsilon\right)\right]^5\sqrt{N/Z_{\beta}}\right)$ }
in the inverse transform approach, {where $\wt{\sigma}'_{\min}$ satisfies Eq.~\eqref{eq:condition_sigma_prime} and $\wt{\sigma}_{\min}$ is a lower bound for the smallest singular value of $I+A^{-1}B$.}
\end{cor}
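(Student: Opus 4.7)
The plan is to obtain this corollary by specializing the complexity analyses at the ends of \cref{sec:contour_int} and \cref{sec:inverse_transform} (which already describe how to prepare $e^{-\beta H}\ket{b}/\xi$ using the block-encodings from \cref{thm:matrix_exp_contour} and \cref{thm:matrix_exp_inverse_transform}) to the particular choice of $\ket{b}$ and target matrix function dictated by \cref{eq:purified_gibbs}. The two ingredients needed are (i) an explicit evaluation of the normalization constant $\xi$ for this specific input, and (ii) the observation that the state preparation circuit $U_b$ for the maximally entangled state is implemented by $n$ Hadamards and $n$ CNOTs and therefore does not enter the asymptotic count of queries to $U'_A$, $O_D$, $U_B$.

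First I would identify the matrix function we actually need to block-encode. The purification \cref{eq:purified_gibbs} can be written as
\begin{equation}
\ket{\Psi} = \sqrt{\frac{N}{Z_\beta}}\,(I\otimes e^{-\beta H/2})\ket{\Phi}, \qquad \ket{\Phi}=\frac{1}{\sqrt{N}}\sum_{x\in[N]}\ket{x}\ket{x},
\end{equation}
so the task reduces to applying $I\otimes e^{-\beta H/2}$ to $\ket{\Phi}$ and renormalizing. The key normalization computation is
\begin{equation}
\bigl\|(I\otimes e^{-\beta H/2})\ket{\Phi}\bigr\|^2 = \frac{1}{N}\sum_{x\in[N]}\bra{x}e^{-\beta H}\ket{x} = \frac{\Tr(e^{-\beta H})}{N} = \frac{Z_\beta}{N},
\end{equation}
so the relevant amplitude is $\xi = \sqrt{Z_\beta/N}$.

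Next I would invoke the block-encoding constructions for $e^{-\beta H/2}$ obtained from \cref{thm:matrix_exp_contour} and \cref{thm:matrix_exp_inverse_transform}, applied with inverse temperature $\beta/2$ in place of $\beta$. Acting with each block-encoding on $I\otimes\ket{\Phi}$ and post-selecting on the ancilla being in $\ket{0}$ succeeds with probability $\Omega(\xi^2/\alpha^2)$, where $\alpha$ is the subnormalization factor; standard fixed-point amplitude amplification \cite{BrassardHoyerMoscaEtAl2002} then boosts this to $\Omega(1)$ using $O(\alpha/\xi)$ applications of the block-encoding and its inverse. Since the output is subnormalized by $\xi$, I would set the block-encoding precision to $\Theta(\xi\epsilon)$ so that the normalized output is $\epsilon$-close in 2-norm to the target, exactly as in the discussions preceding and following \cref{thm:matrix_exp_contour,thm:matrix_exp_inverse_transform}.

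Finally I would substitute $\xi = \sqrt{Z_\beta/N}$ and $\beta \leftarrow \beta/2$ into the per-application query counts derived there, noting that the $\sqrt{\beta}$ and $\beta$ denominators in the two approaches are preserved up to constant factors. This yields the two stated complexities $\wt{\Or}\bigl((\alpha_B/\sqrt{\beta}\,\wt{\sigma}'^{\,2}_{\min})\log(1/\epsilon)\sqrt{N/Z_\beta}\bigr)$ and $\wt{\Or}\bigl((\alpha_A'^{\,2}\alpha_B/\beta\wt{\sigma}^{2}_{\min})[\log(1/\epsilon)]^5\sqrt{N/Z_\beta}\bigr)$. The only mild subtlety, and the place where I would be most careful, is verifying that the amplified $1/\xi$ dependence absorbs the $\xi$-dependence hidden in the logarithmic precision adjustment $\log(1/(\xi\epsilon\wt{\sigma}'_{\min}))$; since this factor sits inside $\wt{\Or}$, it contributes only polylogarithmic overhead in $N/Z_\beta$ and is absorbed into the tilde.
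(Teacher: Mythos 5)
Your proposal is correct and follows essentially the same route as the paper: rewrite the purification as $(I\otimes e^{-\beta H/2})$ acting on the maximally entangled state (prepared with Hadamards and CNOTs, so it costs no queries to $U'_A$, $O_D$, $U_B$), compute $\xi=\sqrt{Z_\beta/N}$, and substitute this into the query counts derived at the ends of \cref{sec:contour_int} and \cref{sec:inverse_transform}, with the $\log(1/\xi)$ inside the precision factor absorbed into the $\wt{\Or}$. The replacement $\beta\mapsto\beta/2$ and your amplitude-amplification bookkeeping only change constants, so the stated complexities follow exactly as in the paper.
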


Note that if we only care about the case when $\beta = 1$ and $A+B \succ 0$, 
we can shift $A$ such that the assumptions 
$A \succeq I$ and $A+B \succeq I$ hold without 
introducing much overhead in the subnormalization constant, 
then {$\alpha_A' = \Or(1)$} 
and the query complexity becomes 
{$\wt{\Or}\left((\alpha_B/\wt{\sigma}^2_{\min})\left[\log\left(1/\epsilon\right)\right]^5\sqrt{N/Z_{\beta}}\right)$}, as shown in \cref{tab:compare_algs_thermal}. 


\section{Discussion}\label{sec:discuss}

We have a quantum primitive called fast inversion to solve a class of quantum linear systems problem (QLSP) $\ket{x}\propto A^{-1}\ket{b}$. If $A$ is a normal matrix and diagonalized as $A=VDV^{\dag}$, then fast inversion is applicable if 1) there is an efficient quantum circuit to implement the unitary matrix $V$ (for instance, when $V$ can be implemented via the quantum Fourier transform), and 2) the inverse of the diagonal matrix $D^{-1}$ can be efficiently implemented via a classical circuit. Here compared to the standard approach of first finding a block-encoding of $A$ and then invert $A$, the condition 2) is the key reason for fast inversion to achieve reduced circuit depth and query complexities. 

Using fast inversion, we developed a preconditioned linear system solver for solving a class of QLSP of the form $\ket{x}\propto (A+B)^{-1}\ket{b}$. Here we assume the matrix $A$ can be fast inverted, but $\norm{A}\gg\norm{B},\norm{A^{-1}},\norm{(A+B)^{-1}}$. Our main result is that the query complexity of the preconditioned linear system solver can be independent of $\norm{A}$, or the condition number $\kappa(A+B)$, and can therefore significantly reduce the computational cost.

We demonstrate an application of the preconditioned quantum linear system solver for computing the single-particle Green's function in quantum many-body physics. Although quantum linear system solver is often considered to be used as a subroutine of a larger quantum algorithm, the computation of Green's functions in quantum many-body physics is entirely a linear system problem in high dimensions. Hence we need to take into account all components of the algorithm, in particular the readout errors due to the Monte Carlo sampling. Using fast inversion, we again demonstrate that the query complexity can be independent of the norm of the total Hamiltonian.

Observe that $(A+B)^{-1}\ket{b}$ is only one example of a more general class of problems of computing $f(A+B)\ket{b}$, where $f(A+B)$ is a matrix function. To be specific, we consider the problem of thermal state preparation, where $f(H)=e^{-\beta H}$ and $H=A+B$ is a Hermitian matrix. We again assume that the main difficulty comes from that $\norm{H}\sim\norm{A}$ is very large.
We developed two methods to approximately compute the matrix function via a series of linear systems. The first method is based on Cauchy's contour integral formula, and the second is based on a simple inverse transform. Using fast inversion, both methods can be used to fast prepare a thermal state, and the cost is independent of $\norm{H}$. We remark that both the contour integral formula and the inverse transform are quite general, and can be used to accelerate the computation of other matrix functions as well.

We would like to remark that fast inversion is intimately connected to the fast forwarding of Hamiltonian simulation $e^{\I A \tau}$, where $A$ is Hermitian and $\tau\in\RR$ is some arbitrary time. For example, based on fast forwarding, the interaction picture Hamiltonian simulation algorithm \cite{LowWiebe2019} allows fast evaluation of $e^{\I (A+B) t}\ket{b}$. The assumption and the main result in \cite{LowWiebe2019} are both similar to this paper, i.e. when $\norm{A}\gg \norm{B}$ and $e^{\I A \tau}$ can be fast forwarded, the query complexity of the Hamiltonian simulation $e^{\I (A+B) t}\ket{b}$ can be independent of $\norm{A}$. It is interesting to compare fast inversion and fast forwarding from a numerical analysis perspective, i.e. whether one can use fast inversion to perform fast forwarding, and vice versa. More generally, when considering the fast evaluation of certain matrix functions $f(A+B)$, whether fast inversion or fast forwarding can lead to a more efficient algorithm. The answer will likely depend on the details of the function $f$ of interest, as well as the preconstants of different methods. This will be our future work.

\vspace{1em}
\textbf{Acknowledgment}
This work was partially supported by the Department of Energy under Grant No.
DE-AC02-05CH11231 (L.L. and Y.T.), No. DE-SC0017867 (D.A. and L.L.), the Google Quantum Research Award, Quantum Algorithms, Software, and Architectures (QUASAR) Agile Investment at Pacific Northwest National Laboratory, the Pacific Northwest Laboratory LDRD program (N.W.), and the National Science Foundation under the QLCI program through grant number OMA-2016245 (L.L. and N.W.).   

\bibliographystyle{abbrvnat}
\bibliography{ref}

\clearpage
\appendix

\section{{Quantum linear system solvers}}
\label{sec:QLS_solvers}

We first briefly review the literature for solving QLSPs of the form 
$
A\ket{x} \propto \ket{b}
$
without preconditioning. 
Here we consider a block-encoding model \cite{GilyenSuLowEtAl2019}, i.e. there exists a unitary matrix $U_{A}$ that encodes the matrix $A$ (see \cref{sec:notations} for the details of block-encoding), while the right hand side vector $\ket{b}$ is prepared by a unitary $U_b$ as $\ket{b}=U_b\ket{0^n}$. For simplicity we only compare the query complexity to the block-encoding $U_{A}$. Throughout the paper the notation $\wt{\Or}(f)$ means $\Or(f\polylog(f))$.

The query complexity of the original quantum linear systems algorithm by Harrow-Hassidim-Lloyd (HHL) algorithm~\cite{HarrowHassidimLloyd2009}\footnote{The original HHL algorithm was not formulated in the language of block-encoding. However, this does not affect the query complexity. We refer readers to \cref{sec:query_complexities_HHL_LCU_Greens} for more details.} scales as $\widetilde{\Or}(\kappa^2/\epsilon)$, where $\kappa=\kappa(A)$ and $\epsilon$ is the target accuracy. The HHL algorithm is based on the phase estimation method. In the past few years, there have been significant progresses towards reducing the query complexity for quantum linear solvers. In particular, the linear combination of unitaries (LCU) \cite{BerryChildsKothari2015,ChildsKothariSomma2017} and quantum singular value transformation (QSVT) \cite{GilyenSuLowEtAl2018,GilyenSuLowEtAl2019} (closely related to quantum signal processing (QSP) \cite{LowYoderChuang2016,LowChuang2017}) techniques can reduce the query complexity to $\Or(\kappa^2 \poly\log(\kappa/\epsilon))$. It is worth commenting that with respect to the condition number, the worst case complexity $\wt{\Or}(\kappa^2)$ is very much inherent to all aforementioned algorithms (HHL, LCU, QSVT). This is because the cost of each algorithm is $\wt{\Or}(\kappa)$ per run, and the worst case success probability of each run is $\wt{\Or}(\kappa^{-2})$. Hence in order to boost to $\Omega(1)$ success probability, the cost of the naive application of each algorithm is $\wt{\Or}(\kappa^3)$. When the standard amplitude amplification \cite{BrassardHoyerMoscaEtAl2002} is used, one can reduce the number of repetitions from $\wt{\Or}(\kappa^2)$ to $\wt{\Or}(\kappa)$, and hence the complexity is reduced to $\wt{\Or}(\kappa^2)$. Furthermore, while the circuit depth of all methods above is independent of the right hand side $\ket{b}$, the number of repetitions can depend on $\ket{b}$. Hence the total number of queries can scale better than $\wt{\Or}(\kappa^2)$ (see \cref{sec:qsp_qlsp} for detailed discussion in the context of QSVT).

In order to further reduce the worst case complexity with respect to $\kappa$, it is possible to use a technique called variable-time amplitude amplification (VTAA) \cite{Ambainis2012}, which is a generalization of the standard amplitude amplification algorithm and allows to amplify the success probability of quantum algorithms by stopping different branches stop at different times. In \cite{Ambainis2012},  VTAA was first used to successfully improve the dependence of the HHL algorithm on $\kappa$ to be almost linear, and the query complexity is $\wt{\Or}(\kappa/\epsilon^3)$. In~\cite{ChildsKothariSomma2017}, VTAA was combined with a low-precision phase estimate to improve the complexity of LCU to $\wt{\Or}(\kappa\polylog(1/\epsilon)))$, which is near-optimal with respect to both $\kappa$ and $\epsilon$. A similar strategy may be applied to accelerate QSVT. It is worth noting that the VTAA algorithm is a complicated procedure and can be difficult to implement. 

Inspired by adiabatic quantum computation (AQC) \cite{FarhiGoldstoneGutmannEtAl2000,AlbashLidar2018,JansenRuskaiSeiler2007}, the recently developed randomization method (RM) \cite{SubasiSommaOrsucci2019} can solve QLSP with a runtime complexity $\Or(\kappa \log(\kappa)/\epsilon)$, which is the first algorithm to yield $\wt{\Or}(\kappa)$ complexity without using VTAA. One can use an optimal Hamiltonian simulation method (e.g. \cite{LowChuang2017}) to yield a gate-based implementation of RM, of which the query complexity becomes $\wt{\Or}(\kappa/\epsilon)$. This is significantly better than the bounds proven using the vanilla AQC method, of which the complexity is $\wt{\Or}(\kappa^2/\epsilon)$ \cite{JansenRuskaiSeiler2007,SubasiSommaOrsucci2019}. Using a fast eigenpath traversal method \cite{BoixoKnillSomma2009}, which relies on repeated usage of phase estimation, the cost of RM can be further reduced to be near-optimal as $\Or(\kappa \polylog(\kappa/\epsilon))$. 

Since repeated usage of phase estimation or VTAA are both difficult to implement and can require a significant number of ancilla qubits, it remains of great interest to obtain alternative algorithms to solve QLSP with near-optimal complexity scaling without resorting to such procedures. The first algorithm along this line is achieved by an optimally scheduled AQC algorithm \cite{AnLin2019}, called AQC(exp), and the runtime complexity is $\Or(\kappa \polylog(\kappa/\epsilon))$. There is also a slightly more versatile class of methods called AQC(p), of which the runtime complexity is simply $\Or(\kappa / \epsilon)$
\cite{AnLin2019}. Both AQC(exp) and AQC(p) require the implementation of a time-dependent simulation procedure. Using the recently developed near-optimal method for time-dependent Hamiltonian simulation \cite{LowWiebe2019}, the query complexity of AQC(exp) is also $\Or(\kappa \polylog(\kappa/\epsilon))$. This immediately implies that the optimal runtime complexity of the quantum approximate optimization algorithm (QAOA) \cite{FarhiGoldstoneGutmann2014} for solving QLSP is also $\Or(\kappa \polylog(\kappa/\epsilon))$, as is verified by numerical experiments \cite{AnLin2019}. Using a different strategy called quantum eigenstate filtering \cite{LinTong2020}, one can boost any algorithm that prepares a solution to $\ket{x}$ with $\Or(1)$ accuracy $\Or(\epsilon)$ with $\Or(\kappa \polylog(\kappa/\epsilon))$
complexity. This is a very simple procedure and has a fully gate-based implementation via QSVT. In particular, using a method inspired by the quantum Zeno effect, \cite{LinTong2020} obtains a simple algorithm to solve QLSP with $\Or(\kappa \polylog(\kappa/\epsilon))$ complexity, without using any amplitude amplification, phase estimation, or Hamiltonian simulation (time-independent or time-dependent). We remark that for both AQC(exp) and quantum eigenstate filtering, the result is a pure state, and the success probability of a single run is already $\Omega(1)$. This avoids the problem of repeated measurements inherent to HHL, LCU and QSVT.

\section{Quantum singular value transformation, and its application to quantum linear system problem}\label{sec:qsp_qlsp}


For a square matrix $A \in \mathbb{C}^{N \times N}$, where for simplicity we assume $N=2^n$ for some positive integer $n$, the singular value decomposition (SVD) of the normalized matrix $A$ can be written as
\begin{equation}
A=W\Sigma V^{\dag},
\label{eqn:A_SVD}
\end{equation}
or equivalently
\begin{equation}
A\left|v_{k}\right\rangle= \sigma_{k}\left|w_{k}\right\rangle, \quad A^{\dagger}\left|w_{k}\right\rangle= \sigma_{k}\left|v_{k}\right\rangle, \quad k\in [N].
\end{equation}
We may apply a function $f(\cdot)$ on its singular values and define the generalized matrix function \cite{HawkinsBen-Israel1973,ArrigoBenziFenu2016} as below.

\begin{defn}[Generalized matrix function {\cite[Definition 4]{ArrigoBenziFenu2016}}]
\label{def:gen_matrix_function}
 Let $f: \mathbb{R} \rightarrow \mathbb{R}$ be a scalar function such that $f\left(\sigma_{i}\right)$ is defined for all $i=1,2, \ldots, N$. The generalized matrix function is defined as
\begin{equation}
f^{\diamond}(A):=Wf(\Sigma)V^{\dag},
\label{eqn:gen_matrix_function}
\end{equation}
where 
$$
f\left(\Sigma\right)=\operatorname{diag}\left(f\left(\sigma_{1}\right), f\left(\sigma_{2}\right), \ldots, f\left(\sigma_{N}\right)\right).
$$
\end{defn}
This implies that the singular values satisfy
\[
0\le \sigma_k \le 1, \quad k\in[N].
\]
In the discussion below, we assume $f$ is a \textit{real, odd polynomial of finite degree}, and satisfies
\begin{equation}
|f(x)|\leq 1, \quad \forall x\in[-1,1].
\label{eqn:f_condition}
\end{equation}
We assume there is an $(\alpha,m,0)$-block-encoding of $A$ denoted by $U_A$, so that the singular values of $A/\alpha$ are in $[0,1]$. The quantum singular value transformation (QSVT) \cite[Theorem 2]{GilyenSuLowEtAl2019} provides an efficient way to implement $f^{\diamond}(A/\alpha)$ on a quantum computer using a very simple circuit, shown in \cref{fig:qsp_circuit_real}. It only uses one extra qubit. 
Here H, Z are the Hadamard and Pauli-Z gates, respectively. The polynomial degree is assumed to be $2d+1$ for $d\in \NN$. The real numbers $\{\varphi_{i}\}_{i=0}^{2d+1}$ are called the phase factors of the QSVT circuit, and the block-encoding $U_A$ and its adjoint $U_A^{\dag}$ appear alternatively. For a given polynomial $f$, the phase factors is an effective way to encode the polynomial in $\mathrm{SU}(2)$ \cite[Corollary 11]{GilyenSuLowEtAl2019}. Although the computation of phase factors can be entirely carried out on classical computers, it is a non-trivial task to compute these phase factors. Significant progress has been achieved recently, enabling robust computation of phase factors for polynomials of degrees ranging from thousands to tens of thousands \cite{Haah2019,chao2020finding,DongMengWhaleyLin2020}.

\begin{figure}[htb]
\begin{center}
\includegraphics[width=1.0\textwidth]{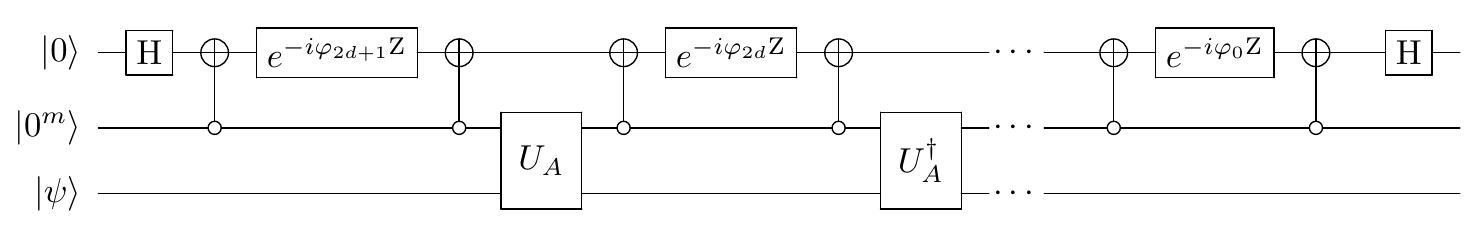}
\end{center}
\caption{Quantum circuit for implementing the quantum singular value transformation $f^{\diamond}(A/\alpha)$, where $f$ is a real, odd polynomial of degree $2d+1$ satisfying \cref{eqn:f_condition}. }
\label{fig:qsp_circuit_real}
\end{figure}

\begin{rem}\label{rem:qsvt}The condition on $f$ in QSVT appears to be much stronger than that in \cref{def:gen_matrix_function}. Indeed, when $f$ is a real, even polynomial, the counterpart of \cref{fig:qsp_circuit_real} implements a type of generalized matrix functions taking a different form from \cref{eqn:gen_matrix_function}. Since the solution of QLSP only uses the formulation with odd polynomials, we refer interested readers to \cite{GilyenSuLowEtAl2019} for the construction of QSVT for real, even polynomials, as well as more general complex valued polynomials. 
\end{rem}

If $f$ is a general odd function, we may first approximate $f(x)$ on the interval $[-1,1]$ using a degree-$d$ odd polynomial (we may apply a scaling factor if needed to satisfy \cref{eqn:f_condition}), and apply QSVT to the resulting polynomial.


The matrix inversion can be implemented as a quantum singular value transformation in the following way: when $A$ is invertible $A^{-1}=V \Sigma^{-1} W^\dagger$. Therefore 
\begin{equation}
(A/\alpha)^{-1}=(f^{\diamond}(A/\alpha))^\dagger
\label{eqn:ainv_qsvt}
\end{equation}
where $f(x)=x^{-1}$. However $f(\cdot)$ here is not bounded by 1 and is in fact singular at $x=0$. Therefore instead of approximating $f(x)=x^{-1}$ on the whole interval $[-1,1]$ we consider an odd polynomial $p(x)$ such that
\[
\left|p(x)-\frac{3\delta}{4x}\right|\leq \epsilon',\quad \forall x\in [-1,-\delta]\cup[\delta,1].
\]
and $|p(x)|\leq 1$ for all $x\in[-1,1]$. The existence of such an odd polynomial of degree $\Or(\frac{1}{\delta}\log(\frac{1}{\epsilon'}))$ is guaranteed by \cite[Corollary 69]{GilyenSuLowEtAl2018}.

Then \cite[Theorem 2]{GilyenSuLowEtAl2019} enables us to implement $(p^{\diamond}(A/\alpha))^\dagger=V p(\Sigma/\alpha)W^\dagger$. We have
\begin{equation}
\label{eq:block_encoding_err_A_inv}
    \|(p^{\diamond}(A/\alpha))^\dagger-(3\delta/4)(A/\alpha)^{-1}\|=\|p(\Sigma/\alpha)-(3\delta/4)(\Sigma/\alpha)^{-1}\|\leq \epsilon',
\end{equation}
if all diagonal elements of $\Sigma/\alpha$, i.e. the singular values of $A/\alpha$, are in the interval $[\delta,1]$. Therefore we want all singular values of $A/\alpha$ to be at least $\delta$ distance away from the origin. We then use QSVT to block-encode $(p^{\diamond}(A/\alpha))^\dagger$ given a block-encoding of $A$.

We assume $A$ can be accessed by its $(\alpha,m,0)$-block-encoding $U_A$. Let $\kappa$ be the condition number of $A$, then the singular values of $A/\alpha$ are contained in $[\norm{A}/(\alpha\kappa),\norm{A}/\alpha]$. Therefore we choose $\delta = \norm{A}/(\alpha\kappa)$. Using QSVT, a $(1,m+1,0)$-block-encoding of $p^{\diamond}(A/\alpha)$ can be implemented \cite[Theorem 2]{GilyenSuLowEtAl2019}. We denote this block-encoding by $\mathcal{U}$. Then by Eq.~\eqref{eq:block_encoding_err_A_inv}
\[
\left\|\frac{4}{3\delta\alpha}(\bra{0^{m+1}\otimes I})\mathcal{U}^\dagger(\ket{0^{m+1}\otimes I})-A^{-1}\right\|
=\left\|\frac{4}{3\delta\alpha}(p^{\diamond}(A/\alpha)^\dagger)-A^{-1}\right\|
\leq \frac{4\epsilon'}{3\delta\alpha}. 
\]
Therefore $\mathcal{U}^\dagger$ is a  $(4/(3\delta\alpha),m+1,4\epsilon'/(3\delta\alpha))$-block-encoding of $A^{-1}$. Because the cost of QSVT scales linearly with respect to the degree of the polynomial $p(x)$, the total number of queries to to $U_A$ and its inverse is
\[
\Or\left(\frac{1}{\delta}\log\left(\frac{1}{\epsilon'}\right)\right)
=\Or\left(\frac{\alpha\kappa}{\|A\|}\log\left(\frac{1}{\epsilon'}\right)\right).
\]

With the block-encoding $\mathcal{U}^\dagger$ of matrix inversion we are then able to solve the linear system $A\ket{x}\propto\ket{b}$ when we are given the quantum state $\ket{b}$  representing the right-hand side of the QLSP. We assume $\ket{b}$ can be accessed through the oracle $U_b$ such that
\[
U_b\ket{0^n}=\ket{b}.
\]
We introduce the parameter 
\[
\xi=\norm{A^{-1}\ket{b}},
\]
which plays an important part in the success probability of the procedure. This parameter also appear in Sections~\ref{sec:fastinv_1sparse}, \ref{sec:fastinv_diagaonlize}, and Corollary~\ref{cor:precond_linear_sys_solver}.
Let $\ket{\wt{x}}=(3\delta/4)(A/\alpha)^{-1}\ket{b}$, then the normalized solution state is $\ket{x}=\ket{\wt{x}}/\|\ket{\wt{x}}\|$ satisfying $A\ket{x} \propto \ket{b}$. Now denote $\ket{\wt{y}}=(p^{\diamond}(A/\alpha))^{\dagger}\ket{b}$, and $\ket{y}=\ket{\wt{y}}/\|\ket{\wt{y}}\|$. If the polynomial approximation has error $\epsilon'$, then we have $\|\ket{\wt{x}}-\ket{\wt{y}}\|\leq\epsilon'$. However for the normalized state $\ket{y}$, this error is scaled accordingly. When $\epsilon'\ll \|\ket{\wt{x}}\|$, we have
\begin{equation}
\|\ket{x}-\ket{y}\|\approx \frac{\|\ket{\wt{x}}-\ket{\wt{y}}\|}{\|\ket{\wt{x}}\|}\leq \frac{\epsilon'}{\|\ket{\wt{x}}\|}.
\end{equation}
Also we have
\[
\|\ket{\wt{x}}\| = \left\|\frac{3\delta}{4}\left(\frac{A}{\alpha}\right)^{-1}\ket{b}\right\| = \frac{3\alpha\delta\xi}{4} \geq \frac{3\norm{A}\xi}{4\kappa}.
\]
Therefore in order for the normalized output quantum state to be $\epsilon$-close to the normalized solution state $\ket{x}$, we need to set $\epsilon'=\Or(\epsilon\norm{A} \xi /\kappa)$.

The success probability of the above procedure is $\Omega(\|\ket{\wt{x}}\|^2)=\Omega(\norm{A}^2\xi^2/\kappa^2)$. With amplitude amplification we can boost the success probability to be greater than $1/2$ with one qubit serving as a witness, i.e. if measuring this qubit we get an outcome 0 it means the procedure has succeeded, and if 1 it  means the procedure has failed. 
It takes $\Or(\kappa/\norm{A}\xi)$ rounds of amplitude amplification, i.e. using $\mathcal{U}^\dagger$, $\mathcal{U}$, $U_b$, and $U_b^\dagger$ for $\Or(\kappa/\norm{A}\xi)$ times.
A single $\mathcal{U}$ or its inverse uses $U_A$ and its inverse
\[
\Or\left(\frac{1}{\delta}\log\left(\frac{1}{\epsilon'}\right)\right)
=\Or\left(\frac{\alpha\kappa}{\|A\|}\log\left(\frac{\kappa}{\epsilon\norm{A}\xi}\right)\right)
\]
times.
Therefore the total number of queries to $U_A$ and its inverse is 
\[
\Or\left(\frac{\kappa}{\|A\|\xi}\times \frac{\alpha\kappa}{\|A\|}\log\left(\frac{\kappa}{\epsilon\norm{A}\xi}\right)\right)=\Or\left(\frac{\alpha\kappa^2}{\|A\|^2\xi} \log\left(\frac{\kappa}{\|A\|\xi\epsilon}\right)\right).
\]
The number of queries to the $U_b$ and its inverse is $\Or(\kappa/\|A\|\xi)$. 
To summarize, we have the refined version of using QSVT to solve QLSP:

\begin{thm}[Standard QSVT linear system solver]
\label{thm:qsvt_qlsp}
Let $U_A$ be an $({\alpha},m,0)$-block-encoding of $A$ with condition number $\kappa$, $U_b$ be the oracle preparing the right hand side vector $\ket{b}$, and $\xi=\norm{A^{-1}\ket{b}}\in[1/\|A\|,\|A^{-1}\|]$.
Then the solution $\ket{x}\propto A^{-1}\ket{b}$ can be obtained with precision $\epsilon$ and with a success probability at least $1/2$, 
using $\Or\left(\alpha\kappa^2/(\|A\|^2\xi) \log\left(\kappa/(\|A\|\xi\epsilon)\right)\right)$ queries to $U_A$ and $U_A^{\dag}$, and $\Or(\kappa/(\|A\|\xi))$ queries to $U_b$. 
\end{thm}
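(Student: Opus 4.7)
The plan is to assemble the argument already foreshadowed in the discussion preceding the theorem into a clean proof, tracking errors and normalizations carefully. The starting ingredient is the polynomial approximation of $1/x$: by \cite[Corollary~69]{GilyenSuLowEtAl2018}, there exists an odd polynomial $p(x)$ of degree $D = \Or(\tfrac{1}{\delta}\log(\tfrac{1}{\epsilon'}))$ with $|p(x)| \le 1$ on $[-1,1]$ and $|p(x) - 3\delta/(4x)| \le \epsilon'$ on $[-1,-\delta]\cup[\delta,1]$. The first step is to fix $\delta = \|A\|/(\alpha\kappa)$, which is the smallest singular value of $A/\alpha$, so that \emph{all} singular values of $A/\alpha$ lie in $[\delta,1]$, making the polynomial approximation applicable on the entire spectrum of $A/\alpha$.

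Next, I would invoke QSVT \cite[Theorem~2]{GilyenSuLowEtAl2019} on the block-encoding $U_A$ with phase factors realizing $p$. This yields a $(1,m+1,0)$-block-encoding $\mathcal{U}$ of $p^{\diamond}(A/\alpha)$, and then $\mathcal{U}^\dagger$ is a block-encoding of $(p^{\diamond}(A/\alpha))^\dagger$. Combining this with \eqref{eqn:ainv_qsvt} and the polynomial error bound gives, after rescaling by $4/(3\delta\alpha)$, a $\left(\tfrac{4}{3\delta\alpha},\,m+1,\,\tfrac{4\epsilon'}{3\delta\alpha}\right)$-block-encoding of $A^{-1}$. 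The query cost is a single pass through $D$ queries of $U_A$ and $U_A^\dagger$.

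The third step is to apply this block-encoding to $\ket{b}$ (prepared by $U_b$) and to translate the block-encoding error into an error on the \emph{normalized} solution. Writing $\ket{\widetilde{x}} = (3\delta/4)(A/\alpha)^{-1}\ket{b}$ and $\ket{\widetilde{y}}$ for its QSVT approximation, the bound $\|\ket{\widetilde{x}} - \ket{\widetilde{y}}\| \le \epsilon'$ combined with $\|\ket{\widetilde{x}}\| \ge 3\|A\|\xi/(4\kappa)$ yields, after a standard triangle-inequality calculation, that the normalized output is $\Or(\epsilon'\kappa/(\|A\|\xi))$-close to $\ket{x}$. Setting $\epsilon'=\Theta(\epsilon\|A\|\xi/\kappa)$ delivers error at most $\epsilon$ in the normalized solution, at the cost of enlarging $D$ only by a logarithm.

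Finally, I would analyze the success probability. Measuring the ancilla register returns the desired state with probability $\Omega(\|\ket{\widetilde{x}}\|^2) = \Omega(\|A\|^2\xi^2/\kappa^2)$. Standard fixed-point amplitude amplification \cite{BrassardHoyerMoscaEtAl2002} boosts this above $1/2$ using $R = \Or(\kappa/(\|A\|\xi))$ alternating applications of $\mathcal{U}$, $\mathcal{U}^\dagger$, $U_b$, and $U_b^\dagger$. Multiplying $R \cdot D$ yields the stated $\Or\!\left(\tfrac{\alpha\kappa^2}{\|A\|^2\xi}\log\tfrac{\kappa}{\|A\|\xi\epsilon}\right)$ queries to $U_A,U_A^\dagger$, while $U_b$ and $U_b^\dagger$ are invoked only once per amplitude-amplification round, giving the $\Or(\kappa/(\|A\|\xi))$ bound. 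The main subtlety I expect to have to handle carefully is the propagation of $\epsilon'$ through normalization: because the unnormalized solution can be as small as $\Or(\|A\|\xi/\kappa)$, a naive choice of $\epsilon'$ would not suffice, and it is precisely this rescaling that produces the $\kappa/(\|A\|\xi)$ factor inside the logarithm.
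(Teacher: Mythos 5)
Your proposal is correct and follows essentially the same route as the paper's own argument in Appendix B: the \cite[Corollary 69]{GilyenSuLowEtAl2018} polynomial with $\delta=\|A\|/(\alpha\kappa)$, QSVT to block-encode $(p^{\diamond}(A/\alpha))^{\dagger}$ as an approximate $A^{-1}$, the rescaling $\epsilon'=\Theta(\epsilon\|A\|\xi/\kappa)$ to control the normalized-state error, and $\Or(\kappa/(\|A\|\xi))$ rounds of amplitude amplification yielding the stated query counts. The only nit is terminological: the amplification step cited to \cite{BrassardHoyerMoscaEtAl2002} is standard (not fixed-point) amplitude amplification, exactly as the paper uses it.
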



{
We consider the following two cases for the magnitude of $\xi$. For simplicity we assume $\alpha=\Theta(\norm{A})$. 
\begin{enumerate}\label{enum:two_cases}
    \item  In general if no further promise is given, then $\xi\geq 1/\|A\|$. The total query complexity of $U_A$ is therefore $\Or(\kappa^2 \log(\kappa/\epsilon))$. This is the typical complexity referred to in the literature.
    \item If $\ket{b}$ has a $\Omega(1)$ overlap with the left-singular vector of $A$ with the smallest singular value, then $\xi=\Omega(\norm{A^{-1}})=\Omega(\kappa/\|A\|)$. This is the best case scenario, and the total query complexity of $U_A$ is  $\Or(\kappa \log(1/\epsilon))$, and the number of queries to the right hand side vector $\ket{b}$ is $\Or(1)$, which is independent of the condition number. 
\end{enumerate} 
}

\section{Fast inversion of 1-sparse matrices}\label{sec:fastinv_1sparse_general}


In this section we discuss the fast inversion of 1-sparse matrices, i.e.  there is at most one non-zero matrix element in every row and column of the matrix.  First, let us assume that there exists an efficient function $f$ such that $f(x)$ yields the column number for the non-zero matrix element of $A$ in row $x$.  We then define the oracle $O_{f}: \ket{x} \ket{c} \mapsto \ket{x} \ket{c\oplus f(x)}$  as well as the oracle $O_A: \ket{x} \ket{y} \ket{c} \mapsto \ket{x} \ket{y} \ket{c\oplus A_{xy}}$. Without loss of generality we assume $A$ is a Hermitian matrix (otherwise we can dilate it into a Hermitian matrix as in \cref{eqn:dilation_trick}. In this case we also need access to the oracle $O_{g}: \ket{x} \ket{c} \mapsto \ket{x} \ket{c\oplus g(x)}$, where $g(x)$ yields the row  number for the non-zero matrix element of $A$ in column $x$).  For convenience, we will assume that the matrix elements of $A$ are encoded in polar form.

\begin{lem}
Let $A$ be a $1$-sparse Hermitian, invertible and row-computable matrix in $\mathbb{C}^{2^n \times 2^n}$ where each $A_{xy}$ can be exactly represented using $n'$ bits of precision then for any $\delta>0$, an $\left(\left(\min_{x} |A_{xf(x)}|\right)^{-1}, 1, \delta\right)$-block encoding of $A^{-1}$ can be implemented using at most $4$ oracle queries and an $O({\rm poly}(n'\log(1/\delta)))$ auxiliary operations taken from a gate set consisting of $\mathrm{H,T,CNOT}$. \end{lem}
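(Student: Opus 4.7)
The plan is to exploit the fact that for a $1$-sparse Hermitian invertible matrix one has the explicit decomposition $A = \sum_x A_{x,f(x)} \ket{x}\bra{f(x)}$, where Hermiticity forces $f$ to be an involution and $A_{f(x),x} = \overline{A_{x,f(x)}}$. Consequently $A^{-1}$ is itself $1$-sparse with $A^{-1}\ket{x} = A_{x,f(x)}^{-1}\ket{f(x)}$, and the singular values of $A^{-1}$ coincide with $|A_{x,f(x)}|^{-1}$. Setting $\alpha := (\min_x|A_{x,f(x)}|)^{-1}$ is therefore the optimal block-encoding factor and matches $\norm{A^{-1}}$.

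Given input $\ket{0}_a \ket{x}$, I would allocate two auxiliary registers $\ket{b}$ (holding an $n$-bit string) and $\ket{c}$ (holding the $n'$-bit representation of $A_{x,f(x)}$ in polar form as $(r_x,\theta_x)$), and perform the following four oracle calls: (i) apply $O_f$ to write $\ket{b=f(x)}$; (ii) apply $O_A$ on $(\ket{x},\ket{b},\ket{c})$ to write $\ket{c=(r_x,\theta_x)}$; (iii) apply a controlled rotation $\mathrm{INV}$ on $\ket{a}$ conditioned on $\ket{c}$ that effects
\[
\ket{0}_a \mapsto e^{-\I\theta_x}\frac{1}{\alpha r_x}\ket{0}_a + \sqrt{1-\frac{1}{\alpha^2 r_x^2}}\ket{1}_a,
\]
which is implementable with $\Or(\mathrm{poly}(n'\log(1/\delta)))$ $\{\mathrm{H},\mathrm{T},\mathrm{CNOT}\}$ gates via reversible long division and controlled phase rotations on the $n'$-bit values $r_x,\theta_x$; (iv) apply $O_A^\dagger$ to uncompute $\ket{c}$. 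Since $\mathrm{INV}$ acts only on $\ket{a}$ with $\ket{c}$ as a control and does not modify $\ket{x},\ket{b},\ket{c}$, step (iv) restores $\ket{c}=\ket{0}$ regardless of the branch of $\ket{a}$.

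To deposit $\ket{f(x)}$ into the main register and clear $\ket{b}$, I would then SWAP the main register with $\ket{b}$ and apply $O_f$ one final time: because $f$ is an involution, this writes $\ket{b} = \ket{x \oplus f(f(x))} = \ket{0}$. The overall circuit uses two queries to $O_f$ and two queries to $O_A$ (one forward, one inverse), for a total of four oracle queries, and $\Or(\mathrm{poly}(n'\log(1/\delta)))$ additional $\{\mathrm{H},\mathrm{T},\mathrm{CNOT}\}$ gates from $\mathrm{INV}$. Projecting onto $\ket{0}_a$ yields $\alpha^{-1}A^{-1}\ket{x}$ up to error $\delta$, establishing the claimed block-encoding, with the uncomputed registers $\ket{b},\ket{c}$ absorbed into the auxiliary gate count per the convention of \cref{rem:polylog_overhead}.

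The main technical obstacle I anticipate is constructing the $\mathrm{INV}$ subcircuit so that (a) it acts as a unitary on $\ket{a}$ controlled by $\ket{c}$ without disturbing $\ket{c}$, allowing $O_A^\dagger$ to successfully uncompute it, and (b) its $\{\mathrm{H},\mathrm{T},\mathrm{CNOT}\}$ cost indeed scales as $\mathrm{poly}(n'\log(1/\delta))$. Both the reciprocal $1/r_x$ and the phase factor $e^{-\I\theta_x}$ can be realized by standard reversible arithmetic: a Newton-iteration-based reversible divider produces an $\Or(\log(1/\delta))$-bit approximation of $1/r_x$ using $\mathrm{poly}(n'\log(1/\delta))$ gates, which is converted into a $Y$-rotation angle on $\ket{a}$, while the phase is applied by a sequence of controlled $Z$-rotations keyed to the bits of $\theta_x$. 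Verifying these error bounds carefully, and confirming Clifford+T synthesis at polylogarithmic overhead, is the only delicate point; the rest of the argument is a routine composition of the oracles in the correct uncomputation order.
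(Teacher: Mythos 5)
Your proposal is correct, and it reaches the lemma by a genuinely different construction than the paper's. The paper follows the standard one-sparse simulation playbook: it decomposes $A$ into irreducible one- and two-dimensional blocks, uses $O_f$, a comparator $\mathrm{CMP}$ and a $\mathrm{CSWAP}$ to map each two-dimensional subspace $\mathrm{span}\{\ket{x},\ket{f(x)}\}$ onto a single qubit, applies an $R_z(\cdot)XR_z(\cdot)$ conjugation together with a subnormalizing rotation $\min_y|A_{yf(y)}|/|A_{xf(x)}|$ on the flag, treats the diagonal ($f(x)=x$) case by a separate sign/controlled-$Z$ circuit $R_1$, and undoes everything with $T^\dagger$ — again four queries in total. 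You instead exploit the global factorization $A^{-1}\ket{x}=A_{x,f(x)}^{-1}\ket{f(x)}$ (permutation induced by the involution $f$, composed with a scaling), never diagonalize any block, and implement the amplitude $e^{-\I\theta_x}/(\alpha r_x)$ by a single data-controlled rotation on the flag qubit in a compute--rotate--uncompute pattern, realizing the permutation in place via the SWAP followed by a second $O_f$ call, which clears the work register precisely because $f\circ f=\mathrm{id}$ (a fact you correctly derive from Hermiticity plus invertibility). Your route buys a uniform treatment of the $f(x)=x$ and $f(x)\neq x$ cases (no comparator, no min/max ordering, no separate $R_1$/$R_2$), and is closer in spirit to the paper's own ``$A=\Pi D$'' access model from Section~2.1; the paper's route buys a construction that mirrors the established one-sparse Hamiltonian-simulation machinery. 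Both use exactly four oracle queries, the same $\Or(\mathrm{poly}(n'\log(1/\delta)))$ arithmetic/synthesis cost for the reciprocal, arccosine and phase rotations (this finite-precision step is the sole source of the $\delta$ in both proofs), and the same bookkeeping convention that only the flag qubit counts toward the block-encoding ancilla while the cleared work registers are absorbed as in Remark~\ref{rem:polylog_overhead}.
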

\begin{proof}
The proof follows standard reasoning provided in the quantum simulation literature \cite{childs2003exponential,ahokas2004improved}.  The first observation that is important in the reasoning is that all Hermitian $1$-sparse matrices can be written as a direct sum of irreducible one- and two- dimensional matrices.  This direct sum structure can be exploited to allow the matrix inverse to be explicitly computed on each of these subspaces of constant dimension and thereby the desired transformation can be performed to invert the matrix.

Consider a fixed basis vector as the input denoted by $\ket{x}$.  We need to consider two cases.  First, let $\ket{x}$ be part of an irreducible two-dimensional block of $A$.  This means that $f(x) \ne x$.  It then follows that \begin{equation}
    A(a\ket{x} + b\ket{f(x)})  = aA_{f(x)x}\ket{f(x)} + bA_{xf(x)}\ket{x}= |A_{xf(x)}|\left(\frac{aA^*_{xf(x)}}{|A_{xf(x)}|}\ket{f(x)} + \frac{bA_{xf(x)}}{|A_{xf(x)}|}\ket{x}\right).
\end{equation}
This justifies the claim that $\ket{x}$ is in an irreducible two-dimensional space.  Additionally, we can see by applying $A$ again to this result and using that $A$ is Hermitian, the eigenvalues of $A$ within the two-dimensional sub-space are $\pm |A_{xf(x)}|$.  Similarly, the eigenvectors can be taken to be

\begin{equation}
    \ket{\psi^{\pm}_x}:=\frac{1}{\sqrt{2}} \left(\ket{\min(x,f(x))} \pm \frac{A_{xf(x)}^*}{A_{xf(x)}} \ket{\max(x,f(x))} \right),\label{eq:1sparseeigen}
    \end{equation}
where the eigenvectors corresponding to the positive and negative eigenvalues are $\ket{\psi_{x}^+}$ and $\ket{\psi_x^-}$ respectively.

In order to perform this eigen-decomposition of the input vectors, we will map this two-dimensional space to a single qubit space and then diagonalize the operator once it has been reduced to this case.  There are several operations that we need to define for this diagonalization process.  First, let us define ${\rm CMP}$ to be a comparator circuit.  This self-inverse unitary transformation has the action (for any $b\in \mathbb{Z}_2)$
\begin{equation}
    {\rm CMP} : \ket{x} \ket{y} \ket{b} \mapsto \begin{cases}
    \ket{x} \ket{y} \ket{b} &, \text{ if $x \le y$,}\\
    \ket{x} \ket{y} \ket{b\oplus 1} &, \text{ if $x > y$.}
    \end{cases}
\end{equation}
The ${\rm CMP}$ circuit can be implemented using a two's complement adder~\cite{sanders2020compilation} using a polynomial number of gates in the bits of precision of the inputs $x$ and $y$.
The next operation we need is the controlled swap gate ${\rm CSWAP}$ which we define such that for any two states $\ket{\psi}$ and $\ket{\phi}$ of the same dimension with ${\rm SWAP} \ket{\psi} \ket{\phi} = \ket{\phi}\ket{\psi}$
\begin{equation}
{\rm CSWAP} = \ket{1}\!\bra{1}\otimes {\rm SWAP} + \ket{0}\!\bra{0}
\otimes I.
\end{equation}
The ${\rm CSWAP}$ operation can be implemented using a number of CNOT and Toffoli gates that is linear in the bits of $\ket{\psi}$ and $\ket{\phi}$.
We denote this transformation $T$ and implement it through the following steps under the assumption (without loss of generality) that $x\le f(x)$, 
\begin{align}
    (a \ket{x} + b\ket{f(x)})\ket{0^{n+1}} &\mapsto_{ O_f}\qquad (a \ket{x,f(x)} + b\ket{f(x),x})\ket{0^{n'+2}}\nonumber\\
    &\mapsto_{\mathrm{CMP}}\qquad (a \ket{x,f(x)}\ket{0} + b\ket{f(x),x} \ket{1})\ket{0^{n'+1}}\nonumber\\
    &\mapsto_{\mathrm{CSWAP}}\qquad \ket{x,f(x)}(a\ket{0} + b \ket{1})\ket{0^{n'+1}}
    \nonumber\\
    &\mapsto_{O_A}\qquad \ket{x,f(x)}(a\ket{0} + b \ket{1})\ket{0}\ket{A_{xf(x)}}
    \nonumber\\
    &\mapsto_{\mathrm{CNOT}}\qquad \ket{x,f(x)\oplus x}(a\ket{0} + b \ket{1})\ket{0}\ket{A_{xf(x)}}\nonumber\\
    &\mapsto_{\Lambda_{n}(\mathrm{NOT})}\qquad \ket{x,f(x)\oplus x}(a\ket{0} + b \ket{1})\ket{\delta_{x,f(x)}}\ket{A_{xf(x)}}.
    \label{eq:perm}
\end{align}
Here $\Lambda_n({\rm NOT})$ is an $n$-controlled not gate.  
This sequence of operations mirrors standard approaches for simulating one-sparse Hamiltonian dynamics as given in~\cite{aharonov2003adiabatic,berry2007efficient}.

All the information needed to implement the inverse of $A$ on each one- or two-dimensional irreducible subspaces of the one-sparse matrix is computed using a single invocation of $T$.  From that we invert the matrix using two controlled inversion circuits, $R_1$ and $R_2$ which correspond to the one- and two-dimensional inversion circuit respectively.  This strategy is shown schematically in Figure~\ref{fig:construct}.

\begin{figure}
\centering
\includegraphics[width=0.45\textwidth]{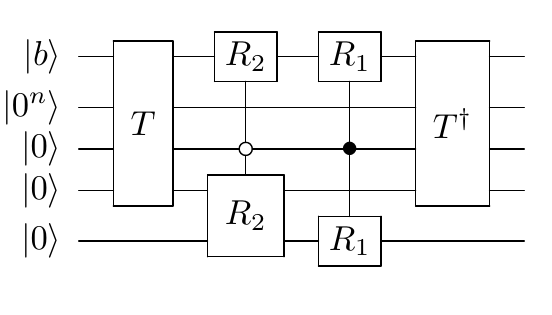}
\caption{Schematic quantum circuit for implementing the one- and two-dimensional transformations. $R_1$ and $R_2$ refer to the rotations implemented on the one- and two-dimensional subspaces respectively. \label{fig:construct}}
\end{figure}

The transformation $R_2$ in Figure~\ref{fig:construct} can be implemented as follows.  If $\ket{x}$ and $\ket{f(x)}$ form an irreducible two-dimensional subspace then we can identify $R_2$ through the following reasoning.  First, note that
\begin{align}
&{\frac{A}{|A_{xf(x)}|}\otimes I}{}(a\ket{x} + b\ket{f(x)})\ket{0^{n+n'+2}}  \nonumber\\
&\qquad=  
T^\dagger \left(I \otimes R_z\left( {\rm Arg}(A_{xf(x)})\right)XR_z\left( -{\rm Arg}(A_{xf(x)})\right)\otimes I\right) T    (a\ket{x} + b\ket{f(x)}) \ket{0^{n'+n+2}},\label{eq:fullstate}
\end{align}
here $X$ is the Pauli-$X$ gate.   Next, we see from the fact that if $C$ is a $2\times 2$ invertible Hermitian matrix then
\begin{equation}
    C = \begin{bmatrix} 0 & c \\ c^* &0 \end{bmatrix}\qquad \Rightarrow \qquad C^{-1}=\begin{bmatrix} 0 & (c^{-1})^* \\ c^{-1} &0 \end{bmatrix}.
\end{equation}
that
\begin{align}
&\frac{A^{-1}\otimes I}{|A_{xf(x)}|^{-1}}(a\ket{x} + b\ket{f(x)})\ket{0^{n+n'+2}}  \nonumber\\
&\qquad=  
P_2^\dagger \left(I \otimes R_z\left({\rm Arg}(A_{xf(x)}^*)\right)XR_z\left(-{\rm Arg}(A_{xf(x)}^*)\right)\otimes I\right) P_2    (a\ket{x} + b\ket{f(x)}) \ket{0^{n'+n+2}},\label{eq:fullstateinv}
\end{align}

Equation~\eqref{eq:fullstateinv} shows how we can block-encode the normalized inverse; however, since the value of $|A_{xf(x)}|$ need not be constant over $x$, we will need to show how to construct a block encoding that will allow the normalization factor to be varied across each block.  The central observation is that
\begin{align}
    &{A^{-1}\otimes I}(a\ket{x} + b\ket{f(x)})\ket{0^{n+n'+2}} \nonumber\\
    &= \left(I\otimes I \otimes \bra{0} \right)\Biggr(\frac{A^{-1}\otimes I}{|A_{xf(x)}|^{-1}}(a\ket{x} + b\ket{f(x)})\ket{0^{n+n'+2}}\nonumber\\
    &\qquad\qquad\qquad\qquad\otimes\left(\frac{\min_{x} |A_{xf(x)}|}{|A_{xf(x)}|}\ket{0} + \sqrt{ 1 -\left(\frac{\min_{x} |A_{xf(x)}|}{|A_{xf(x)}|}\right)^2 } \ket{1}\right)\Biggr)\label{eq:invblock}
\end{align}
Then by applying the procedure described in~\eqref{eq:fullstateinv} and~\eqref{eq:invblock} we can therefore block-encode $A^{-1}$.  We call this operation $R_2$ in Figure~\ref{fig:construct}.  Since $R_2$ is unitary it is also linear and therefore will apply the inverse on every irreducible $2$-dimensional subspace simultaneously.

This procedure requires a circuit of polynomial size in the bits of precision, $n'$, to compute the inverse trigonometric function and reciprocal needed to perform this rotation.  However, this requires no queries and is efficient given $n'$ and therefore the precise details of the arithmetic circuits used are irrelevant for our lemma.  The one aspect that is relevant is that even if the inputs of $A_{xy}$ require finite precision, the computation of the arccosine will often require infinite precision to precisely represent.  This cost is $\Or({\rm poly}(n'\log(1/\delta)))$ quantum operations.  Thus we can perform a $(|A_{xy}|^{-1},1,\delta)-$block encoding of the inverse assuming each irreducible subspace is $2$-dimensional.

The construction for $R_1$ in Figure~\ref{fig:construct} is even simpler than that for $R_2$.
In this case $\ket{x}$ is a member of an irreducible one-dimensional subspace.  This occurs when $f(x)=x$ or, in other words, when the non-zero matrix element in row $x$ of $A$ occurs on the diagonal.  In this case, 
\begin{equation}
    A^{-1}\ket{x} = \frac{1}{A_{xx}} \ket{x} = \frac{{\rm sign}(A_{xx})}{|A_{xx}|} \ket{x}.
\end{equation}
Note that because $A$ is by assumption Hermitian, $A_{xx} \in \mathbb{R}$.  This further means that the diagonalization steps that were needed for the two-dimensional case are unnecessary here.  Thus in this case we can express the inverse as
\begin{equation}
    \ket{x}\ket{0}
    \mapsto {\rm sign}(A_{xx})\ket{x}\left(\frac{{\min_{x} |A_{xf(x)}|}}{|A_{xx}|}\ket{0} + \sqrt{ 1 -\left(\frac{{\min_{x} |A_{xf(x)}|}}{|A_{xx}|}\right)^2 } \ket{1}\right)
\end{equation}
Thus the inversion process in this case looks very similar to the two-dimensional case after applying the transformation $T$.  Then by replacing the Pauli-X operation in~\eqref{eq:fullstateinv} with a controlled-Z operation with the output of this function as control, we can selectively diagonalize the block.  Similarly, if $x=f(x)$ we need to flip the sign this can be achieved by comparing $A_{xx}$ to zero applying $Z$ to the resulting qubit that stores whether $A_{xx} < 0$ with the bit that stores $\ket{(f(x) \ne x)}$.  This can also be achieved using an $\Or(n)$-size circuit consisting of NOT, CNOT and Toffoli.  The part of the circuit that computes $\theta_{xy}$ is common to both cases and thus does not need to be changed.  Thus, if we augment the circuit by making these changes, we can implement a $((\min_{x,y}|A_{xy}|)^{-1},1,\delta)$-block encoding of the matrix inverse regardless of whether $\ket{x}$ is in an irreducible one- or two-dimensional subspace using $4$ oracle queries and $\Or({\rm poly}((n+n') \log(1/\delta)))$ auxiliary gate operations from the $R_z$, H , CNOT, Toffoli gate library, which can be compiled down at polynomial cost to gates taken from H,T,CNOT. This completes the proof.
\end{proof}

\section{Hadamard test for non-unitary matrices}
\label{sec:non_unitary_hadamard}

The well-known Hadamard test is frequently used to obtain the expectation value of a unitary operator. Suppose we want to obtain $\braket{\phi|U|\phi}$ for some unitary $U$ and $\ket{\phi}$, then for the real part we need the following circuit
\[
\Qcircuit @C=1.0em @R=0.9em {
\lstick{\ket{0}} & \gate{\mathrm{H}}      & \ctrl{1} & \gate{\mathrm{H}} & \meter  \\
\lstick{\ket{\phi}} & \qw & \gate{U} & \qw       & \qw  \\
} 
\]
and for the imaginary part we need the following circuit that is slightly different
\[
\Qcircuit @C=1.0em @R=0.9em {
\lstick{\ket{0}} & \gate{\mathrm{H}}      & \ctrl{1} & \gate{\mathrm{S}^{-1}} & \gate{\mathrm{H}} & \meter   \\
\lstick{\ket{\phi}} & \qw & \gate{U} & \qw    & \qw           & \qw  \\
}
\]
Here $\mathrm{H}$ is the Hadamard gate and $\mathrm{S}$ is the phase gate. We find that the probabilities of obtaining 0 when measuring the first qubit are $\frac{1}{2}(1+\Re\braket{\phi|U|\phi})$ and $\frac{1}{2}(1+\Im\braket{\phi|U|\phi})$ respectively for the two circuits.

A small modification gives us a way of computing expectation value of non-unitary matrices when given the block-encodings. If we have the $(\alpha,m,0)$-block-encoding of $A$ which we denote by $U_A$, then for the real part of $\braket{\phi|A|\phi}$ we consider the following circuit
\begin{equation}
\label{eq:hadamard_non_unitary_real}
\Qcircuit @C=1.0em @R=0.9em {
\lstick{\ket{0}} & \gate{\mathrm{H}}      & \ctrl{1} & \gate{\mathrm{H}} & \meter  \\
\lstick{\ket{0^m}} & \qw         & \multigate{1}{U_A}      & \qw  & \qw     \\
\lstick{\ket{\phi}} & \qw & \ghost{U_A} & \qw       & \qw  \\
} 
\end{equation}
and for the imaginary part we consider
\begin{equation}
\label{eq:hadamard_non_unitary_imag}
\Qcircuit @C=1.0em @R=0.9em {
\lstick{\ket{0}} & \gate{\mathrm{H}}      & \ctrl{1} & \gate{\mathrm{S}^{-1}} & \gate{\mathrm{H}} & \meter  \\
\lstick{\ket{0^m}} & \qw         & \multigate{1}{U_A}  & \qw     & \qw     & \qw \\
\lstick{\ket{\phi}} & \qw & \ghost{U_A} & \qw   & \qw     & \qw  \\
} 
\end{equation}
The probabilities of obtaining 0 when measuring the first qubit are  $\frac{1}{2}(1+\frac{1}{\alpha}\Re\braket{\phi|A|\phi})$ and $\frac{1}{2}(1+\frac{1}{\alpha}\Im\braket{\phi|A|\phi})$ respectively. One can derive these two probabilities easily by noting the fact that $(\bra{0^m}\bra{\phi}) U_A (\ket{0^m}\ket{\phi})=\frac{1}{\alpha}\braket{\phi|A|\phi}$.

It will be straightforward to estimate the probability of obtaining all $0$'s in measurement through Monte Carlo sampling, and thereby estimate the quantity of interest $\braket{\phi|A|\phi}$. The efficiency of the Monte Carlo sampling can be generally accelerated by the amplitude estimation procedure \cite[Theorem 12]{BrassardHoyerMoscaEtAl2002}.  However there are two issues we need to take into account, both arising from the preparation of $\ket{\phi}$: 
\begin{enumerate}
   \item The algorithm for preparing $\ket{\phi}$ may only be able to prepare a state close to it, which we denote as $\ket{\wt{\phi}}$ with trace distance $\sqrt{1-|\braket{\wt{\phi}|\phi}|^2}\leq\varsigma$.
    \item The algorithm for preparing $\ket{\phi}$ may involve measurement and have success probability lower bounded by $p<1$. 
\end{enumerate}
The first issue compels us to take the error of the state preparation into account. The second issue requires some further explanation. If we wanted to estimate $\braket{\phi|A|\phi}$ using Monte Carlo sampling, then we could simply measure the ancilla qubit for each successful preparation of the state $\ket{\phi}$, and do nothing when the state preparation is unsuccessful. However, when we want to use amplitude estimation, it is no longer possible to directly select only the successful instances in this way.  In order to construct the reflection operator needed in the amplitude amplification, we will need to prepare the state $\ket{\phi}$, or a state close to it, using a unitary circuit with success probability 1.

First we assume $\ket{\phi}$ can be prepared perfectly with probability 1 using a unitary circuit $U_\phi$. When estimating the real part using amplitude estimation, we will need to run $U_\phi$, the circuit in \eqref{eq:hadamard_non_unitary_real}, and their inverses $\Or(\alpha/\epsilon)$ times to estimate the real part to precision $\epsilon$. The same is true for the imaginary part.
 
Next we consider when the preparation involves some error. Due to the trace distance bound of $\ket{\wt{\phi}}$, we have \[
\left| \braket{\phi|A|\phi}-\braket{\wt{\phi}|A|\wt{\phi}} \right| \leq 2\|A\| \varsigma.
\]
Finally we consider the case when $\ket{\wt{\phi}}$ is produced only with probability at least $p$. Formally, we assume $(\bra{0^r}\otimes I_n) U_\phi \ket{0^{r}}\ket{0^{n}}=a\ket{\wt{\phi}}$ where $a\geq\sqrt{p}$. In this case we first apply the fixed-point amplitude amplification \cite[Theorem 27]{GilyenSuLowEtAl2018} to $U_\phi$. Compared to the standard amplitude amplification \cite{BrassardHoyerMoscaEtAl2002}, the fixed-point amplitude amplification has the advantage of using only a unitary circuit, and can boost the success probability 1. By \cite[Theorem 27]{GilyenSuLowEtAl2018}, there exists a unitary circuit $\wt{U}_\phi$ such that $\|\ket{0^r}\ket{\wt{\phi}}-\wt{U}_\phi \ket{0^{r}}\ket{0^{n}}\|\leq \epsilon'$, and this circuit uses $U_\phi$ and its inverse $\Or(\frac{1}{\sqrt{p}}\log(\frac{1}{\epsilon'}))$ times. We may set we $\epsilon'=\Or(\varsigma)$, so that the trace distance of the output away from $\ket{0^r}\ket{\wt{\phi}}$ is $\Or(\varsigma)$. 


A problem with amplitude estimation is that there is a failure probability, i.e.~the estimated amplitude differs from the true amplitude by more than the allowed error $\epsilon$, and this failure probability as mentioned in \cite[Theorem 12]{BrassardHoyerMoscaEtAl2002} is at most $1-8/\pi^2<1/2$. If we want the failure probability to be 
smaller than $\delta$, then we can run the amplitude estimation multiple times and take the median. Using the Chernoff--Hoeffding theorem we can estimate that we only need to run $\Or(\log(1/\delta))$ times to ensure the failure probability, i.e.~the probability that the median differs from the true amplitude by more than $\epsilon$, is at most $\delta$.
Therefore we have the following lemma:



\begin{lem}
\label{lem:non_unitary_hadamard_test}
Suppose a state $\ket{\phi}$ can be prepared with trace-distance error $\varsigma$ by a unitary circuit $U_\phi$ with probability at least $p$, and $A$ is given through its $(\alpha,m,0)$-block-encoding $U_A$, then $\braket{\phi|A|\phi}$ can be estimated using amplitude estimation to precision $2\alpha\varsigma+\epsilon$ with probability at least $1-\delta$, using $\Or((\alpha/\epsilon)\log(1/\delta))$ applications of $U_A$ and its inverse, $\Or((\alpha/\sqrt{p}\epsilon)\log(1/\varsigma)\log(1/\delta))$ applications of $U_\phi$ and its inverse, and $\Or((\alpha/\sqrt{p}\epsilon)\log(1/\varsigma)\log(1/\delta))$ other one- and two-qubit gates.
\end{lem}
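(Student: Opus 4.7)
The proof will synthesize the three ingredients already discussed before the statement: the non-unitary Hadamard test circuits in \eqref{eq:hadamard_non_unitary_real} and \eqref{eq:hadamard_non_unitary_imag}, amplitude estimation for reading out the probabilities, and fixed-point amplitude amplification to purify the imperfect state preparation oracle. The plan is to bound the error contributions of these three pieces separately and then stitch them together.

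First, I would construct a deterministic approximate state preparation unitary. By assumption, $(\bra{0^r}\otimes I_n) U_\phi \ket{0^{r+n}} = a\ket{\wt{\phi}}$ with $a\ge\sqrt{p}$, where $\ket{\wt{\phi}}$ satisfies $\sqrt{1-|\braket{\wt\phi|\phi}|^2}\le\varsigma$. Applying fixed-point amplitude amplification \cite[Theorem 27]{GilyenSuLowEtAl2018} with precision $\epsilon' = \Theta(\varsigma)$ yields a unitary $\wt{U}_\phi$ with $\|\wt{U}_\phi\ket{0^{r+n}} - \ket{0^r}\ket{\wt\phi}\|\le\epsilon'$, using $\Or(\tfrac{1}{\sqrt{p}}\log(1/\varsigma))$ queries to $U_\phi$ and $U_\phi^\dagger$. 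In particular the output state is within $\Or(\varsigma)$ trace distance of $\ket{0^r}\ket{\phi}$.

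Second, I would analyze the Hadamard test circuits obtained by using $\wt{U}_\phi$ in place of a direct preparation of $\ket{\phi}$. A direct calculation using $(\bra{0^m}\otimes I_n)U_A(\ket{0^m}\otimes I_n) = A/\alpha$ shows that the probabilities of measuring $0$ on the top qubit are
\[
P_{\rm re} = \tfrac{1}{2}\bigl(1+\tfrac{1}{\alpha}\Re\braket{\phi|A|\phi}\bigr), \qquad P_{\rm im} = \tfrac{1}{2}\bigl(1+\tfrac{1}{\alpha}\Im\braket{\phi|A|\phi}\bigr)
\]
when the state preparation is exact. Replacing $\ket{\phi}$ by the approximate state produced by $\wt{U}_\phi$ perturbs each probability by $\Or(\varsigma)$, which after multiplying by $2\alpha$ contributes a term $\Or(\alpha\varsigma)$ to the estimate of $\braket{\phi|A|\phi}$. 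A triangle inequality combined with $|\braket{\phi|A|\phi}-\braket{\wt\phi|A|\wt\phi}|\le 2\|A\|\varsigma\le 2\alpha\varsigma$ absorbs this into the claimed $2\alpha\varsigma$ term.

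Third, I would apply amplitude estimation \cite[Theorem 12]{BrassardHoyerMoscaEtAl2002} to each of the two circuits, with target additive precision $\epsilon/(2\alpha)$ for the probabilities $P_{\rm re}, P_{\rm im}$, so that after rescaling by $2\alpha$ the real and imaginary parts of $\braket{\phi|A|\phi}$ are estimated to additive error $\epsilon$. Each round of amplitude estimation calls the Grover-type operator $\Or(\alpha/\epsilon)$ times, and each such call uses $U_A, U_A^\dagger$ a constant number of times and $\wt{U}_\phi, \wt{U}_\phi^\dagger$ a constant number of times; the latter in turn invokes $U_\phi, U_\phi^\dagger$ an $\Or(\tfrac{1}{\sqrt{p}}\log(1/\varsigma))$ number of times. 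Multiplying through yields the query counts in the statement, modulo the boosting step.

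Finally, I would reduce the amplitude-estimation failure probability from the innate $1-8/\pi^2$ to $\delta$ by the standard median-of-$\Or(\log(1/\delta))$-runs technique together with a Chernoff--Hoeffding bound, which multiplies all query counts by $\Or(\log(1/\delta))$. The main conceptual obstacle is just the bookkeeping: ensuring that the three error sources (polynomial approximation of amplification, trace-distance error in the input state, and the amplitude-estimation error) add up to the stated $2\alpha\varsigma+\epsilon$ bound, and that the use of the deterministic $\wt{U}_\phi$ inside amplitude estimation does not spoil the promise; this is handled by absorbing the $\Or(\varsigma)$ trace-distance error into the $2\alpha\varsigma$ term via the $\|A\|\le\alpha$ bound from the block-encoding.
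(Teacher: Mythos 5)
Your proposal is correct and follows essentially the same route as the paper: the non-unitary Hadamard test probabilities, fixed-point amplitude amplification with precision $\Or(\varsigma)$ to replace the probabilistic preparation, amplitude estimation with $\Or(\alpha/\epsilon)$ Grover-operator calls, and the median-of-$\Or(\log(1/\delta))$ trick via Chernoff--Hoeffding. The error bookkeeping (absorbing the $\Or(\alpha\varsigma)$ amplification error alongside the $2\|A\|\varsigma\le 2\alpha\varsigma$ state-error bound) matches the paper's own level of detail.
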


\section{Computing imaginary parts of the Green's function without using the Hadamard test}
\label{sec:imagGreen}
We remark that if we are interested in computing the imaginary part of the Green's function (more accurately, the anti-Hermitian part of the Green's function), the calculation can be simplified as follows. Let $z=E-\I \eta$, and define
\[
\Gamma^{(\pm)}(z)=\frac{1}{2\I }\left(G^{(\pm)}(z)-(G^{(\pm)}(z))^{\dag}\right),
\]
which are real symmetric matrices.
Consider $\Gamma^{(+)}$ for simplicity, then
\[
\Gamma_{i j}^{(+)}(z)=\left\langle\Psi_{0}^{N}\left|\hat{a}_{i}\Im\left(z-\left[\hat{H}-E_{0}\right]\right)^{-1} \hat{a}_{j}^{\dagger}\right| \Psi_{0}^{N}\right\rangle =\eta \left\langle\Psi_{0}^{N}\left|\hat{a}_{i}\left((E\\ +E_0-\hat{H})^2+\eta^2\right)^{-1}\hat{a}_{j}^{\dagger}\right| \Psi_{0}^{N}\right\rangle.
\]
Note that the diagonal entries are 
\[
\Gamma^{(+)}_{ii}(z)=\eta\left\langle\Psi_{0}^{N}\left|\hat{a}_{i}\left((E+E_0-\hat{H})^2+\eta^2\right)^{-1}\hat{a}_{i}^{\dagger}\right| \Psi_{0}^{N}\right\rangle=\eta\norm{\left(z+E_0-\hat{H}\right)^{-1}\hat{a}_{i}^{\dagger}\ket{\Psi_{0}^{N}}}_2^2.
\]
If we solve the QLSP
\[
(z+E_0-\hat{H})\ket{y_i}=\hat{a}_{i}^{\dagger}\ket{\Psi_{0}^{N}},
\]
the success probability in measuring the ancilla qubits and obtain all 0's is
\[
\left(\alpha'_{z+E_0-\hat{H}}\right)^{-2}\norm{\left(z+E_0-\hat{H}\right)^{-1}\hat{a}_{i}^{\dagger}\ket{\Psi_{0}^{N}}}_2^2.
\]
Here $\alpha'_{z+E_0-\hat{H}}$ is the subnormalization factor for $\left(z+E_0-\hat{H}\right)^{-1}$. Hence $\Gamma_{i i}^{(+)}(z)$ can be directly computed from the success probability without using the Hadamard test.

In order to obtain the off-diagonal entries, we simply use the identity
\[
\begin{split}
\Gamma_{i j}^{(+)}(z)=&\frac\eta2\Bigg( \left\langle\Psi_{0}^{N}\left|(\hat{a}_{i}+\hat{a}_j)\left((E+E_0-\hat{H})^2+\eta^2\right)^{-1} (\hat{a}_{i}^{\dagger}+\hat{a}_j^{\dagger})\right| \Psi_{0}^{N}\right\rangle\\
-&\left\langle\Psi_{0}^{N}\left|\hat{a}_{i}\left((E+E_0-\hat{H})^2+\eta^2\right)^{-1}\hat{a}_{i}^{\dagger}\right| \Psi_{0}^{N}\right\rangle\\
-&\left\langle\Psi_{0}^{N}\left|\hat{a}_{j}\left((E+E_0-\hat{H})^2+\eta^2\right)^{-1}\hat{a}_{j}^{\dagger}\right| \Psi_{0}^{N}\right\rangle
\Bigg),
\end{split}
\]
which can again be evaluated as success probabilities for separately solving three linear systems. The treatment of $\Gamma^{(-)}$ follows the same procedure.

\section{Query complexities of estimating Green's function using HHL and LCU/QSVT}
\label{sec:query_complexities_HHL_LCU_Greens}

We assume that we are given an exact block-encoding of $\hat{H}$ denoted by $U_H$ with subnormalization factor $\alpha_H$. Then supposing we know the ground energy $E_0$, we can construct a block-encoding of $z-\hat{H}+E_0$ with subnormalization factor $\Or(|z|+\alpha_H)$ using \cite[Lemma 29]{GilyenSuLowEtAl2019}.

To our best knowledge, the HHL algorithm has not been formulated in terms of block-encoding in the literature, so for completeness for provide such a formulation below. The HHL algorithm relies on time-evolution  to solve the linear system. We assume the time-evolution, with a Hamiltonian $H$ to be specified, has a cost $\Or(\|H\|t)$, which is typical in many Hamiltonian simulation methods \cite{BerryChildsCleveETC2015,BerryChildsKothari2015,berry2007efficient,LowChuang2019,LowChuang2017,LowWiebe2019}, and omit the dependence on the desired precision, because the dependence is only polylogarithmic and therefore does not play an important role in the complexity. The circuit construction of HHL  effectively gives a block-encoding of the matrix inverse. We explain it in detail below.

Assume we have a matrix $M$, given in its block-encoding $U_M$ with subnormalization factor $\alpha_M$. If $M$ is not Hermitian, we may considered the dilated matrix denoted by $\wt{M}$ as in \cref{eqn:dilation_trick}.  If we are able to block-encode $\wt{M}^{-1}$, then extracting the lower-left block of this matrix through single-qubit operations will yield us $M^{-1}$. We then discuss how to get $\wt{M}^{-1}$ using HHL.

The HHL algorithm consists of the following steps: We start with two ancilla registers and a main register, containing $r$ and one qubit respectively, each ancilla qubit in the $\ket{0}$ state, and the main register starting with some state $\ket{b}$. First we perform Hadamard transform on the first ancilla register, which is usually called the clock register. Then we apply $\sum_{\tau=0}^{T-1}\ket{\tau}\bra{\tau}\otimes e^{\I\wt{M}\tau t_0/T}$, the controlled time evolution of $\wt{M}$, on the main register, controlled by the first ancilla register. Next we apply QFT on the first ancilla register, so that this register stores the approximate eigenvalues of $\wt{M}$ in superposition. Then we apply rotation on the second ancilla register, which contains only one qubit, controlled by the first register. Finally we uncompute the first ancilla register, and measure the second ancilla register. Upon obtaining outcome $1$ we have successfully prepared $\wt{M}^{-1}\ket{b}/\|\wt{M}^{-1}\ket{b}\|$. 

The main source of error in HHL is the phase estimation step. To control the phase estimation error to be within $\delta$ we need to let $t_0=\Or(1/\delta)$. However when the eigenvalue $\lambda$ is off by $\delta$ its inverse $1/\lambda$ withh be off by approximately $2\delta/\lambda^2$ for $\delta\ll \lambda$. We denote the smallest singular value of $M$ as $\sigma_{\min}$. Therefore in order to have a block-encoding error of $\wt{M}$ to be smaller than $\epsilon'$, we need to let $t_0=\Or(1/\epsilon'\sigma_{\min}^2)$. In the original HHL algorithm this dependence is mitigated because there is a subnormalization of the output quantum state involved. If we want $M^{-1}$ to have a block-encoding error upper bounded by $\epsilon$, we only need $\wt{M}^{-1}$ to have a block-encoding error upper bounded by $\alpha_M\epsilon$. Therefore we can set $\epsilon'=\alpha_M \epsilon$.


In the case of $M=z-\hat{H}+E_0$, we have $\alpha_M=\Or(|z|+\alpha_H)$, the smallest singular value of $M$ is lower bounded by $\eta$, and therefore $\sigma_{\min}=\Omega(\eta/\alpha_M)$. Therefore we have $t_0=\Or(\alpha_M/\eta^2 \epsilon)$. The dominant cost of constructing block-encoding of $M^{-1}$ is running Hamiltonian simulation of $\wt{M}$ for time $t_0$. Therefore a single block-encoding of $M^{-1}$ takes $\Or(\alpha_M/\eta^2 \epsilon)$ applications of $U_M$ and its inverse. The block-encoding subnormalization factor is upper bounded by the block-encoding subnormalization of $\wt{M}^{-1}$, which is at most $1/\sigma_{\min}$, divided by $\alpha_M$. It is therefore $\Or(1/\sigma_{\min}\alpha_M)=\Or(1/\eta)$.   When we take the estimates for the number of queries to $U_M$ and the subnormalization factor into Lemma~\ref{lem:non_unitary_hadamard_test}, we arrive at the query complexity estimates in the first row of Table~\ref{tab:compare_algs_hubbard}.

The complexities of LCU and QSVT are both easier to estimate than HHL because of the polylogarithmic dependence on the desired block-encoding error, as both methods use results from approximation theory to make this possible. In both cases we first construct the block-encoding of $\wt{M}^{-1}$, whose singular values are in $[1/\sigma_{\min},1]$, where $\sigma_{\min}$ is the same as defined before. To construct a block-encoding of $\wt{M}^{-1}$ with error upper bounded by $\epsilon'$, $\wt{\Or}((1/\sigma_{\min})\poly\log(1/\epsilon'))$ queries to $U_{\wt{M}}$ are required.  When we regard the resulting circuit as a block-encoding of $M^{-1}$ we need to set $\epsilon'=\alpha_M \epsilon$ as before. Therefore in the application to Green's function a single block-encoding queries $U_M$ and its inverse  $\wt{\Or}((\alpha_M/\eta)\poly\log(1/\epsilon))$ times. The resulting block-encoding subnormalization factor of $M^{-1}$ scales the same as in HHL, which in the application to Green's function is upper bounded by $\Or(1/\eta)$. These estimates lead to the complexities in the second row of Table~\ref{tab:compare_algs_hubbard}.

\section{Proof of \cref{lem:quadrature_err}: discretizing the contour integral}\label{sec:proof_quadrature_err}

In this section we prove Lemma~\ref{lem:quadrature_err}.
The goal is to discretize the contour integral
\begin{equation}
I = \frac{1}{2\pi \I}\oint_\Gamma \frac{e^{-\beta z}}{x-z}\ud z = \frac{-1}{2\pi \I}\int_{-\infty}^\infty \frac{e^{-\beta (t^2-\zeta+\I t)}(2t+\I)}{x-t^2+\zeta-\I t}\ud t,
\end{equation}
for $x>0$ where $\Gamma$ is defined in Eq.~\eqref{eq:contour}. We first want to approximate the integral on the real axis by an integral on a finite interval. We define
\begin{equation}
I_T = \frac{-1}{2\pi \I}\int_{-T}^T \frac{e^{-\beta (t^2-\zeta+\I t)}(2t+\I)}{x-t^2+\zeta-\I t}\ud t,
\end{equation}
and 
\begin{equation}
g(t,x) = \frac{e^{-\beta (t^2-\zeta+\I t)}(2t+\I)}{x-t^2+\zeta-\I t}.
\end{equation}
Then we have, for $|t|\geq 1/2$, 
$|g(t,x)|\leq \sqrt{8}e^{-\beta(t^2-\zeta)}$. This implies
\[
|I-I_T|\leq \frac{\sqrt{2}e^{\beta\zeta}}{\pi}\int_{|t|>T}e^{-\beta t^2}\ud t=\sqrt{\frac{2}{\beta\pi}}e^{\beta\zeta}\textrm{erfc}(\sqrt{\beta}T),
\]
when $T\geq 1/2$. Due to the inequality \cite{chiani2003new}
\[
\mathrm{erfc}(x)=\frac{2}{\sqrt{\pi}}\int_{x}^{\infty}e^{-t^2}\ud t\leq e^{-x^2},
\]
we can bound the error due to the finite range truncation as
\begin{equation}
\label{eq:bound_T}
|I-I_T|\leq \sqrt{\frac{2}{\beta\pi}} e^{\beta \zeta}e^{-\beta T^2}.
\end{equation}

It now remains to find a quadrature for $I_T$. To do this we use the Gauss-Legendre quadrature. We write out the Chebyshev expansion of $g(t,x)$ as
\begin{equation}
g(t,x) = \sum_{n=0}^\infty \hat{g}_n(x)T_n(t/T),
\end{equation}
where $T_n(t)$ is the $n$-th Chebyshev polynomial of the first kind. We define 
\begin{equation}
I_{GL} = T\sum_{j\in[J]}  w_j g(s_j T,x),
\end{equation}
where $w_j$ and $s_j$ are selected according to the standard Gauss-Legendre quadrature formula on $[-1,1]$. Because this quadrature formula is exact for polynomials of degree up to $2J-1$, and that
\[
\sum_{j\in[J]}  w_j |T_n(s_j)| \leq 2,
\]
we have
\begin{equation}
\label{eq:bound_GL_error}
|I_T-I_{GL}|\leq 2T\sum_{n\geq 2J} |\hat{g}_n(x)|.
\end{equation}
Therefore we only need to bound the coefficients $|\hat{g}_n(x)|$.

Let $h(\theta,x)=g(T\cos(\theta),x)$, and write out its Fourier expansion
\[
h(\theta,x) = \sum_{n=-\infty}^{\infty}\hat{h}_n(x)e^{in\theta}
\]
then \[
\hat{g}_n(x) = \hat{h}_n(x)+\hat{h}_{-n}(x),\ n> 0,\quad \hat{g}_0(x) = \hat{h}_0(x).
\]
Note that if we introduce a new variable $z=e^{\I \theta}$, then
\[
\wt{h}(z,x)=h(\theta,x)=\sum_{n=-\infty}^{\infty}\hat{h}_n(x) z^n,
\]
which takes the form of a Laurent series. Therefore the  coefficients of the Chebyshev expansion $\{\hat{g}_n(x)\}_{n=0}^{\infty}$ are directly related to the coefficients of the Laurent expansion  
$\{\hat{h}_n(x)\}_{n=-\infty}^{\infty}$.

The function $g(t,x)$ is analytic in $t$ in the domain $\{t\in\CC:t^2-\zeta+\I t-x\neq 0\}$. We need to estimate how far from the real axis $h(\theta,x)$ can be extended as an analytic function of $\theta$ in order to estimate the convergence rate of the Fourier series 
(see e.g. \cite[Chapter 8]{Trefethen2019Approximation}). 

\begin{figure}
    \centering
    \includegraphics[width=0.5\textwidth]{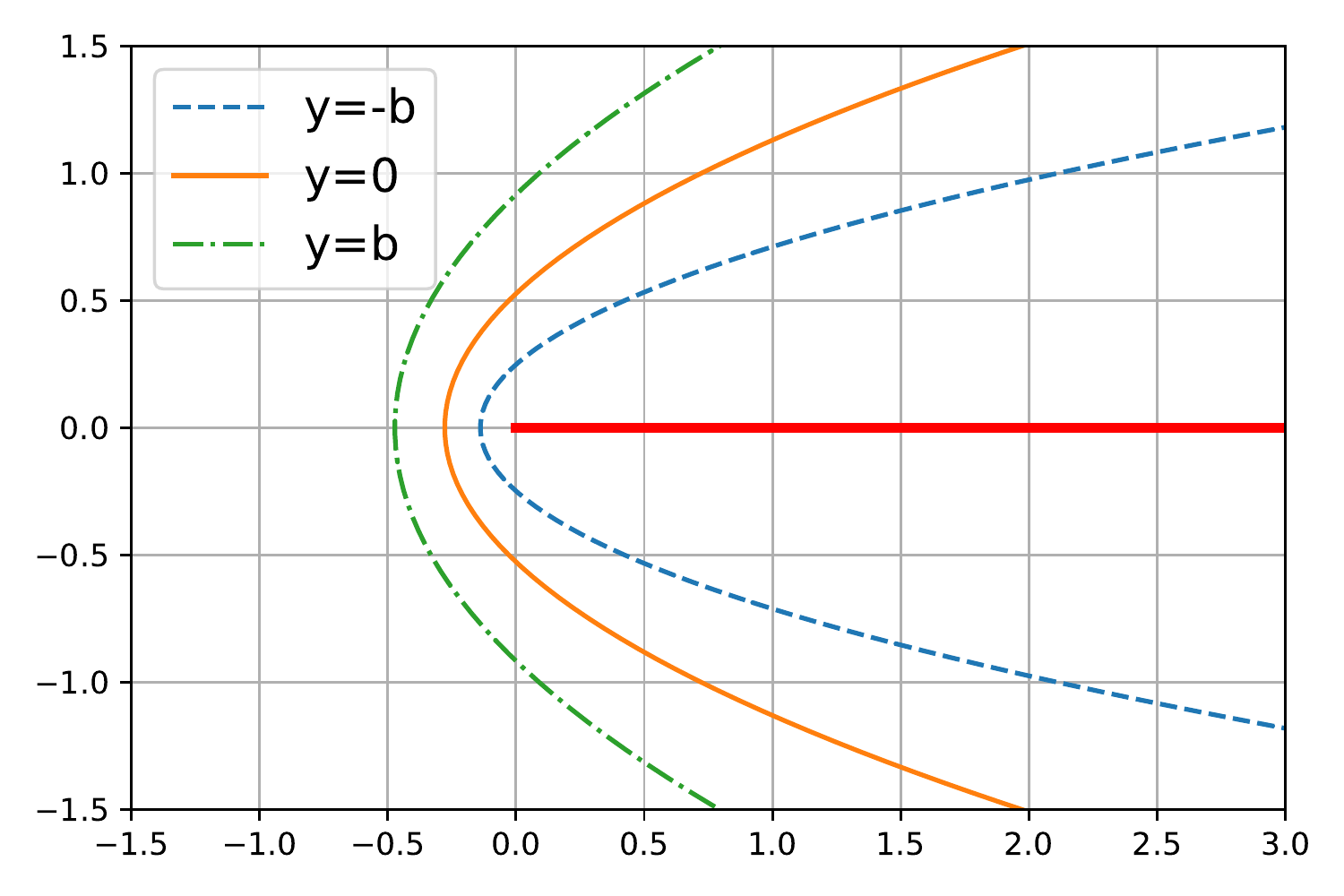}
    \caption{The parabola $\{t^2+\I t-\zeta:\Im t=y\}$ for $y=-b,0,b$. The spectrum of $H=A+B$ is on the positive part of the real axis (solid red line).}
    \label{fig:contour_widen}
\end{figure}

We want to lower bound $|t^2+\I t-x-\zeta|$ when $|\Im t|$ is bounded by some parameter $b$ to be chosen. We write $t=w+\I y$, and require $|y|\leq b$, then choose 
\begin{equation}
\label{eq:choose_zeta}
\zeta=2b(1-b).
\end{equation}
$\zeta$ is chosen in this way so that the distance between the positive part of the real axis and the set $\{t^2+\I t-\zeta:|\Im t|\leq b\}$ can be bounded from below, which we will consider next. For each fixed $\Im t=y$, the image of $\{t^2+\I t-\zeta:\Im t=y\}$ is a parabola in the complex plane parameterized by $w$. The parabola moves to the left and widens when $y$ increases from $-b$ to $b$ (see Figure~\ref{fig:contour_widen} for an illustration). Therefore we only need to consider when $y=-b$. In this case
\[
|t^2+\I t-x-\zeta|^2 = (w^2-b(1-b)-x)^2+w^2(1-2b)^2.
\]
We choose
\begin{equation}
\label{eq:choose_b}
b = \min \left(\frac{1}{2\beta},\frac{1}{6}\right) = \frac{1}{2\max(\beta,3)}.
\end{equation}
In particular, $0\le b\le 1/6$ ensures that $(1-2b)^2\geq 2b(1-b)$. 

We consider two cases. First, when $w^2\geq b(1-b)$, we have
\[
(w^2-b(1-b)-x)^2+w^2(1-2b)^2 \geq w^2(1-2b)^2\geq b(1-b)(1-2b)^2\geq 2b^2(1-b)^2.
\]
Second, when $w^2< b(1-b)$, because $x\geq 0$, we have
\[
(w^2-b(1-b)-x)^2+w^2(1-2b)^2 = w^4 + ((1-2b)^2-2b(1-b))w^2 + b^2(1-b)^2 \geq  b^2(1-b)^2.
\]
Combining these two cases we have
\begin{equation}
\label{eq:integral_bound_denom}
|t^2+\I t-x-\zeta| \geq b(1-b) = \frac{\zeta}{2}.
\end{equation}
for all $|\Im t|\leq b$ and $x \geq 0$. 

Next we determine how far $h(\theta,x)$ can be analytically extended. By the relation $t=T\cos(\theta)$, we have
\[
\Im t = -T\sin(\Re\theta)\sinh(\Im \theta).
\]
Therefore
\[
|\Im t|\leq T\sinh(|\Im \theta|)\leq 2T|\Im \theta|
\]
when $|\Im \theta|\leq 1$. Note that so far we have only used $b\leq 1/6$. In Eq.\eqref{eq:choose_b} we also have $b\leq 1/(2\beta)$, which ensures that $|e^{-\beta z}|$ for all $z$ in a vicinity of the parabola can be bounded from above by a constant. This is very important when we prove the exponential decay of the Fourier coefficients below.

Therefore for $|\Im t|\leq b$ we only need to require $|\Im \theta|\leq b/2T$ when $T\geq 1/2$. Thus, $h(\theta,x)$ can be analytically extended in $\theta$ to the strip $\{\theta\in\CC:|\Im\theta|\leq b/2T\}$. Also $h(\theta,x)$ is periodic in the real direction with a period $2\pi$. 
Now consider $\wt{h}(z,x)=h(\theta,x)$ with  $z=e^{\I\theta}$, then $\wt{h}(\cdot,x)$ is analytic in the annulus $\{z\in\CC: e^{-b/2T}<|z|<e^{b/2T}\}$. The coefficients of the Laurent series can be evaluated using the residue theorem as
\[
\hat{h}_n(x)=\frac{1}{2\pi \I}\oint_{\Gamma(a)} z^{-(n+1)} \wt{h}(z,x)\ud z,
\]
where the contour is $\Gamma(a):=\set{z:\abs{z}=e^{a}}$ with $-b/2T<a<b/2T$. By changing the variable back from $z$ to $\theta$, we arrive at the following bounds for the Fourier coefficients
\begin{equation}
\label{eq:decay_fourier_coefs}
\begin{aligned}
|\hat{h}_n(x)|&\leq\frac{1}{2\pi e^{a|n|}}\int_{0}^{2\pi}|h(\theta-ia,x)|d\theta, \quad n\geq 0, \\
|\hat{h}_n(x)|&\leq\frac{1}{2\pi e^{a|n|}}\int_{0}^{2\pi}|h(\theta+\I a,x)|d\theta, \quad n< 0,
\end{aligned}
\end{equation}
for $0<a<b/2T$.

We then bound $h(\theta+\I a,x)$ for $|a|<b/2T$. Let $t=w+\I y$, then as analyzed above $|a|<b/2T$ guarantees $|y|\leq b$. Hence $\Re(t^2+\I t)$ is minimized when $w=0$ and $y=b$. In this case $\Re(t^2+\I t)=-b(1+b)$. Thus we have
\begin{equation}
\label{eq:integrand_bound_1}
\Re(t^2-\zeta+\I t) \geq -b(1+b)-\zeta = b(1-3b)\geq \frac{1}{4\beta},
\end{equation}
using the fact $b\leq 1/6$ and $b\leq 1/(2\beta)$ in Eq.~\eqref{eq:choose_b}.
This enables us to bound the exponential in $g(t,x)$. 
We combine the bound for the exponential with Eq.~\eqref{eq:integral_bound_denom} to ensure
\[
|h(\theta+\I a,x)|=|g(t,x)|=\left|\frac{e^{-\beta (t^2-\zeta+\I t)}(2t+\I )}{x-t^2+\zeta-\I t}\right|\leq \frac{8Te^{1/4}}{\zeta}.
\]
We have chosen $b$ and $\zeta$ in Eq.~\eqref{eq:choose_b} and \eqref{eq:choose_zeta} respectively. Using these two equations and the fact that $b\leq 1/6<1/2$ we have $1/\zeta\leq 1/b=2\max(\beta,3)$. Therefore
\[
|h(\theta+\I a,x)| \leq 16Te^{1/4}\max(\beta,3).
\]
Taking it into Eq.~\eqref{eq:decay_fourier_coefs} we have
\[
\abs{\hat{h}_n(x)}\leq \frac{16T\max\left(\beta,3\right)e^{1/4}}{e^{a|n|}}
\]
for $0<a<b/2T=1/(4T\max(\beta,3))$ and $n\in\mathbb{Z}$. Using this to get $\hat{g}_n(x)$, setting $a=1/(8T\max(\beta,3))$, and taking into Eq.~\eqref{eq:bound_GL_error}, we have 
\begin{equation}
\label{eq:bound_GL_error_2}
\begin{aligned}
|I_T-I _{GL}| \leq \sum_{n\geq 2J} \frac{64 T^2\wt{\beta} e^{1/4}}{e^{an}} 
=\frac{64T^2\wt{\beta}e^{1/4}}{1-e^{-1/(8T\wt{\beta})}}e^{-J/(4T\wt{\beta})},
\end{aligned}
\end{equation}
where $\wt{\beta}=\max(\beta,3)$.
Combining this inequality with \eqref{eq:bound_T}, in which we use $\beta\zeta\leq 1$, we have
\begin{equation}
\label{eq:total_error_bound}
\begin{aligned}
|I-I_{GL}| &\leq |I-I_T|+|I_T-I_{GL}| \\
&\leq \sqrt{\frac{2}{\beta\pi}} e^{1-\beta T^2}+
\frac{64T^2\wt{\beta}e^{1/4}}{1-e^{-1/(8T\wt{\beta})}}e^{-J/(4T\wt{\beta})}.
\end{aligned}
\end{equation}

\section{Efficient block-encodings for the contour integral approach}\label{sec:contour_integral_blockencoding}

In this section we construct the block-encodings in Lemma~\ref{lem:oracles_ab}.
The first thing we need to implement is the block-diagonal matrix 
\begin{equation}
\label{eq:select_A}
\begin{aligned}
&\sum_{j\in[J]}\ket{j}\bra{j}\otimes (z_j+\xi_j-A)^{-1}  \\
&=\begin{pmatrix}
(z_1+\xi_1-A)^{-1} &  & & \\
 & (z_2+\xi_2-A)^{-1} & & \\
 & & \ddots &                \\
 & & & (z_J+\xi_J-A)^{-1}
\end{pmatrix}
\end{aligned}
\end{equation}
This block-diagonal matrix can be implemented by the following circuit
\begin{equation}
\label{eq:A_inv}
\Qcircuit @C=1.0em @R=0.9em {
\lstick{\ket{0}} & \qw      & \qw & \sgate{\mathrm{INV}_A}{1} & \qw & \qw   \\
\lstick{\ket{j}} & \qw         & \qw      & \sgate{\mathrm{INV}_A}{2} & \qw  & \qw     \\
\lstick{\ket{\phi}} & \gate{V^\dagger} & \multigate{1}{O_D} & \qw  & \gate{V}     & \qw  \\
\lstick{\ket{0^r}} & \qw      & \ghost{O_D} & \gate{\mathrm{INV}_A} & \qw & \qw 
} 
\end{equation}
where $\mathrm{INV}_A$ satisfies
\[
\mathrm{INV}_A\ket{0}\ket{j}\ket{\lambda} = \left(\frac{1}{z_j+\xi_j-\lambda}\ket{0}+\sqrt{1-\left|\frac{1}{z_j+\xi_j-\lambda}\right|^2}\ket{1}\right)\ket{j}\ket{\lambda}
\]
 This is a $(1,\Or(r+\log(J)),0)$-block-encoding of the block-diagonal matrix in Eq.~\eqref{eq:select_A}. The main purpose of the above design is to use $V$ and $O_D$ only $\Or(1)$ times instead of $\Or(J)$ times.

Similar to the above construction, it is possible to construct a $(1+\alpha_B,\Or(1),0)$-block-encoding of the following block-diagonal matrix
\begin{equation}
\label{eq:select_B}
\begin{aligned}
\sum_{j\in[J]} \ket{j}\bra{j}\otimes (B+\xi_j) 
=
\begin{pmatrix}
B+\xi_1 &  & & \\
 & B+\xi_2 & & \\
 & & \ddots &       \\
 & & & B+\xi_J
\end{pmatrix},
\end{aligned}
\end{equation}
using controlled-$U_B$ only once and a logical circuit to determine the value of $\xi_j$ for each input $j$.

\section{Proof of \cref{thm:Chebyshev}: convergence of the truncated Chebyshev series}\label{sec:Chebyshev_convergence_proof}

In this section we prove \cref{thm:Chebyshev}. 
For completeness we repeat the essential steps of the proof in~\cite[Section 5.7]{MasonHandscomb2003} with a more explicit constant dependence, which 
is the key to obtain exponential convergence of functions in the Gevrey class. 

    Using Peano's theorem~\cite[Lemma 5.15]{MasonHandscomb2003}, 
    \begin{equation}\label{eqn:matrix_function_peano}
        \sum_{k=0}^d c_k T_k(y) - g(y) = \int_{-1}^1 g^{(r+1)}(t)K_d(y,t)\ud t.
    \end{equation}
    Here
    \begin{equation}
        K_d(y,t) = \frac{1}{r!}\left(\sum_{j=0}^d c_{jr}T_j(y) - (y-t)^r_+\right),
    \end{equation}
    where
    \[
    (y-t)_{+}^{r}:=
    \begin{cases}
(y-t)^{r}, & y \geq t \\
0, & y<t
\end{cases},
    \]
    and
    \[
        c_{jr} = \frac{2-\delta_{j0}}{\pi}\int_{t}^1\frac{(y-t)^rT_j(y)}{\sqrt{1-y^2}} \ud y.     \]
    Note that $K_d$ does not depend on the function $g(y)$. 

    Now let us bound the coefficient $c_{jr}$
    for all $j \geq 1$. 
    Using the substitution $y = \cos\theta$ and $t = \cos\phi$, 
    \begin{equation}
        c_{jr} = \frac{2}{\pi}\int_0^{\phi}(\cos(\theta)-\cos(\phi))^r\cos(j\theta)\ud\theta.
    \end{equation}
    Define $h(s) = (s-\cos(\phi))^r$ and $f(\theta) = h(\cos(\theta))$. 
    Notice that the $(l+1)$-th order anti-derivative of $\cos(j\theta)$ is 
    $\frac{(-1)^{l+1}}{j^{l+1}}\cos^{(l+1)}(j\theta)$,
    using integration by parts, we obtain 
    \begin{equation}
        \frac{\pi}{2} c_{jr} = -\left[\sum_{l=0}^{r} f^{(l)}(\theta)\frac{1}{j^{l+1}}\cos^{(l+1)}(j\theta) \right]_{0}^{\phi} 
        + \frac{1}{j^{r+1}}\int_{0}^{\phi} f^{(r+1)}(\theta)\cos^{(r+1)}(j\theta)\ud\theta. 
    \end{equation}
    By \cref{lem:chain_rule}, 
    \begin{align}
     f^{(l)}(\theta) &= \sum_{\sum_{p=1}^l pq_p = l} \frac{l!}{q_1!(1!)^{q_1}q_2!(2!)^{q_2}\cdots q_l!(l!)^{q_l}}h^{(q_1+q_2+\cdots+q_l)}(\cos(\theta))\prod_{p=1}^l\left(\cos^{(p)}(\theta)\right)^{q_p} \\
     &= \sum_{\sum_{p=1}^l pq_p = l} \frac{l!}{q_1!(1!)^{q_1}q_2!(2!)^{q_2}\cdots q_l!(l!)^{q_l}}\frac{r!}{(r-\sum_p q_p)!}(\cos(\theta)-\cos(\phi))^{r-\sum_p q_p}\prod_{p=1}^l\left(\cos^{(p)}(\theta)\right)^{q_p}.
    \end{align}  
    Notice that for all $l \leq r - 1$, $\sum_p q_p \leq \sum_{p} p q_p = l < r$, we have $f^{(l)}(\phi) = 0$ for all $l \leq r-1$. 
    Then
    \begin{align*}
        \frac{\pi}{2} c_{jr} &= -f^{(r)}(\phi)\frac{1}{j^{r+1}}\cos^{(r+1)}(j\phi) + \sum_{l=0}^{r} f^{(l)}(0)\frac{1}{j^{l+1}}\cos^{(l+1)}(0) \\
        &\quad\quad+ \frac{1}{j^{r+1}}\int_{0}^{\phi} f^{(r+1)}(\theta)\cos^{(r+1)}(j\theta)\ud\theta. 
    \end{align*}
    Furthermore, when $l$ is odd, from the equation $\sum_{j=1}^l jq_j = l$, 
    for any tuple $(q_1,\cdots,q_l)$, 
    there must exist an odd number $p_0$, such that $q_{p_0} \neq 0$. 
    Therefore $(\cos^{(p_0)}(0))^{q_{p_0}} = 0$, and thus 
    $f^{(l)}(0) = 0$. 
    When $l$ is even, $\cos^{(l+1)}(0) = 0$. 
    Therefore we have, for all $l \leq r$, $f^{(l)}(0)\cos^{(l+1)}(0) = 0$, and
    \begin{equation}
        \frac{\pi}{2} c_{jr} = - f^{(r)}(\phi)\frac{1}{j^{r+1}}\cos^{(r+1)}(j\phi) 
        + \frac{1}{j^{r+1}}\int_{0}^{\phi} f^{(r+1)}(\theta)\cos^{(r+1)}(j\theta)\ud\theta. 
    \end{equation}
    Finally, using some very rough estimates that $|\cos^{(p)}(\theta)|\leq 1$ and dropping all the denominators in $f^{(l)}(\theta)$ 
    (which can be surely improved, but here for technical simplicity we keep these rough estimates), and 
    that the number of $l$-tuples is less than 
    $(l+1)(l/2+1)(l/3+1)\cdots (l/l+1) = \binom{2l}{l} < 2^{2l}$, we have
    \begin{equation}
        \|f^{(l)}\|_{\infty} \leq 2^{2l}l!r!2^r = 2^{2l+r}l!r!,
    \end{equation}
    and we obtain
    \begin{align*}
        |c_{jr}| &\leq \frac{2}{\pi} \frac{2^{3r}(r!)^2}{j^{r+1}} 
        + \frac{2}{\pi}\frac{\pi2^{3r+2}(r+1)!r!}{j^{r+1}} \\
        &\leq 2^{3r+4} \frac{(r+1)!r!}{j^{r+1}}. 
    \end{align*}

    It follows that 
    \begin{align*}
        |K_d(y,t)| & \leq \frac{1}{r!} \left|\sum_{j=d+1}^{\infty}c_{jr}T_j(y)\right| \\
        & \leq \frac{1}{r!}\sum_{j=d+1}^{\infty}\left|c_{jr}\right|  \\
        & \leq \frac{2^{3r+4}(r+1)!r!}{r!}\sum_{j=d+1}^{\infty} \frac{1}{j^{r+1}} \\
        & < {2^{3r+4}(r+1)!} \int_d^{\infty}\frac{1}{x^{r+1}}\ud x\\
        & \leq 16\frac{8^r (r+1)!}{d^{r}}.
    \end{align*}
    Combining this estimate with \cref{eqn:matrix_function_peano}, 
    we complete the proof.

\section{Green's function computation for fixed number of electrons}\label{app:signature}
Here we  review how the scaling of the complexity for evaluating Green's functions for certain quantum many-body Hamiltonians can be improved by utilizing the electron number constraint as in \cref{sec:example_manybody}.  Our aim in this section is to discuss how the block-encoding construction can be modified in these cases to accommodate this.  

First, let us assume that we have a Hamiltonian that is the sum of two orthogonally diagonalizable Hamiltonians. This means that if we express the Hamiltonian in its eigenbasis then there  exist a unitary matrices $\hat{U}$  such that $H= H_0+A:=\hat{U} D \hat{U}^\dagger + A$ such that $A$  and $D$ are diagonal matrices and $\hat{U}$ transforms from the computational basis to the eigenbasis of $H_0$.  Specifically, if we let $\ket{\psi_k}$ be an eigenstate of $H_0$ then  
\begin{equation}
    \hat{U} \ket{\psi_k} = \ket{k},
\end{equation}
where $\ket{k}$ is the $k^{\rm th}$ computational basis vector.

If we let $P_{N_e}$ be a projector onto a constrained manifold, in our case the  manifold of states with a fixed electron number ${N_e}$, then we can define the constrained Hamiltonian $H'$ via
\begin{equation}
    H' = P_{N_e}(\hat{U} {D} \hat{U}^\dagger + A)P_{N_e}
\end{equation}
Further, let us also assume that $P_{N_e}$ commutes not only with $H$ but also $A$.  If this is true then $[H', P_{N_e}]=[\hat{U} D \hat{U}^\dagger,P_{N_e}]=0$.  Therefore $\hat{U} D \hat{U}^\dagger P_{N_e} = P_{N_e} \hat{U} D \hat{U}^\dagger$ and in turn
\begin{equation}
    H'=(\hat{U} (\hat{U}^\dagger P_{N_e} \hat{U}) D \hat{U}^\dagger) + AP_{N_e},
\end{equation}
has the same action within the subspace conditioned on the value of ${N_e}$.  Specifically, for any $+1$ eigenstate of $P_{N_e}$ denoted by $\ket{\psi}$, we have  
\begin{equation}
    H\ket{\psi} = H P_{N_e} \ket{\psi} = H'  \ket{\psi}.
\end{equation}  
This validates the claim that such modifications do not affect the action of the Hamiltonian $H'$ on the fixed-particle manifold.

Furthermore, let $H''= H'+ C(1-P_{N_e})$ for Hermitian matrix $C$.  We then have that for any $+1$ eigenstate of $P_{N_e}$ $\ket{\psi}$
\begin{equation}
    H''\ket{\psi} = H'' P_{N_e} \ket{\psi} = H' \ket{\psi}.
\end{equation}
Thus we can perturb $H$ any way we see fit so long as the perturbation has no impact on the particle number manifold in question.  We will use this fact to simplify our block encoding construction.

Since $H_0=UDU^\dagger$, we simply need to construct a unitary for block-encoding $D$ in order to e convert this classical circuit into a quantum circuit $O_D$ satisfying
\[
O_D\ket{k}\ket{c}=\ket{k}\ket{c\oplus D_{kk}}.
\]

For each eigenstate $\ket{\psi_k}$ of $H_0$  indexed by $k$, we can compute the particle number efficiently through a classical circuit, which we also convert to a quantum circuit $O_{\mathrm{num}}$, that satisfies
\[
O_{\mathrm{num}}\ket{k}\ket{c}=\ket{k}\ket{c\oplus N_e(k)},
\]
where $\hat{N}$ is the number operator and
\[
N_e(k) = \braket{\psi_k|\hat{N}|\psi_k}.
\] 

\begin{figure}
\centering
\includegraphics[width=0.95\textwidth]{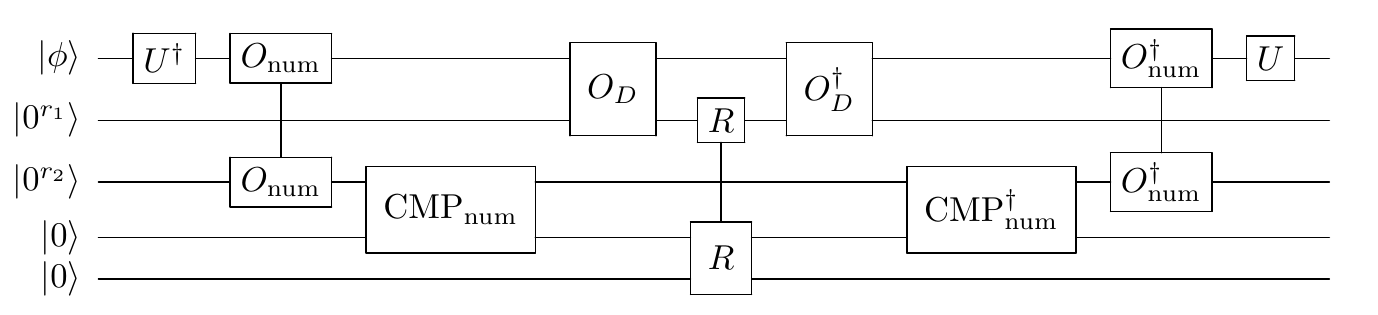}
\caption{Circuit for performing a block encoding projector on a fixed particle number manifold of states.\label{fig:circProj}}
\end{figure}
The components work as follows:
\[
\mathrm{\mathrm{CMP_{num}}}\ket{N_e(k)}\ket{c} = 
\begin{cases}
\ket{N_e(k)}\ket{c\oplus 1},\  &\mathrm{if}\  N_e(k)=N_e, \\
\ket{N_e(k)}\ket{c},\  &\mathrm{if}\  N_e(k)\neq N_e. \\
\end{cases}
\]
and
\[
R\ket{D_{kk}}\ket{c}\ket{0}=
\begin{cases}
\ket{D_{kk}}\ket{c}\left(\frac{D_{kk}}{\alpha(N_e)}\ket{0}+\sqrt{1-\left(\frac{D_{kk}}{\alpha(N_e)}\right)^2}\ket{1}\right),\  &\mathrm{if}\  c=0, \\
\ket{D_{kk}}\ket{c}\ket{1},\  &\mathrm{if}\  c=1. \\
\end{cases}
\]
We only need to make sure $\alpha(N_e)\geq D_{kk}$ for all $k$ such that $N_e(k)=N_e$. Therefore in the case of the kinetic operator that appears in the Hubbard model, the this gives a block-encoding of $H_0 P_{N_e}$ with subnormalization factor $\alpha(N_e)$.

Next we will assume that $[\hat{U},P_{N_e}]=0$.  While this assumption is not strictly needed to simplify the Hamiltonian to accommodate the particle number constraint, the algorithmic design will be easier.  As an example, consider the fermionic Fourier transform
\begin{align}
    \left[{\rm FFFT}, \sum_j n_j\right] &= {\rm FFFT} \sum_j n_j -  \sum_j n_j{\rm FFFT} = {\rm FFFT} \left(\sum_j n_j -  \sum_j c_j^{\dagger} c_j\right).\nonumber\\
    & = \sum_k {\rm FFFT} \left(\sum_{k} k P_k - \sum_k k P_k \right)=0.
\end{align}
Therefore we have that $[{\rm FFFT}, P_{N_e}] =0$ and in turn this holds for the Hubbard model, plane-wave dual simulations and the Schwinger model.  In all these cases it follows that
\begin{equation}
    H'=\hat{U} (P_{N_e}  D) \hat{U}^\dagger + AP_{N_e}.
\end{equation}
Thus in these cases we can zero out any eigenvalues of the orthogonally diagonalizable operators that are outside of the constraint specified by $P_{N_e}$.

For the case of Green's function calculation for the Hubbard model, $\hat{U}={\rm FFFT}$ and we can implement a block-encoding for $H_0P_{N_e}$ on the fixed particle number block  using the construction in Figure~\ref{fig:circProj}.   In particular, from~\eqref{eq:HubbardFFFT} it is clear that if we project the state onto the manifold consisting of ${N_e}$ electrons we obtain using the fact that ${\rm FFFT}$ commutes with the projector onto the manifold with ${N_e}$ electrons (denoted $P_{N_e}$) we observe that
\begin{equation}
\sum_{\vx, \vy,\sigma} T(\vx-\vy) \hat{a}_{\vx \sigma}^{\dagger} \hat{a}_{\vy \sigma}P_{N_e}=\mathrm{FFFT} \sum_{\vG,\sigma}\hat{T}(\vG) \hat{c}^{\dag}_{\vG,\sigma}\hat{c}_{\vG,\sigma} P_{N_e}\mathrm{FFFT}^\dagger
\end{equation}
We therefore have that the normalization factor for the block-encoding of the kinetic operator constrained to this subspace obeys
\begin{equation}
    \alpha_T\le \frac{{N_e}}{2}\max_{\vG}|\hat{T}(\vG)| = \max_{k_x,k_y} \left| \frac{{N_e}}{2N}\sum_{x,y} T(x,y) e^{2\pi i (k_x x +k_y y)/\sqrt{N}} \right| \le {N_e} |t|,
\end{equation}
The argument for $\alpha_U$ is exactly the same except we do not need to worry about the diagonalizing ${\rm FFFT}$ operation.  This means that we can directly apply the reasoning in~\eqref{eq:alphaU} to find
\begin{equation}
    \alpha_U \le N_e|U|. 
\end{equation}
Thus by choosing our preconditioner appropriately, we can achieve $\alpha_B \in \Or(N_{e}\min(|t|,|U|))$.  The reasoning for computing Green's functions for the Coulomb interaction in the planewave dual basis and the Schwinger model follows identically.

As a final note, this block encoding technique is not just specific to Green's function evaluation.  Hamiltonian simulation and other related tasks can also be improved in the continuum limit by using this strategy.
\end{document}